\newcommand{\tmath}[1]{\texorpdfstring{\(#1\)}{math}}
\newcommand{\indicator}[1]{\mathbb{I}\left\{#1\right\}}
\DeclareMathOperator{\Prob}{P}
\DeclareMathOperator{\Ep}{\mathrm{E}}
\DeclareMathOperator{\Var}{Var}
\DeclareMathOperator{\Cov}{Cov}
\newcommand{\dd}{\,\mathrm{d}}
\renewcommand{\epsilon}{\varepsilon}
\renewcommand{\emptyset}{\varnothing}
\newcommand{\pqty}[1]{\left(#1\right)}
\newcommand{\bqty}[1]{\left[#1\right]}
\newcommand{\cqty}[1]{\left\{#1\right\}}
\newcommand{\abs}[1]{\left\lvert#1\right\rvert}
\newcommand{\norm}[1]{\left\lVert#1\right\rVert}
\newcommand{\mathtext}[1]{\text{ #1 }}
\newcommand{\nb}[2]{\mathcal{N}_{#1}^{#2}}
\newcommand{\rcd}[1]{\lambda^{\mathcal{H}}_{\omega}\pqty{#1}}
\newcommand{\gga}{\Gamma^{i,j}}
\newcommand{\dist}{\mathbf{d}}
\newcommand{\lst}{\mathbf{i}}
\begin{document}

\title{Normal Approximation for U-Statistics with \\ Cross-Sectional Dependence}
\author{Weiguang Liu\thanks{Author email: weiguang.liu@ucl.ac.uk. The author thanks conference and seminar participants at the 2023 European Meeting of the Econometric Society, University of Cambridge, Yale University, Xiamen University, the Chinese University of Hong Kong (Shenzhen), and the Hong Kong University of Science and Technology for their valuable comments and feedback. We thank the generous funding from the UK Research and Innovation (UKRI) under the UK government's Horizon Europe funding guarantee (Grant Ref: EP/X02931X/1).}\\\small{University College London}}
\date{\today}

\maketitle
\begin{abstract}
    We establish normal approximation in the Wasserstein metric for both non-degenerate and degenerate second-order U-statistics under cross-sectional dependence using Stein's method. For the non-degenerate case, our results extend recent studies on the asymptotic properties of sums of cross-sectionally dependent random variables. The degenerate case is more challenging due to the additional dependence induced by the nonlinearity of the U-statistic kernel. Through a specific implementation of Stein's method, we derive convergence rates under conditions on the mixing rate, the sparsity of the cross-sectional dependence structure, and the moments of the U-statistic kernel. Finally, we demonstrate the application of our theoretical results with a nonparametric specification test for data with cross-sectional dependence. (\textsc{Jel}: C12, C14, C21, C31)\\ \\
    \textbf{\textsc{Keywords}}: U-Statistics, Cross-Sectional Dependence, Stein's Method, Network.
\end{abstract}
\onehalfspacing


\section{Introduction}\subsection{U-statistics}
Cross-sectional dependence is pervasive in economic data. Unlike time series data, where observations follow a natural temporal ordering, cross-sectional data consist of dependent observations without an inherent ordering, as individuals may be interconnected in complex ways.
In this paper, we study the asymptotic normal approximation for the distribution of an important class of statistics, known as second-order U-statistics, based on a cross-sectionally dependent sample \(\cqty{X_{i}: i \in \mathcal{I}_{n}}\). A second-order U-statistic can be written as
\begin{equation}
    S_{n} = \sum_{\{i,k \in \mathcal{I}_n : i \neq k\}} H_{n}\pqty{X_{i},X_{k}}.
\end{equation}
In the definition, \(H_{n}:\mathbb{R}^{d}\times \mathbb{R}^{d} \to \mathbb{R}\) is a symmetric measurable function known as the \textit{kernel} of the U-statistic \(S_{n}\).
With an independent and identically distributed (i.i.d.) sample and a fixed permutation symmetric kernel \(H\), this leads to an unbiased estimator of \(\theta = \Ep H(X_{1},X_{2})\) after normalisation by the inverse of \(n(n-1)\). Therefore U-statistics can be thought of as a generalisation of sample averages and many important statistics are U-statistics or can be approximated by U-statistics \citep{serfling1980ApproximationTheorems}.
Examples include the sample variance estimators, Kendall's \(\tau\), and Cramer-von-Mises statistics \citep{vandervaart2000AsymptoticStatistics}. Another motivating example is the nonparametric specification tests as in  \cite{fan1996ConsistentModel} and \cite{li2007NonparametricEconometrics} for regression models, which will be studied in more details in Section \ref{s1:nonpar test}.

Although central limit theorems have been developed for U-statistics with independent or time series samples, existing theories do not apply under cross-sectional dependence and previous methods of proof are no longer applicable. This paper seeks to fill the gap and show conditions under which a normalized U-statistic \(S_n\) based on cross-sectionally dependent data converges in distribution to a normal random variable.


The difficulty of extending existing results on U-statistics to allow for cross-sectional dependence lies in the need to handle both the dependence induced by the nonlinear structure of U-statistics and the cross-sectional dependence in the underlying sample, where the lack of an ordering prohibits the use of classical methods such as the martingale method. In the following, we discuss these challenges and how we approach them.

\subsection{Nonlinearity and degeneracy}
The structure of U-statistics induces dependence among the summands due to the nonlinearity in the kernel \(H_n\).
The classical approach to handle this challenge is to apply the famous \textit{Hoeffding decomposition}. Let \(\hat{H}_{k}(x) = \Ep H(X_{k}, x)\) and \(\theta_{ik} = \Ep \hat{H}_{k}(X_{i}) = \Ep \hat{H}_{i}(X_{k})\), we can write \(S_{n}\) as
\begin{equation}\label{eq:hoeffding}
    S_{n} = \sum_{i}\sum_{k\neq i} \theta_{ik} + \hat{S}_{n} + {S}^{*}_{n},
\end{equation}
where \(\hat{S}_{n} = 2 \sum_{i}\sum_{k\neq i} (\hat{H}_{k}(X_{i}) - \theta_{ik})\), and \({S}^{*}_{n} = \sum_i \sum_{k\neq i} \pqty{H(X_{i}, X_{k}) - \hat{H}_{i}(X_{k}) - \hat{H}_{k}(X_{i}) + \theta_{ik}}.\)
In the decomposition, \(\hat{S}_{n}\) is referred to as the projection part and \(S^{*}_{n}\) the degenerate part of the U-statistic \(S_{n}\). We say that a U-statistic is itself degenerate if the first-order projection terms vanish, that is, \(\hat{H}_{k}(X_{i}) - \theta_{ik} = 0\) almost surely for all distinct \(i,k \in \mathcal{I}_{n}\).

When the U-statistic of interest is non-degenerate, the Hoeffding decomposition reduces it to a sum of random variables with an error term \(S^{*}_{n}\). It is then straightforward to obtain central limit theorems with Slutsky's Lemma by demonstrating that, under suitable normalisation, the projection part \(\hat{S}_{n}\) converges in law to a normal random variable, while the degenerate part \(S^{*}_{n}\) is asymptotically negligible.

However, many useful statistics are degenerate in nature, such as those appearing in nonparametric specification tests (Section \ref{s1:nonpar test}). For degenerate U-statistics, the analysis is more involved, since the linear approximation based on projection is now unavailable. In general the limiting distributions for degenerate U-statistics are Gaussian Chaos \citep{vandervaart2000AsymptoticStatistics} and, in the case of U-statistics of order \(2\), it is a mixture of \(\chi^{2}\) random variables with weights depending on the spectrum of the kernel function.

It is of interest to develop conditions under which asymptotic normality can be recovered for a degenerate U-statistics.
One of the classic results due to \cite{hall1984CentralLimit} finds the following sufficient conditions for CLT of degenerate U-statistics with a random sample of independent and identically distributed random variables \(\cqty{X_{i}: 1 \leq i \leq n}\). If the kernel function \(H_n\) has bounded second moment for each $n$, and
\begin{equation}\label{eq:hall's condition}
    \frac{{\Ep \Gamma_n^{2}(X_{1}, X_{2}) + \frac{1}{n} \Ep H_n^{4}(X_{1}, X_{2})}}{\pqty{\Ep H_n^{2}(X_{1}, X_{2})}^{2} }\to 0,
\end{equation}
where \(\Gamma_n(x, y) = \Ep\bqty{H_n(X_{1}, x) H_n(X_{1}, y)}\), then a degenerate U-statistics \(S_{n}\) with symmetric and centered kernel \(H_n\) is asymptotically normally distributed, with mean $0$ and variance $2n^2 \Ep H_n^2(X_1,X_2)$.
These conditions will be a benchmark for our results on the degenerate U-statistics.

It is worth noting that \cite{hall1984CentralLimit}'s proof relies on martingale representation of \(S_{n}\).
Generalisations to time series settings, such as \cite{fan1999CentralLimit}, also rely on the ordering of the observations which are not available in the cross-sectional dependence settings.
We now discuss how we model the cross-sectional dependence and how to handle this additional complication.

\subsection{Cross-sectional dependence}

Cross-sectional dependence in the underlying sample \(\cqty{X_{i}: i\in \mathcal{I}_{n}}\) is a common phenomenon in economic data.
There has been a growing interest in modelling cross-sectional dependence as well as in the estimation and inference under cross-sectional dependence.
\cite{bolthausen1982CentralLimit} studies central limit theorems for stationary and mixing random fields, where the random variables are indexed by their location on the lattices \(\mathbb{Z}^{d}\).
\cite{anselin1988SpatialEconometrics} provides review on the spatial statistics.
\cite{conley1999GMMEstimation} further studies GMM estimation and inference when there exists an economic distance among the observations, which is allowed to be imperfectly measured.
\cite{jenish2009CentralLimit} and \cite{jenish2012SpatialProcesses} extend the previous results to allow for more general dependence structure, nonstationarity and potentially unbounded moments in the random fields.
There is also a long tradition of studying dependency graph in the literature of Stein's methods, see for example, \cite{chen1986RateConvergence}, \cite{baldi1989NormalApproximations}, \cite{rinott1996MultivariateCLT} and further extensions in \cite{chen2004NormalApproximation}.
Recent extensions allow for observations located on a stochastic network, for example \cite{kojevnikov2020LimitTheorems} considers estimation and inference under network dependence where they assume there is a random network and the economic distance is determined by the length of the shortest path connecting two nodes. \cite{vainora2020NetworkDependence} defines network stationarity and studies estimation and inference.

We adopt the approach to model the cross-sectional dependence by assuming that there is a measure of economic distance \(\dist\) between the observations and generalise current results on sums of cross-sectionally dependent samples to U-statistics. We will consider \textit{weak} cross-sectional dependence in the sense that two groups of observations will be approximately independent if their economic distance is large, which is closely related to the settings in \cite{jenish2009CentralLimit} which assumes the observations form random fields, and \cite{kojevnikov2020LimitTheorems} that specifies the distance to be the distance on a network. For our results to hold, the observation of the economic distance is not necessary, provided that there exists such a distance for which cross-sectional dependence is weak and the mixing rates apply. In practice, the observation of a network can be useful to check whether the conditions hold and for feasible inference (see methods proposed in \citet{kojevnikov2020LimitTheorems} and \citet{kojevnikov2021BootstrapNetwork}).

We handle the cross-sectional dependence with Stein's method. The Stein's method begins with the seminal paper \cite{stein1972BoundError}, which estimates the error of normal approximation for a sum of random variables with a certain dependence structure. It has since become one of the state-of-the-art ways to estimate the distance between two random variables under various probability metrics.
Several instances of the Stein's method have been developed under the general idea of ``auxiliary randomization'' introduced by \cite{stein1986ApproximateComputation}, such as the exchangeable pairs approach, zero-bias and size-bias couplings, see \cite{ross2011FundamentalsSteins}. Using the approach of exchangeable pairs, \cite{rinott1997CouplingConstructions} and \cite{dobler2016QuantitativeJong} extend and provide convergence rates for the central limit theorems for U-statistics in \cite{hall1984CentralLimit} and \cite{dejong1987CentralLimit} respectively, with an independent sample \(\cqty{X_{i}: i \in \mathcal{I}}\). However, their construction of exchangeable pairs depends on the independence condition and is not straightforward to be extended to the case of cross-sectional dependence.

We base our analysis on a variant of Stein's coupling method and the decomposable method \citep{barbour1989CentralLimit, chen2010SteinCouplings}, as stated in \autoref{lemma:stein}. This approach is tailored to handle U-statistics, particularly degenerate ones, and differs from the other Stein's methods typically used for analyzing sums of dependent random variables. We believe this method is of interest in its own right.

Our results are of interest to other strands of literature related to U-statistics, and we name a few examples here. \cite{athey2019GeneralizedRandom} shows that a generalised random forest has a U-statistic representation. The estimation and inference for multiway clustered data and exchangeable arrays depend crucially on U-statistics; see, for example, \cite{chiang2021InferenceHighdimensional}, \cite{cattaneo2022UniformInference}, \cite{chiang2023UsingTwoWay}, and \cite{chiang2024StandardErrors}. In some cases, the objective function in an M-estimation takes the form of a U-statistic; see \cite{honore1997PairwiseDifference}. In panel data settings, this also opens up a new avenue for showing limiting distributions along the cross-sectional dimension. \cite{anatolyev2020LimitTheorems} proposes a method that allows for unspecified cross-sectional dependence and utilizes independence or martingale properties along the time series dimension. \cite{zaffaroni2019FactorModels} considers conditional factor model with short time series. Our results can be used in the settings when the number of time series observations is limited, due to data availability or stationarity considerations.

\subsection*{Overview}
This paper is organised in the following way. Section \ref{sec:model} sets up the model and defines the key ingredients required for the theoretical results, with some examples given in Section \ref{sec:examples}. Section \ref{sec:theory} presents the main findings where we estimate the error of normal approximation to the non-degenerate and degenerate U-statistics. The convergence rates will depend on the mixing rates, the sparsity of the cross-sectional dependence, and moments of the kernel functions.
Section \ref{s1:nonpar test} provides an application of the CLT for degenerate U-statistics on the nonparametric specification tests. Section \ref{s1:conclusion} concludes.
Proofs for the theoretical results and further extensions are collected in the Appendix.

\subsection*{Notations}
For \(n\in \mathbb{N}\), \([n]\) denotes the set  \(\cqty{1, 2,\dots, n}\). \(\mathbb{I}(\cdot)\) is the indicator function.
For a vector of indices \(\mathbf{i} = (i_{1},\dots, i_{q})\), we will write \(i\in \mathbf{i}\) if \( i\in \cqty{i_{1},\dots, i_{q}}\), and we define the subvector \(\mathbf{i}_{-i_{k}} = ({i_{1},\dots, i_{k - 1}, i_{k + 1}, \dots, i_{q}})\).
For a collection of random vectors \(\cqty{X_{i}:i\in \mathcal{I}}\), \(\sigma\pqty{X_{i}: i\in \mathcal{I}}\) denotes the \(\sigma\)-algebra generated by \(\cqty{X_{i}}\), and \(\tilde{X}_{i}\) denotes a random vector that has the same distribution as \(X_{i}\).
\(\abs{\cdot}\) denotes the Euclidean norm and \(\norm{X}_{p} = \bqty{\Ep\abs{X}^{p}}^{{1}/{p}}\) denotes the \(L^{p}\)-norm of \(X\).
We will write \(\Ep(\cdot)\) for unconditional expectations and \(\Ep^{X} \pqty{Y}\) for conditional expectation \(\Ep\bqty{Y \mid X}\).
For sequences of positive numbers, \(a_{n} = O(b_{n})\) or \(a_{n}\lesssim b_{n}\) if \(a_{n} \leq C b_{n}\) for some constant \(C > 0\) and sufficiently large \(n\);
\(a_{n} =\Theta\pqty{b_{n}}\) if \(a_{n} = O(b_{n})\) and \(b_{n} = O(a_{n})\).
A sequence of random variables \(X_{n} \rightsquigarrow X\) if \(X_{n}\) converges in distribution to a random variable \(X\).
\section{Model setup}\label{sec:model}\subsection{Model for cross-sectional dependence}
Suppose we observe a sample \(\cqty{X_{i}: i\in \mathcal{I}_{n}}\), where each \(X_{i} :\pqty{\Omega, \mathcal{F}, \Prob} \to (\mathbb{R}^{d}, \mathcal{B}_{\mathbb{R}^{d}})\) is a random vector, \(\mathcal{I}_{n}\) is the index set for the entities of interest, and \(\mathcal{B}_{\mathbb{R}^{d}}\) is the Borel \(\sigma\)-algebra on \(\mathbb{R}^{d}\). We will also refer to an index \(i\in \mathcal{I}_{n}\) as a node, to subsets \(\mathcal{I}_{n}'\subset \mathcal{I}_{n}\) as groups of nodes, and we assume the sample size \(\abs{\mathcal{I}_{n}} = n\). To model cross-sectional dependence, we assume that \(\pqty{\mathcal{I}_{n}, \dist}\) is a metric space with distance \(\dist: \mathcal{I}_{n}\times \mathcal{I}_{n} \to \bqty{0, \infty}\). For example, \(\dist\) can measure physical distance, as in a spatial model, or the strength of economic links between entities. For two groups of nodes \(\mathcal{I}_{1}, \mathcal{I}_{2} \subset \mathcal{I}_{n}\), we naturally define the distance
\begin{equation*}
    \dist \pqty{\mathcal{I}_{1}, \mathcal{I}_{2}} = \inf_{i_{1}\in \mathcal{I}_{1}, i_{2}\in \mathcal{I}_{2}} \dist(i_{1}, i_{2}).
\end{equation*}

The strength of cross-sectional dependence among the observations \(\cqty{X_{i}:i\in \mathcal{I}_{n}}\) will be measured by the \(\beta\)-mixing coefficients.
The \(\beta\)-mixing coefficient between two \(\sigma\)-algebras \(\mathcal{F}_{1}, \mathcal{F}_{2}\) is defined as
\begin{equation*}
    \beta\pqty{\mathcal{F}_{1}, \mathcal{F}_{2}} = \sup \cqty{ \frac{1}{2} \sum_{i = 1}^{n_A}\sum_{j = 1}^{n_B}\abs{P(A_{i}\cap B_{j}) - P(A_{i}) P(B_{j})}},
\end{equation*}
where the supremum is taken over the set of all finite partitions \(\mathcal{A} = \cqty{A_i: 1 \leq i \leq n_A}\) and \(\mathcal{B}=\cqty{B_{j}: 1\leq j \leq n_B}\) of \(\Omega\) for some \(n_A, n_B < \infty\), with \( \mathcal{A}\subset \mathcal{F}_{1}\) and \(\mathcal{B}\subset \mathcal{F}_{2}\). It is well known that \(\beta\pqty{\mathcal{F}_{1}, \mathcal{F}_{2}} = 0\) if and only if \(\mathcal{F}_{1}\) and \(\mathcal{F}_{2}\) are independent \citep{doukhan1994Mixing}.

We will make use of the following sets, consisting of ordered pairs of groups of nodes that are at least distance \(m\) apart and satisfy restrictions on their sizes. This definition is common in the random-field literature \citep{jenish2009CentralLimit}.
\begin{equation}\label{eq: definition of P_n}
    \mathcal{P}_{n}(n_{1}, n_{2}, m)= \cqty{(\mathcal{I}_{1}, \mathcal{I}_{2}): \mathcal{I}_{1},\mathcal{I}_{2} \subset \mathcal{I}_{n}, \abs{\mathcal{I}_{1}} \leq n_{1}, \abs{\mathcal{I}_{2}} \leq n_{2}, \dist \pqty{\mathcal{I}_{1}, \mathcal{I}_{2}} \geq m}.
\end{equation}
The restrictions we impose on cross-sectional dependence are stated in the following assumptions.
\begin{assumption}\label{asmp:no-clustering-point}
    For all \(n\) and distinct \(i,j\in \mathcal{I}_{n}\), \(i\neq j\), the distance between them is bounded away from zero, \(\dist(i,j) \geq \underline{\delta} > 0\). Without loss of generality, we take \(\underline{\delta} = 1\).
\end{assumption}
\begin{assumption}\label{asmp:beta-mixing}
    The sample \(\cqty{X_{i}: i\in \mathcal{I}_{n}}\) is \textit{\(\beta\)-mixing} with coefficients \(\beta_{n}(n_{1}, n_{2}, m)\), so that
    \begin{equation}\label{eq:beta mixing}
        \sup_{(\mathcal{I}_{1}, \mathcal{I}_{2}) \in \mathcal{P}_{n}(n_{1},n_{2},m)}\beta(\sigma(\mathcal{I}_{1}), \sigma(\mathcal{I}_{2}))\leq \beta_{n}(n_{1}, n_{2}, m),
    \end{equation}
    where \(\beta(\sigma(\mathcal{I}_{1}), \sigma(\mathcal{I}_{2})) = \beta\pqty{\sigma\pqty{X_{i}: i \in \mathcal{I}_{1}}, \sigma\pqty{X_{i} : i \in \mathcal{I}_{2}}} \).
    Note that \(\beta_{n}(n_{1},n_{2},m)\) is non-decreasing in \(n_{1}\) and \(n_{2}\), and non-increasing in \(m\). Let \(\beta(n_1, n_{2}, m):= \sup_{n} \beta_{n}(n_{1},n_{2}, m)\). In our theoretical proofs, we require these \(\beta\)-mixing coefficients to approach zero sufficiently fast as \(m\) increases.
\end{assumption}

A few comments are in order.
The assumptions on weak dependence in the observations \(\cqty{X_{i}}\) are similar to standard assumptions in the random-field literature; see, for example, \cite{jenish2009CentralLimit}. The first lower-bound condition is imposed to rule out increasing concentration of observations into arbitrarily small neighborhoods as the sample grows. What is essential for our counting arguments is that local multiplicity remains controlled.
More generally, the distance here should be interpreted as a device for controlling the strength of dependence. In spatial settings, even observations at the same physical location need not be perfectly dependent, so it is reasonable to allow for a strictly positive “dependence distance” in such cases.
The second assumption extends familiar mixing conditions from time series and random fields to the more general settings we consider.

We choose to state our conditions in terms of \(\beta\)-mixing, or absolute regularity, other than some other mixing concepts considered in the literature on limit theorems for sums of random variables (for example, \cite{kojevnikov2020LimitTheorems} considers \(\psi\)-mixing) because \(\beta\)-mixing random variables have stronger coupling properties, as reflected in Berbee's Lemma (\autoref{lemma:berbee}), which is essential for handling the nonlinear U-statistics kernel. Berbee's Lemma can be extended to the case of \(\alpha\)-mixing under a Lipschitz continuity condition on the kernel, see \cite{bradley1983ApproximationTheorems} and \cite{dehling2010CentralLimit}. Therefore, our results can be readily generalised to allow for \(\alpha\)-mixing under Lipschitz continuity. Since the extension to \(\alpha\)-mixing complicates the notation without adding much insight, we maintain the assumption of \(\beta\)-mixing. Our results are first stated for the case where the distance \(\dist\) is deterministic, and in Section \ref{s2: conditional berbee} we discuss extensions that allow the underlying sample \(\cqty{X_{i}:i\in \mathcal{I}}\) to be conditionally \(\beta\)-mixing by providing a conditional version of Berbee's lemma. This extension is relevant in practice.

It is well known from the literature on random fields and network dependence that the topology of the index space \(\mathcal{I}_{n}\) plays an important role in determining the strength of cross-sectional dependence.
Indeed, from the definition of \(\mathcal{P}_{n}\pqty{n_{1},n_{2},m}\) in \eqref{eq: definition of P_n} and \autoref{asmp:beta-mixing} in \eqref{eq:beta mixing}, we see that cross-sectional dependence is stronger if the mixing coefficients \(\beta(\sigma(\mathcal{I}_{1}), \sigma(\mathcal{I}_{2}))\) are larger for each \((\mathcal{I}_{1}, \mathcal{I}_{2})\), or if more nodes in \(\mathcal{I}_{n}\) lie within distance \(m\) of one another and the indices are more clustered.
We therefore need further requirements on the topology of the index space \(\mathcal{I}_{n}\) with respect to the distance \(\dist\). In particular, we define the following quantities that measure the ``sparsity'' of the index space.

The normal approximation bounds below also use counts of index vectors that fall into different dependence configurations. For a vector of \(q\) indices \(\mathbf{i}=(i_1,\ldots,i_q)\) and \(m<\infty\), form a graph on the indices \(\{i_{1},\ldots,i_{q}\}\) by connecting positions \(i_{r}\) and \(i_{s}\) whenever \(d(i_r,i_s)\leq m\). We say that \(\mathbf{i}\) is \(m\)-connected if this graph is connected, and denote by \(\tau_q^m\) the number of vectors of \(q\) indices that are \(m\)-connected.
More generally, \(\tau_{q_1,\ldots,q_S}^m\) denotes the number of vectors of \(q\) indices whose \(m\)-connected components have sizes \(q_1\geq \cdots \geq q_S\), where \(\sum_{s=1}^S q_s=q\). For example, \(\tau_{2,2}^m\) counts vectors \((i_1,i_2,i_3,i_4)\) whose four positions can be partitioned into two \(m\)-connected components that are separated from each other by more than \(m\). And \(\tau_{1,1,1,1}^m\) counts vectors \((i_1,i_2,i_3,i_{4})\) where any two of the four indices have distance greater than \(m\). The full definition based on equivalent classes is given in the Appendix.

The distribution of \(\tau^{m}_{q_{1},\dots,q_{S}}/n^{q}\) over different profiles \((q_{1},\dots,q_{S})\) provides important information about the sparsity of the index space \(\mathcal{I}_{n}\).
Consider the case \(q = 2\) with fixed \(m>0\). Then \(\tau_{2}^{m}\) is the number of ordered pairs \((i,j)\in \mathcal{I}_{n}^{2}\) that lie in one another's \(m\)-neighbourhoods, \(i\stackrel{m}{\leftrightarrow} j\), while \(\tau_{1,1}^{m}\) is the number of pairs \((i,j)\) such that \(\dist(i,j) > m\).
When the index space is sparse, the proportion of pairs of nodes that are more than \(m\) units apart, \({\tau_{1,1}^{m}}/{n^{2}}\), is expected to be higher. When the index space is less sparse, the relative proportion \(\tau_{2}^{m} / n^{2}\) increases.

It is easy to compute the profile \(\pi_{m}(\mathbf{i})\) of a given vector of indices \(\mathbf{i}\) and each \(\tau^{m}_{q_{1},\dots,q_{S}}\) when the distance is observed.
Furthermore, a simple bound on the sparsity measure \(\tau^{m}_{q_{1},\dots, q_{S}}\) can be obtained if we have a bound on the growth rate of the maximum neighbourhood sizes. Let \(\eta_{m,i} = |\nb{i}{m}|\) be the size of the \(m\)-neighbourhood of \(i\) for each \(i \in \mathcal{I}_{n}\), and let \(\eta_{m} = \max_{i} \eta_{m,i}\), which measures the maximum size of \(m\)-neighbourhoods. Then we have the following lemma, which is useful in many applications.
\begin{lemma}\label{lemma:sparsity-bound}
    We have the following upper bound, for \(\sum_{s = 1}^{S} q_{s} = q\), where \(q_{1}\geq q_{2}\geq \dots \geq q_{S} \in \mathbb{N}^{+}\),
    \begin{equation*}
        \tau_{q_{1},\dots, q_{S}}^{m} \leq C n^{S} \eta_{m}^{q - S},
    \end{equation*}
    where \(C\) is a constant that only depends on \((q_{1},\dots, q_{S})\).
\end{lemma}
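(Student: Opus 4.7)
The plan is to bound $\tau^{m}_{q_{1},\dots,q_{S}}$ combinatorially by constructing an arbitrary vector $\mathbf{i} \in T^{m}_{q_{1},\dots,q_{S}}$ in two stages: first fix which of the $q$ positions lie in which equivalence class, then fill in the index values. Choosing an ordered partition of the $q$ positions into blocks of sizes $q_{1},\dots,q_{S}$ contributes a multinomial factor $\binom{q}{q_{1},\dots,q_{S}}$, which is a constant depending only on $(q_{1},\dots,q_{S})$. So the task reduces to counting, for each class, the ordered $q_{s}$-tuples of indices whose distinct values form a single $m$-connected component.

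For a fixed class of size $q_{s}$, I would bound the number of admissible ordered tuples $(j_{1},\dots,j_{q_{s}})$ by a spanning-tree/BFS argument on the $m$-neighbourhood graph (vertices $\mathcal{I}_{n}$, edge $i \stackrel{m}{\leftrightarrow} j$). The underlying set of distinct values must be connected in this graph, hence admits a spanning tree; choosing a root and BFS-ordering its vertices yields a permutation $\sigma$ of $\{1,\dots,q_{s}\}$ such that each successive entry $j_{\sigma(\ell)}$ is either a repeat of an earlier entry or adjacent to one. There are at most $q_{s}!$ such permutations, so the count of admissible tuples is at most $q_{s}!$ times the count of tuples satisfying the sequential property in the natural order. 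The latter is bounded by picking $j_{1}$ in at most $n$ ways and, for $\ell \geq 2$, choosing $j_{\ell}$ either as one of the $\ell-1$ earlier values or as a neighbour (within $m$) of one of them, giving at most $(\ell-1)(1+\eta_{m}) \leq 2 q_{s}\eta_{m}$ options (using $\eta_{m} \geq 1$ since $i \in \nb{i}{m}$). Multiplying over $\ell = 2,\dots,q_{s}$ yields a bound of the form $C(q_{s}) \cdot n \cdot \eta_{m}^{q_{s}-1}$.

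Combining the per-class bounds gives $\prod_{s=1}^{S} C(q_{s}) \cdot n \cdot \eta_{m}^{q_{s}-1} = C'(q_{1},\dots,q_{S}) \cdot n^{S} \cdot \eta_{m}^{\sum_{s}(q_{s}-1)} = C'(q_{1},\dots,q_{S}) \cdot n^{S} \cdot \eta_{m}^{q-S}$, and multiplying by the multinomial factor from the position-assignment step absorbs into the profile-dependent constant, producing the claimed inequality $\tau^{m}_{q_{1},\dots,q_{S}} \leq C n^{S} \eta_{m}^{q-S}$.

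The main subtlety is justifying the spanning-tree step, i.e., arguing that any $m$-connected multiset admits a permutation realising the sequential adjacency property, and that the associated overcounting by $q_{s}!$ is legitimate. A second minor point is that the construction may overcount by producing tuples with strictly coarser profiles (with classes merged), but since we only need an upper bound, such overcounting is harmless. Everything else is routine bookkeeping of combinatorial constants that depend only on $(q_{1},\dots,q_{S})$.
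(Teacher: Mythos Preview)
The paper does not supply a proof of this lemma; it is stated in the main text as a ``simple lemma'' and then used freely in the appendix. Your argument is correct and fills this gap cleanly. The two-stage decomposition---first a multinomial factor for assigning the $q$ positions to $S$ blocks of sizes $q_{1},\dots,q_{S}$, then a per-block count via a spanning-tree/BFS ordering of the $m$-neighbourhood graph---is the natural approach, and you have identified the genuine subtleties accurately: (i) the existence of a permutation with the sequential adjacency property follows from connectedness of the distinct values, with the $q_{s}!$ factor absorbing the choice of that permutation; (ii) ignoring the inter-class separation constraint only overcounts into coarser profiles, which is harmless for an upper bound.

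One minor simplification: since $i\in\nb{i}{m}$, the ``repeat or $m$-neighbour of an earlier entry'' alternative collapses to ``lies in $\bigcup_{\ell'<\ell}\nb{j_{\ell'}}{m}$,'' so the per-step bound can be taken as $(\ell-1)\eta_{m}$ directly rather than $(\ell-1)(1+\eta_{m})$. Your looser constant is of course still valid and does not affect the conclusion.
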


Our main results provide bounds on the Wasserstein distance between a standard normal random variable and a suitably normalised U-statistic \(S_{n}\) constructed from a weakly cross-sectionally dependent collection of observations \(\cqty{X_{i}:i\in \mathcal{I}_{n}}\).
We provide conditions under which the Wasserstein distance converges to \(0\) as the sample size increases.
Since convergence in Wasserstein distance implies convergence in distribution, the central limit theorems follow from these general results.
The Wasserstein distance between two random variables \(V_{1}\) and \(V_{2}\) is defined by
\begin{equation}
    \dist_{W}\pqty{V_{1},V_{2}} := \sup_{g\in \mathbb{L}_{1}} \abs{\Ep g(V_{1}) - \Ep g(V_{2})}.
\end{equation}
where \(\mathbb{L}_{1} =  \{g: \mathbb{R} \to \mathbb{R} : |g(x) - g(y)| \leq |x - y|\}\) is the class of \(1\)-Lipschitz continuous functions.
\subsection{Examples}\label{sec:examples}

We discuss some examples of cross-sectional dependence structures that can be incorporated into our framework and how their sparsity can be measured.


\begin{example}[Time series and random fields]
    In the study of random fields, \(\mathcal{I}_{n}\) is a subset of the lattice \(\mathbb{Z}^{d}\), which includes time series models as a special case when \(d = 1\). We can define the distance as \(\dist(i,j) = \max_{1\leq k \leq d} \abs{i_{k} - j_{k}}\); see \cite{bolthausen1982CentralLimit}. The requirement that \(\mathcal{I}_{n}\) be a subset of \(\mathbb{Z}^{d}\) can be relaxed by considering irregular lattices. If \(\mathcal{I}_{n}\subset \mathbb{R}^{d}\) and \autoref{asmp:no-clustering-point} holds, then \(\eta_{m} = O(m^{d})\).
\end{example}

\begin{example}[Simple networks]
    For an unweighted, undirected network \(\mathcal{G}\) defined over the nodes in \(\mathcal{I}_{n}\), where \(\mathcal{G}\pqty{i,j} = 1\) if \(i\) and \(j\) are directly connected, we can define the distance \(\dist(i,j)\) as the minimum length \(\abs{\mathcal{T}} -1\), where \(\mathcal{T} = \cqty{i_{0}, i_{1}, \dots, i_{l}}\) is a path connecting \(i\) and \(j\) in \(\mathcal{G}\), that is, \(i_{0}= i\), \(i_{l} = j\), and \(\mathcal{G}(i_{k}, i_{k+1}) =1\) for all \(k= 0, 1,\dots, l-1\). We set \(\dist(i,j) = \infty\) if there exists no path connecting \(i\) and \(j\). Time series dependence can be cast in this framework by considering a linear network \(\mathcal{G}(i,j) = 1\) if \(\abs{i - j} = 1\). In general, \(\tau^{m}\) depends on the specific network, but \autoref{lemma:sparsity-bound} can be used to obtain a bound when the neighbourhood sizes are bounded. If \(\beta_n(n_{1}, n_{2}, m) = 0\) for all \(m > 1\), then \(\mathcal{G}\) defines a dependency graph \citep{baldi1989NormalApproximations}, and \(X_{i}\) and \(X_{j}\) are independent whenever they are not connected.
\end{example}
\begin{example}[Clustering structure]
    Consider the two-way clustering structure, which can be generalised to multiway clustering \citep{menzel2021BootstrapClusterDependence}. Let \(C_{1}\) and \(C_{2}\) denote the numbers of clusters along the first and second dimensions. Each \(\mathbf{j} \in \cqty{(j_{1},j_{2}): j_{1} \leq C_{1}, j_{2} \leq C_{2}}\) represents a cell in the clustering structure. Let \(N_{\mathbf{j}}\) denote the number of observations in cell \(\mathbf{j}\), and let \(\cqty{X_{l, \mathbf{j}}, l = 1,\dots, N_{\mathbf{j}}}\) denote the observations in cell \(\mathbf{j}\). It is typically assumed that \(X_{l_{1},\mathbf{j}}\) is independent of \(X_{l_{2}, \mathbf{k}}\) if \(j_{1}\neq k_{1}\) and \(j_{2}\neq k_{2}\) for all \(l_{1}, l_{2}\).
    If we pool the observations from different clusters using a bijective mapping \(\mathcal{C}: [n] \to (l, \mathbf{j})\), where \(n\) is the total number of observations, so that \(X_{i} = X_{\mathcal{C}(i)}\), then \(\dist(i,k) = 2\) if \(X_{\mathcal{C}(i)}\) is independent of \(X_{\mathcal{C}(k)}\), and \(\dist(i,k)= 1\) otherwise. We can bound the neighbourhood sizes for \(m=1\) by \(\eta_{m,i} = N_{(j_{1},j_{2})} + \sum_{k\neq j_{2}}^{C_{2}} N_{(j_{1}, k)} + \sum_{k\neq j_{1}}^{C_{1}} N_{(k,j_{2})}\). If further \(\bar{N} = \max_{\mathbf{j}}N_{\mathbf{j}}\), then \(\eta_{m} \leq \pqty{C_{1} + C_{2} - 1} \bar{N}\).
\end{example}
\begin{example}[Random network formation]
    We can also consider the setup in \cite{kojevnikov2020LimitTheorems}, where the network is randomly formed according to a network formation model,
    \begin{equation*}
        \mathcal{G}(i,j) = \indicator{\varphi_{ij} \geq \epsilon_{ij}},
    \end{equation*}
    where \(\varphi_{ij}\) is a random variable that may depend on the observed and unobserved characteristics of \(i\) and \(j\), and \(\epsilon_{ij}\) is independent of \(\varphi_{ij}\). \cite{kojevnikov2020LimitTheorems} measures the sparsity of the network with the conditional maximal expected neighbourhood size
    \begin{equation*}
        \pi_{n} = \max_{i} \sum_{j\neq i} \Prob\bqty{\varphi_{ij}\geq \epsilon_{ij} \mid \varphi_{ij}, i,j\in \mathcal{I}_{n}},
    \end{equation*}
    which replaces the deterministic maximal neighbourhood size \(\eta_{m}\) in the case when the network is random.
    This type of random network formation is relevant in empirical applications. In Section \ref{s2: conditional berbee}, we discuss how our results can be extended to allow for conditioning on the random generation of the network.
\end{example}

\subsection{U-statistics kernel}

In addition to assumptions on the mixing rate and the sparsity of the index space, we also require moment restrictions on the kernel function $H_{n}$. For simplicity, we omit the subscript $n$ when the dependence of $H_{n}$ on the sample size is clear.
Let \(\tilde{X}_{j}\) be an independent copy of \(X_{j}\). We assume the following regularity conditions on the kernel function.
\begin{assumption}\label{asmp:kernel}
    The kernel function \(H(x,y)\) is symmetric \(H(x,y) = H(y,x)\) and for all \(i\neq j \in \mathcal{I}\) and \(n\), \(\Ep H_{n}^{2}(X_{i}, \tilde{X}_{j}) \leq C < \infty\).
\end{assumption}
The requirement of a symmetric kernel can be relaxed by considering \(\frac{1}{2}(H(x,y) + H(y,x))\). The square-integrability assumption ensures that the Hoeffding decomposition is well defined in the non-degenerate case and is important for the CLT for degenerate U-statistics, as in \cite{hall1984CentralLimit}.
We define the following quantities, analogous to Hall's condition \eqref{eq:hall's condition}, with $H_{ij} := H_{n}(X_{i}, X_{j})$ and \(H_{i \tilde{j}} = H(X_{i}, \tilde{X}_{j})\), for each \(i\neq j\),
\begin{equation*}
    \mathbf{H}_{p} =\sup_{i,j \in \mathcal{I}, i \neq j} \cqty{\norm{H_{ij}}_{p}, \norm{H_{i\tilde{j}}}_{p}},\quad\Gamma^{i,j}(x,y) := \Ep H(X_{i}, x)H(X_{j},y),
\end{equation*}
and let \(\tilde{X}_{k_{2}}\) be independent of \(X_{k_{1}}\) and identically distributed as \(X_{k_{2}}\),
\begin{equation*}
    \mathbf{\Gamma}_{m,p} := \sup\cqty{ \norm{{\Gamma^{i,j}}(X_{k_{1}}, \tilde{X}_{k_{2}})}_{p}: i,j,k_{1},k_{2}\in \mathcal{I}_{n}, \, j\in \nb{i}{m}, \text{ and } \dist\pqty{k_{1},k_{2}} > m}.
\end{equation*}
Our bound on the normal approximation error will be a consequence of the interplay between sparsity, mixing rates, and moments.
\section{Theoretical results}\label{sec:theory}\subsection{Non-degenerate case}
For non-degenerate U-statistics with cross-sectionally dependent underlying processes, we use the following Hoeffding decomposition.
Let \(\hat{H}_{k}(x) = \Ep H(X_{k}, x)\) and \(\theta_{ik} = \Ep \hat{H}_{k}(X_{i}) = \Ep \hat{H}_{i}(X_{k})\), for \(x\) in the support of \(\cqty{X_{i} : i\in \mathcal{I}}\). Then
\begin{equation}\label{eq:hoeffding-depdendent}
    S_{n} = \sum_i \sum_{k \neq i} \theta_{ik} + \hat{S}_{n} + {S}^{*}_{n},
\end{equation}
where, for \(h_{i} = (n-1)^{-1} \sum_{k\neq i} (\hat{H}_{k}(X_{i}) - \theta_{ik})\),
\begin{equation*}
    \hat{S}_{n} = 2 (n-1) \sum_{i} h_{i} \quad \text{and} \quad {S}^{*}_{n} = \sum_i \sum_{k\neq i} \pqty{H(X_{i}, X_{k}) - \hat{H}_{i}(X_{k}) - \hat{H}_{k}(X_{i}) + \theta_{ik}}.
\end{equation*}

It is now straightforward to obtain the following Law of Large Numbers for non-degenerate U-statistics. It generalises existing results for independent observations; see \cite{serfling1980ApproximationTheorems} and Lemma 3.1 in \cite{powell1989SemiparametricEstimation}.
\begin{theorem}\label{thm:lln for non-degenerate}
    If there exists a sequence of positive numbers \(m = m(n) \to \infty\) and for some \(\delta> 0\), \(\tau_{2}^{m} \mathbf{H}_{2}^{2} = o(n^{2})\) and \(\mathbf{H}_{2+\delta}^{2}\beta(1,3,m)^{\delta / (2+\delta)} = o(1)\), then
    \begin{equation*}
        \frac{1}{n(n-1)} S_{n} = \frac{1}{n(n-1)} \sum_i \sum_{k\neq i} \theta_{ik} + o_{p}(1).
    \end{equation*}
\end{theorem}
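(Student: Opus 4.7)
The plan is to apply the Hoeffding-type decomposition \eqref{eq:hoeffding-depdendent},
\[
S_n - \sum_{i\neq k}\theta_{ik} = \hat S_n + S_n^*,
\]
and to show that each of $\hat S_n$ and $S_n^*$ is $o_p(n^2)$ separately. Since $\Ep \hat H_k(X_i) = \theta_{ik}$, the projection part $\hat S_n$ has exactly mean zero, while $S_n^*$ is mean zero up to a $\beta$-mixing bias between $X_i$ and $X_k$. Chebyshev's inequality then reduces the statement to bounding $\Ep \hat S_n^2$ and $\Ep (S_n^*)^2$ by $o(n^4)$.

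For the projection part, I rewrite $\hat S_n = 2\sum_{i} Z_i$ with $Z_i = \sum_{k\neq i}(\hat H_k(X_i) - \theta_{ik})$, which is a mean-zero functional of $X_i$ alone. Jensen applied to the marginal expectation defining $\hat H_k$ gives $\norm{\hat H_k(X_i)}_p \leq \mathbf{H}_p$, so $\norm{Z_i}_p \leq 2(n-1)\mathbf{H}_p$ for $p \in \{2, 2+\delta\}$. Expanding $\Ep\hat S_n^2 = 4\sum_{i,j}\Cov(Z_i, Z_j)$, I split by whether $\dist(i,j)\leq m$ or $\dist(i,j)>m$. For the close pairs (including the diagonal), Cauchy--Schwarz bounds each covariance by $4(n-1)^2\mathbf{H}_2^2$; since their number equals $\tau_2^m$, the contribution is at most $Cn^2\tau_2^m\mathbf{H}_2^2 = o(n^4)$ by the first assumption. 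For the far pairs, since each $Z_i$ depends on a single coordinate, Davydov's $\beta$-mixing covariance inequality gives $\abs{\Cov(Z_i,Z_j)} \lesssim \beta(1,1,m)^{\delta/(2+\delta)}(n-1)^2\mathbf{H}_{2+\delta}^2$, and summing over at most $n^2$ pairs together with $\beta(1,1,m)\leq\beta(1,3,m)$ yields $o(n^4)$ by the second assumption. Hence $\hat S_n/n^2 \to_p 0$.

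For the remainder $S_n^* = \sum_{i\neq k}R_{ik}$ with $R_{ik} = H_{ik} - \hat H_i(X_k) - \hat H_k(X_i) + \theta_{ik}$, the key structural property is approximate pairwise degeneracy: if $X_i$ and $X_k$ were independent, $\Ep[R_{ik}\mid X_i] = 0$ exactly, and otherwise this identity holds only up to a $\beta$-mixing residual. I expand $\Ep(S_n^*)^2$ as a sum over 4-tuples $\mathbf{i}=(i,k,i',k')$ and partition by the $m$-profile $\pi_m(\mathbf{i})$. For clustered profiles $(4)$, $(3,1)$, and $(2,2)$, Cauchy--Schwarz with $\norm{R_{ik}}_2 \leq 4\mathbf{H}_2$ combined with the sparsity counts $\tau^m_{q_1,\ldots,q_S}$ from \autoref{lemma:sparsity-bound} suffices, since the number of such tuples is controlled by powers of $\eta_m$ and $\tau_2^m$. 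For the spread-out profiles $(2,1,1)$ and $(1,1,1,1)$, I use $\beta(1,3,m)$ to decouple one index from the block of the remaining three, and then exploit the approximate degeneracy of $R_{ik}$ to further reduce the resulting residual; combined with the $(2+\delta)$-moment bound $\norm{R_{ik}}_{2+\delta}\leq 4\mathbf{H}_{2+\delta}$, each profile contributes $o(n^4)$ under the second condition.

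The main obstacle is the remainder analysis, particularly the $(1,1,1,1)$ profile, which admits $\Theta(n^4)$ tuples and therefore requires the per-tuple covariance to be $o(1)$ uniformly. This is achieved only by simultaneously invoking approximate independence across groups (via $\beta$-mixing) and approximate degeneracy of $R_{ik}$ (again via $\beta$-mixing, between $X_i$ and $X_k$), so that each covariance acquires two small $\beta$-factors against the $\mathbf{H}_{2+\delta}^2$-scale moment bound. The projection step, by contrast, is a direct adaptation of weak-dependence LLN arguments for cross-sectionally dependent scalar sums.
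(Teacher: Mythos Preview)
Your approach is essentially the same as the paper's: Hoeffding decomposition, second-moment bounds via Chebyshev, and partitioning the four-index sum for $\Ep(S_n^*)^2$ by $m$-profile, using Cauchy--Schwarz on the clustered profiles and the $\beta$-mixing/degeneracy mechanism (the content of \autoref{lemma:m-free}) on profiles with an $m$-free index. The projection step matches exactly.

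Two small points where your write-up drifts from what is actually needed. First, you appeal to \autoref{lemma:sparsity-bound} to control $\tau_4^m$, $\tau_{3,1}^m$, $\tau_{2,2}^m$, but that lemma gives bounds in powers of $\eta_m$, which is not controlled by the theorem's hypotheses. The relevant bound is the direct one the paper uses: any $4$-tuple with no $m$-free index contains at least one $m$-close pair, so $\tau_4^m+\tau_{3,1}^m+\tau_{2,2}^m\lesssim n^2\tau_2^m$, and then $n^2\tau_2^m\mathbf{H}_2^2=o(n^4)$ by assumption. (The paper in fact places $(3,1)$ in the $\beta$-bucket rather than the Cauchy--Schwarz bucket, but either placement works once you use $n^2\tau_2^m$ instead of $\eta_m$.) Second, your ``two $\beta$-factors'' remark for the $(1,1,1,1)$ profile over-complicates the argument: once you Berbee-couple the $m$-free index to a truly independent copy $\tilde X_{k^*}$, the conditional expectation $\Ep[R_{i\tilde k^*}\mid X_i]$ vanishes \emph{exactly}, so $\Ep[R_{i_1k_1}R_{i_2\tilde k_2}]=0$ and a single factor $\beta(1,3,m)^{\delta/(2+\delta)}$ against $\mathbf{H}_{2+\delta}^2$ suffices. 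This is precisely \autoref{lemma:m-free}; no second decoupling is needed.
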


Stein's method can then be applied to obtain a central limit theorem by showing that the projection part converges in distribution to a Gaussian random variable and that the remainder term is asymptotically negligible.
By constructing a specific Stein's coupling in \autoref{lemma:stein} and using Berbee's Lemma (\autoref{lemma:berbee}), we obtain the following normal approximation result.
\begin{theorem}\label{thm:non-degenerate}
    Under Assumptions \ref{asmp:no-clustering-point}, \ref{asmp:beta-mixing}, and \ref{asmp:kernel}, let \(\nu_{n}^{2} = \Var (\sum_{i} h_{i}) = \sum_{i\in I_n}\sum_{j\in I_n} \Cov(h_i,h_j)\). Then, for any \(\delta > 0\) and \(m \geq 1\), the Wasserstein distance between the normalised non-degenerate U-statistic \(W_{n} = \frac{1}{2(n-1)\nu_{n}} (S_{n} - \sum_i \sum_{k\neq i} \theta_{ik})\) and \(Z \sim N(0,1)\) satisfies
    \begin{align}
        \dist_{W}(W_{n} , Z) &\lesssim  \frac{1}{\nu_{n}n}\bqty{\pqty{\tau_{4}^{m} + \tau_{2,2}^{m}} \mathbf{H}_{2}^{2} + n^{4} \mathbf{H}_{2 + \delta}^{2} \beta(1,3,m)^{\frac{\delta}{2 + \delta}}}^{\frac{1}{2}}
        + \frac{n}{\nu_{n}} \mathbf{H}_{1 + \delta} \beta(1,\infty, m)^{\frac{\delta}{1 + \delta}} \nonumber\\
        &+ \frac{n^{2}}{\nu_{n}^{2}} \mathbf{H}_{2 + \delta}^{2} \beta(1,1,m)^{\frac{\delta}{2 + \delta}} + \frac{\tau_{2}^{m}}{\nu_{n}^{2}} \mathbf{H}_{2 + \delta}^{2} \beta(2,\infty, m)^{\frac{\delta}{2 + \delta}} + \frac{\tau_{3}^{m}}{\nu_{n}^{3}} \mathbf{H}_{3}^{3}.
    \end{align}
\end{theorem}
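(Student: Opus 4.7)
The plan is to work from the modified Hoeffding decomposition in \eqref{eq:hoeffding-depdendent}, which gives
\[ W_n = \tilde W_n + \frac{S_n^*}{2(n-1)\nu_n}, \qquad \tilde W_n := \frac{1}{\nu_n}\sum_{i\in\mathcal{I}_n} h_i, \]
with $\Var(\tilde W_n) = 1$ by the definition of $\nu_n$. Since the Wasserstein distance is a pseudometric and is dominated by the $L^1$ distance between couplings, I would split the proof by the triangle inequality into $\dist_W(W_n, Z) \le \dist_W(\tilde W_n, Z) + \Ep |S_n^*|/(2(n-1)\nu_n)$, and then handle the two pieces separately.

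For the degenerate remainder, the goal is to bound $\|S_n^*\|_2$, since $\Ep|S_n^*| \le \|S_n^*\|_2$. I would expand $\Var(S_n^*) = \sum_{i\neq k,\, j\neq l} \Cov(R_{ik},R_{jl})$ with $R_{ik}(x,y) = H(x,y) - \hat H_i(y) - \hat H_k(x) + \theta_{ik}$, and classify the index quadruples $(i,k,j,l) \in \mathcal{I}_n^4$ by their $m$-profile. The quadruples with profile containing a block of size $\ge 2$ on all terms---profiles $(4)$, $(3,1)$ restricted appropriately, $(2,2)$---are counted by $\tau_4^m + \tau_{2,2}^m$ and controlled by a direct Cauchy--Schwarz bound using $\mathbf{H}_2$. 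The remaining quadruples, where one pair is $m$-separated from the other three indices, are handled via Berbee's lemma (\autoref{lemma:berbee}): after coupling the far block with an independent copy, the degeneracy of $R_{ik}$ under independent marginals collapses the covariance to zero, leaving a mixing residual that, by H\"older with the $2+\delta$ moment, produces the $n^4 \mathbf{H}_{2+\delta}^2 \beta(1,3,m)^{\delta/(2+\delta)}$ contribution. Taking the square root and dividing by $n\nu_n$ yields the first term of the stated bound.

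For the projection part $\tilde W_n$, I would apply the Stein coupling machinery in \autoref{lemma:stein}. The natural construction is to pick an index $I$ uniformly at random in $\mathcal{I}_n$, and form $\tilde W_n'$ by replacing the observations $\{X_j : j \in \nb{I}{m}\}$ with a Berbee-coupled independent copy. This produces an exchangeable-pair-style bound decomposing into three pieces: (i) a variance-proxy term $\Ep \bigl|1 - \nu_n^{-2}\,\Ep[\,\cdot\mid \mathcal{F}]\bigr|$, which after expanding covariances $\Cov(h_i h_j, h_k h_l)$ and distinguishing according to the $m$-profile yields contributions of the form $\tau_2^m \mathbf{H}_{2+\delta}^2 \beta(2,\infty,m)^{\delta/(2+\delta)}/\nu_n^2$ (pairs $(i,j)$ close, the other pair far) and $n^2 \mathbf{H}_{2+\delta}^2 \beta(1,1,m)^{\delta/(2+\delta)}/\nu_n^2$ (all four indices pairwise separated, handled by mixing on singletons); (ii) a third-moment/Lindeberg-type term, bounded by $\tau_3^m \mathbf{H}_3^3/\nu_n^3$ after counting triples of $m$-connected indices; and (iii) a mean-correction term that arises because $\Ep h_i \ne 0$ in the dependent regime---this uses the $\beta(1,\infty,m)$ coefficient together with the $(1+\delta)$-moment of the kernel to produce $n\mathbf{H}_{1+\delta}\beta(1,\infty,m)^{\delta/(1+\delta)}/\nu_n$.

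The main obstacle will be the variance-proxy term in step (iii) above: one must expand the conditional expectation $\Ep[\,\cdot\mid \mathcal{F}]$ into a quadruple sum, classify the configurations by their $m$-profile, and then, for each nontrivial configuration, apply Berbee's lemma \emph{twice}---once to decouple the local neighbourhood of $I$ from the rest, and once more to decouple the distant factor inside the remaining expectation---while carefully tracking the moment orders so that only $\mathbf{H}_{2+\delta}$ and $\mathbf{H}_3$ appear, not higher moments. The bookkeeping between the sparsity counts $\tau^m_{\cdot}$, the mixing coefficients $\beta(\cdot,\cdot,m)$, and the powers of $\nu_n$ is the delicate part; everything else reduces to standard $L^p$ estimation and Cauchy--Schwarz.
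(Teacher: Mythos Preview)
Your overall architecture---split via the Hoeffding decomposition, bound $\|S_n^*\|_2$ by profiling quadruples, then apply Stein's method to the linear part---matches the paper exactly, and your treatment of $S_n^*$ is essentially the paper's. The divergence is in how you implement Stein's method for $\tilde W_n$.

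First, a factual slip: $\Ep h_i = 0$ always, by construction, since $h_i = (n-1)^{-1}\sum_{k\ne i}(\hat H_k(X_i) - \theta_{ik})$ and $\theta_{ik} = \Ep \hat H_k(X_i)$. The $n\nu_n^{-1}\mathbf{H}_{1+\delta}\beta(1,\infty,m)^{\delta/(1+\delta)}$ term does \emph{not} come from a nonzero mean; it comes from the fact that the \emph{conditional} expectation $\Ep[h_i \mid \sigma(X_j : j\notin\nb{i}{m})]$ need not vanish under dependence. That distinction is exactly what drives the use of Berbee's lemma at this step.

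Second, your proposed Stein construction---pick $I$ uniformly and Berbee-couple $\{X_j : j\in\nb{I}{m}\}$ to an independent block---is not what the paper does, and it has a structural problem as stated: Berbee's lemma produces a copy that is independent of a \emph{fixed} $\sigma$-algebra, but the resulting pair $(\tilde W_n,\tilde W_n')$ is not exchangeable, so the standard exchangeable-pair identities do not apply without a further approximate-linearity argument you have not supplied. The paper instead uses the decomposable Stein-coupling framework of \autoref{lemma:stein} directly: set $G_i = \nu_n^{-1}h_i$, $D_{ij} = \nu_n^{-1}h_j$ for $j\in\nb{i}{m}$, and $D_{ij}' = \nu_n^{-1}\sum_{k\in\nb{i}{m}\cup\nb{j}{m}} h_k$. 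This yields five deterministic terms $A_1^N,\dots,A_5^N$, each of which is bounded by a single application of Berbee's lemma plus H\"older---no second coupling, no $\Cov(h_ih_j,h_kh_l)$ expansion. In particular, $A_2^N = \nu_n^{-2}|\sum_i\sum_{j\notin\nb{i}{m}}\Ep h_ih_j|$ gives the $n^2\nu_n^{-2}\mathbf{H}_{2+\delta}^2\beta(1,1,m)^{\delta/(2+\delta)}$ term, $A_3^N$ gives the $\tau_2^m\nu_n^{-2}\mathbf{H}_{2+\delta}^2\beta(2,\infty,m)^{\delta/(2+\delta)}$ term, and $A_4^N+A_5^N$ give the $\tau_3^m\nu_n^{-3}\mathbf{H}_3^3$ term. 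The bookkeeping you anticipate as ``the main obstacle'' largely disappears because the decomposable coupling never requires controlling a fourth-order conditional variance.
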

Similar to the classical central limit theorems for time series and random fields, there is a trade-off between the moment restrictions and the mixing rates. When \(\beta(n_{1},n_{2},m)\) tends to \(0\) faster as \(m\to \infty\), we can take a smaller \(\delta\) and thus relax the moment assumptions. There is also a trade-off, through the choice of \(m\), between the sparsity of the cross-sectional dependence and the mixing rates. When we choose a larger \(m\), the mixing rate \(\beta(n_{1},n_{2},m)\) gets smaller at the cost of a larger \(\tau^{m}_{q_{1},\dots,q_{s}}\).
In the independent case, one may view distinct nodes as having dependence distance \(\mathrm d(i,j)=\infty\). As a result, for any fixed finite \(m\), the mixing terms vanish, \(\eta_m=1\), and \autoref{thm:non-degenerate} yields the usual \(n^{-1/2}\) normal-approximation rate.

As a corollary, if we assume bounded third moments and bounds on the maximum neighbourhood sizes, then the following central limit theorem follows by combining \autoref{thm:non-degenerate} with \autoref{lemma:sparsity-bound} and taking \(\delta = 1\).
\begin{theorem}\label{thm:non-degenerate-clt}
    Under the assumptions of \autoref{thm:non-degenerate}, suppose \(\nu_{n}^{2} = \Theta\pqty{n}\) and \(\mathbf{H}_{3} \leq C < \infty\) for some generic constant \(C\). If, for some sequence \(m = m(n)\) such that \(\eta_{m}^{4} = o\pqty{n}\), and \(n\beta\pqty{1,3,m}^{\frac{1}{3}} + {n}\beta(2, \infty, m) = o\pqty{1}\), then \(W_{n} \rightsquigarrow N(0,1)\) as \(n \to \infty\).
\end{theorem}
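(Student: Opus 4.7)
The plan is to derive $\dist_{W}(W_n, Z) \to 0$ directly from \autoref{thm:non-degenerate} and then use the fact that Wasserstein convergence implies weak convergence to conclude $W_n \rightsquigarrow N(0,1)$. I would apply the theorem with $\delta = 1$, which is admissible since $\mathbf{H}_3 < \infty$ controls both $\mathbf{H}_{1+\delta} = \mathbf{H}_2$ and $\mathbf{H}_{2+\delta} = \mathbf{H}_3$. Substituting $\nu_n = \Theta(\sqrt{n})$ everywhere and replacing each sparsity quantity $\tau^m_{q_1,\ldots,q_S}$ by the bound $n^S \eta_m^{q-S}$ from \autoref{lemma:sparsity-bound} converts the right-hand side of the Wasserstein bound into a finite sum of explicit monomials in $n$, $\eta_m$, and $\beta$-coefficients.

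A key simplification comes from the monotonicity of $\beta$ in its first two arguments noted in \autoref{asmp:beta-mixing}: we have $\beta(1,1,m) \leq \beta(1,3,m)$ and $\beta(1,\infty,m) \leq \beta(2,\infty,m)$, so every mixing factor in the bound is dominated by one of the two quantities appearing in the hypothesis. Applying $\sqrt{a+b} \leq \sqrt{a} + \sqrt{b}$ also lets me split the square root inside the first term of \autoref{thm:non-degenerate} into a sparsity half and a mixing half that can be analysed separately. The purely sparsity-driven pieces collapse to quantities like $\eta_m/\sqrt{n}$ and $\eta_m^2/\sqrt{n}$, both of which vanish under $\eta_m^4 = o(n)$. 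The mixing half of Term 1 together with Term 3 is absorbed by $n\beta(1,3,m)^{1/3} = o(1)$, while the remaining mixing-driven contributions (Terms 2 and 4) are controlled by $\sqrt{n}\,\beta(2,\infty,m) = o(1)$.

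The main obstacle will be the bookkeeping for the cross-terms that couple $\eta_m$-powers with $\beta$-powers, where the joint strength of $\eta_m^4 = o(n)$ and the two mixing hypotheses must be marshalled carefully to absorb each product. Once every one of the five summands in \autoref{thm:non-degenerate} is shown to be $o(1)$, we conclude $\dist_{W}(W_n, Z) \to 0$ and hence $W_n \rightsquigarrow N(0,1)$.
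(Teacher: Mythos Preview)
Your plan is exactly the route the paper takes: it states the corollary as following ``by combining \autoref{thm:non-degenerate} with \autoref{lemma:sparsity-bound} and take \(\delta = 1\)'', and your outline---setting \(\delta=1\), replacing each \(\tau^{m}_{q_1,\dots,q_S}\) by \(C n^{S}\eta_m^{q-S}\), inserting \(\nu_n^2=\Theta(n)\), and invoking the monotonicity of the mixing coefficients---matches this verbatim. The paper gives no further detail beyond that one sentence, so your term-by-term bookkeeping is already more explicit than the original.
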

We make a few remarks about the theorems.
Firstly, \autoref{thm:non-degenerate} provides a Berry--Esseen type result under \(\beta\)-mixing; see Theorem 2.7 of \cite{dobler2015NewBerryEsseen} for independent observations. Because of \autoref{lemma:stein} and the decomposable approach we use to construct the coupling variables, only bounds on the third moment are necessary, in contrast to \cite{kojevnikov2020LimitTheorems}, which requires moments of order higher than \(4\) and \(\psi\)-dependence.
Secondly, we have used only \(\eta_{m}\), the maximum \(m\)-neighbourhood size, to describe the sparsity. In the literature on random fields and network dependence, summability conditions involving the sizes of ``neighbourhood shells'' \(\mathcal{S}_{i}^{m} = \{j: m < \dist(i,j) \leq m+1\}\), multiplied by the mixing rates, are often assumed \citep{bolthausen1982CentralLimit,kojevnikov2020LimitTheorems}. These conditions usually provide a more precise description of the dependence structure, leading to a weaker set of conditions on the mixing rates, since \(|\mathcal{S}_{i}^{m}| \leq |{\mathcal{N}_{i}^{m}}|\).
In any case, we have the freedom to choose a suitable sequence \(m_{n}\), so we do not lose much applicability by stating \autoref{thm:non-degenerate} and \autoref{thm:non-degenerate-clt} in terms of \(\eta_{m}\) with the benefits of simpler notations.

Next, we consider a HAC estimator of the ``long-run'' variance \(\nu_{n}^{2}\). For non-degenerate U-statistics, thanks to the Hoeffding decomposition, this problem is similar to estimating the variance of sample means under cross-sectional dependence. For example, \cite{kojevnikov2020LimitTheorems} and \cite{kojevnikov2021BootstrapNetwork} propose a network HAC estimator for the variance of the sample mean and a bootstrap procedure under network dependence, provided that knowledge of the underlying network is available.

Consider a tapering function \(\kappa: \mathbb R_{+} \to [0,1]\) that is nonincreasing, with \(\kappa(0)=1\) and \(\kappa(z)=0\) for \(z>1\). For some positive sequence \(b_n\), let \(\kappa_{ij}(b_n)=\kappa\!\left(\frac{\mathbf d(i,j)}{b_n}\right)\), \(Q_i := \frac{1}{n-1}\sum_{k\neq i} H_{ik}\), and \(\bar Q := \frac{1}{n}\sum_{i\in I_n} Q_i\). We consider the following HAC estimator:
\[
    \hat \nu_n^2
    :=
    \sum_{i\in I_n}\sum_{j\in I_n}
    \kappa_{ij}(b_n)
    (Q_i-\bar Q)(Q_j-\bar Q).
\]

To show consistency of the estimator, we impose the following additional conditions.
\begin{assumption}\label{asmp:hac-non-degenerate}
    \begin{enumerate*}[label=(\roman*)]
    \item For some \(\delta > 0\), \(\mathbf{H}_{4+\delta} < \infty\).
    \item The sequence of bandwidths \(b_n\) satisfies \(b_n \to \infty\), \(\eta_{b_n}^{4} = o(n)\), and \(n\beta_4(b_n)^{\delta/(4+\delta)} \to 0\), where \(\beta_4(m) := \max_{q_1+q_2=4} \beta(q_1,q_2,m)\).
    \item
        \(\mathbf{H}_{2+\delta}^{2} \sum_{r=1}^{\infty} (1-\kappa\!(\frac{r}{b_n})) (\tau_2^r-\tau_2^{r-1}) \beta(1,1,r-1)^{\delta/(2+\delta)} = o(n)\).
    \item \(\mu_i := \frac{1}{n-1}\sum_{k\neq i} \theta_{ik} =\mu\) for \(i\in I_n\).
    \end{enumerate*}
\end{assumption}
These conditions are similar to those in \cite{kojevnikov2020LimitTheorems} for the network HAC estimator of the variance of the sample mean.
The first two conditions are a standard moment restriction on the kernel function \(H\) and a bandwidth condition requiring the underlying dependence structure to be sufficiently sparse and weak. The third condition is the analogue of Assumption 4.1(ii) in \cite{kojevnikov2020LimitTheorems}, which bounds the contribution to the bias of the HAC estimator from each pair of nodes in a neighbourhood shell \(\{(i,j): m < \dist(i,j) \leq m+1\}\). The last condition is a homogeneity condition on the mean, so that we can eliminate the bias arising from variation in the mean across \(i\in I_n\); this condition holds, for example, when the underlying process is stationary.

\begin{theorem}\label{thm:rowsum-hac-consistency}
    Assume the conditions of \autoref{thm:non-degenerate-clt} and \autoref{asmp:hac-non-degenerate} hold. Then the HAC estimator ${\hat \nu_n^2}/{\nu_n^2} \to_p 1.$
\end{theorem}
Hence, under suitable regularity conditions, we can use the HAC estimator when the underlying distance \(\dist\) is observed.

\subsection{Degenerate case}

For the degenerate case, \(S_{n} = \sum_i \sum_{k\neq i} H_{ik}\) and \(\hat{H}_{k}(x) = \Ep H(X_{k},x) = 0\).
The following theorem follows from the more general result \autoref{thm:degenerate in detail} in \autoref{sec:proofs} and \autoref{lemma:sparsity-bound}, which generalises \cite{hall1984CentralLimit}'s result to the case of cross-sectional dependence for degenerate U-statistics.
\begin{theorem}\label{thm:degenerate}
    Under Assumptions \ref{asmp:no-clustering-point}, \ref{asmp:beta-mixing}, and \ref{asmp:kernel}, for a sequence \(m = m(n)\), let \(\eta_{m} = \max_{i\in \mathcal{I}}\abs{\mathcal{N}_{i}^m}\) and \(\beta(m) = \beta(2,\infty,m)\). Suppose \(s_{n} = \Theta(n)\). Then, for any \(\delta > 0\),
    \begin{equation}
        \dist_{W}\pqty{W_{n}, N(0,1)} \lesssim  \sqrt{n^{2}\eta_{m}^{2} \mathbf{H}_{4 + \delta}^{4} \beta^{\frac{\delta}{4 + \delta}}(m)} + \sqrt{\frac{\eta_{m}^{7} + \eta_{4m}^{3}}{n} \mathbf{H}_{4}^{4}} + \sqrt{\eta_{m}^{4} \mathbf{\Gamma}_{m,2}^{2}}.
    \end{equation}
    where \(W_{n} = s_{n}^{-1}(S_{n}- \Ep S_{n})\), \(\eta_{4m}\) is the maximum neighborhood size at radius \(4m\), and \(s_{n}^{2} = \Var\pqty{S_{n}}\).
\end{theorem}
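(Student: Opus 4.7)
The plan is to apply the Stein coupling framework (\autoref{lemma:stein}), which, via the Stein equation for $1$-Lipschitz test functions, reduces bounding $\dist_{W}(W_{n}, N(0,1))$ to controlling $\abs{\Ep\bqty{f'(W_{n}) - W_{n} f(W_{n})}}$ for solutions $f$ with $\norm{f'}_{\infty}, \norm{f''}_{\infty}\lesssim 1$. The crucial difference from the non-degenerate case is that no useful linear projection exists, so the entire analysis is carried out at the quadratic level, with degeneracy invoked repeatedly to kill cross-terms.

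For each ordered pair $(i,k)$ with $i\neq k$, set $\mathcal{N}(i,k) = \nb{i}{m}\cup \nb{k}{m}$ and construct a coupled statistic $W_{n}^{(i,k)}$ by replacing all observations indexed by $\mathcal{N}(i,k)$ with independent copies obtained through \autoref{lemma:berbee}; this coupling fails on an event of probability at most a multiple of $\beta(2,\infty,m)$. Writing $D_{ik} = s_{n}(W_{n} - W_{n}^{(i,k)})$ and splitting
\begin{equation*}
    s_{n}\Ep\bqty{W_{n} f(W_{n})} = \sum_{i\neq k}\Ep\bqty{H_{ik}\pqty{f(W_{n}) - f(W_{n}^{(i,k)})}} + \sum_{i\neq k}\Ep\bqty{H_{ik} f(W_{n}^{(i,k)})},
\end{equation*}
degeneracy $\Ep[H(X_{i},X_{k})] = 0$ makes the second sum a pure mixing remainder: after a Hölder bound using $\mathbf{H}_{4+\delta}$ and summing over the $n^{2}$ pairs and their $\eta_{m}$-sized neighborhoods, it contributes the term $\sqrt{n^{2}\eta_{m}^{2}\mathbf{H}_{4+\delta}^{4} \beta(m)^{\delta/(4+\delta)}}$.

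A second-order Taylor expansion around $W_{n}^{(i,k)}$ splits the first sum into a linear part $s_{n}^{-1}\sum_{i\neq k}\Ep\bqty{H_{ik}f'(W_{n}^{(i,k)}) D_{ik}}$ and a quadratic remainder of size $s_{n}^{-2}\sum_{i\neq k}\Ep\bqty{\abs{H_{ik}} D_{ik}^{2}}$. The aim for the linear part is to match $\Ep f'(W_{n})$ via the identity $s_{n}^{-2}\sum_{i\neq k}\Ep[H_{ik}D_{ik}] = \Var(W_{n}) = 1$. Expanding $D_{ik}$ as $\sum_{(l,j)} H_{lj}/s_{n}$ with $\cqty{l,j}\cap \mathcal{N}(i,k)\neq \emptyset$, degeneracy annihilates every four-index product whose indices are mutually $m$-separated, leaving only three-index products $\Ep[H(X_{i},X_{k}) H(X_{j},X_{k})]$ with $j\in \nb{i}{m}$ and $\cqty{i,j}$ far from $k$. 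A second Berbee coupling decouples the two occurrences of $X_{k}$ and rewrites these as $\Ep\gga(X_{k},\tilde{X}_{k})$ up to a mixing error, whose $L^{2}$-norm is precisely what $\mathbf{\Gamma}_{m,2}$ bounds; enumerating such triples and normalizing by $s_{n}^{2} = \Theta(n^{2})$ yields the $\sqrt{\eta_{m}^{4}\mathbf{\Gamma}_{m,2}^{2}}$ contribution.

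The hardest step will be the quadratic Taylor remainder, where after Cauchy–Schwarz and expanding $D_{ik}^{2}$ one must enumerate $4$-tuples of indices by their $m$-profile via \autoref{lemma:sparsity-bound}: the tightest-clustered profile, in which all four indices lie within a common $m$-neighborhood, contributes $\eta_{m}^{7}$, while chains linking the two pairs through up to four consecutive $m$-hops contribute $\eta_{4m}^{3}$. Bounding each configuration by $\mathbf{H}_{4}^{4}$ via Hölder and normalizing by $s_{n}^{3} = \Theta(n^{3})$ gives the middle term $\sqrt{(\eta_{m}^{7} + \eta_{4m}^{3})\mathbf{H}_{4}^{4}/n}$. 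Assembling the three pieces via the Stein equation delivers the stated bound. The principal technical difficulty throughout is the careful tracking of which index configurations survive each application of degeneracy, and of where Berbee coupling must be inserted to convert conditional expectations into the $\gga$ form; this is exactly where degeneracy and cross-sectional dependence interact non-trivially, with no analogue in the classical martingale argument of \cite{hall1984CentralLimit}.
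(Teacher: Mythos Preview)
Your Stein-coupling outline is not the route the paper takes, and as written it has two concrete gaps.

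\emph{How the paper proceeds.} The paper does not index the coupling by pairs $(i,k)$. It first deletes all summands $H_{ik}$ with $\dist(i,k)\le 4m$ (this is the $A_{0}$ step, and it is where $\eta_{4m}^{3}$ enters, via $\tau_{4}^{4m}\lesssim n\eta_{4m}^{3}$). On the trimmed statistic $W^{*}$ it builds a \emph{node-indexed} coupling: $G_{i}=s_{n}^{-1}\sum_{k\notin \nb{i}{4m}}H_{ik}$, $D_{i}=\sum_{j\in \nb{i}{m}}D_{ij}$, with a further level $D'_{ij}$. The three displayed terms of \autoref{thm:degenerate} then come, after \autoref{lemma:sparsity-bound}, from $A_{4}+A_{5}$ (the $\eta_{m}^{7}$), from $A_{0}$ and $A_{2}$ (the $\eta_{4m}^{3}$), and from the $A_{3}$ term (both the $\beta$ piece and the $\mathbf{\Gamma}_{m,2}$ piece).

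\emph{Gap 1: the Berbee step.} Your coupling ``replaces all observations indexed by $\mathcal{N}(i,k)=\nb{i}{m}\cup\nb{k}{m}$ with independent copies'' and asserts failure probability $\lesssim\beta(2,\infty,m)$. Berbee's lemma applied to $\{X_{j}:j\in\mathcal{N}(i,k)\}$ versus its complement gives $\beta(|\mathcal{N}(i,k)|,\,\cdot\,,\,\dist(\mathcal{N}(i,k),\mathcal{N}(i,k)^{c}))$; the first argument is up to $2\eta_{m}$, and the distance is not $m$ (a node just outside $\nb{i}{m}$ can be at distance $1$ from some $j\in\nb{i}{m}$). The paper never couples a whole neighbourhood: it only replaces a single $X_{i}$ (for $A_{1}$) or a pair $(X_{i},X_{j})$ with $j\in\nb{i}{m}$ (for $A_{3}$), against $\sigma(X_{l}:l\notin\nb{i}{m}\cup\nb{j}{m})$, which legitimately yields $\beta(1,\infty,m)$ or $\beta(2,\infty,m)$. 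If you intend $W_{n}^{(i,k)}$ to be the statistic with summands touching $\mathcal{N}(i,k)$ \emph{deleted} (not replaced), say so; the Berbee step then acts only on $(X_{i},X_{k})$.

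\emph{Gap 2: where $\mathbf{\Gamma}_{m,2}$ comes from.} You attribute the $\sqrt{\eta_{m}^{4}\mathbf{\Gamma}_{m,2}^{2}}$ contribution to ``three-index products $\Ep[H(X_{i},X_{k})H(X_{j},X_{k})]$'' and a ``second Berbee coupling that decouples the two occurrences of $X_{k}$''. There is only one $X_{k}$ in such a product, and $\mathbf{\Gamma}_{m,2}$ is by definition the $L^{2}$ norm of $\gga(X_{k_{1}},\tilde X_{k_{2}})$ at two \emph{different} arguments with $\dist(k_{1},k_{2})>m$. In the paper this term does not arise from the mean of $H_{ik}D_{ik}$ at all; it arises in $A_{3}$ from bounding the \emph{variance} of $\sum_{k_{1},k_{2}}\gga_{k_{1}k_{2}}$, i.e.\ from $\Cov(\gga_{k_{1}k_{2}},\gga_{k_{3}k_{4}})$ in the profile case $(k_{1},k_{2},k_{3},k_{4})\in T_{2,2}^{m}$ with $\dist(k_{1},k_{3})\le m$ (\autoref{lemma:gamma-cov}, case 3). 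Your linear-part analysis matches only the \emph{mean} $s_{n}^{-2}\sum\Ep[H_{ik}D_{ik}]$ to $1$; the fluctuation of $s_{n}^{-2}\sum H_{ik}D_{ik}$ around its mean, times $f'$, is where the $\Gamma$ term must be produced, and that is a four-$k$-index computation you have not sketched. Without the trimming step and the two-level $D_{ij},D'_{ij}$ structure, it is not clear your pair-indexed coupling isolates this quantity with the right combinatorics.
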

There is also a trade-off between the mixing and moment conditions. If the mixing coefficient converges to zero more rapidly, one can choose a smaller value of \(\delta\), thereby weakening the required moment assumptions. The choice of \(m\) reflects a further trade-off: a larger \(m\) improves the mixing coefficient \(\beta(m)\), but at the cost of increasing the neighborhood size \(\eta_m\).
In the special case of an independent sample, \(\beta(m)=0\) for all \(m\geq 1\).  Taking \(m=1\) gives \(N_i^m=\{i\}\) and hence \(\eta_m=1\). The bound in Theorem~3.5 then recovers the classical independent-sample benchmark conditions for degenerate U-statistics, as in \cite{hall1984CentralLimit}.

As a corollary, the following CLT holds under a geometric mixing rate and slightly stronger moment conditions than those in \cite{hall1984CentralLimit}. In practice, once we have a specific choice of the kernel function \(H\), the conditions depend on the mixing rate and moment restrictions, which can be checked empirically, as in classical time-series settings. The sparsity restriction can be checked if the distance function is known; otherwise, it may be justified through economic reasoning.
\begin{proposition}\label{cor:degenerate}
    Suppose the assumptions in \autoref{thm:degenerate} hold and \(\beta(m) = O\pqty{\rho^{m}}\) for some \(0 < \rho < 1\). If, for some \(C > 0\), \(m = C \log n\), there exists \(K < \infty\) such that \(\eta_{m} = O(m^{K})\), and if, for some \(\epsilon > 0\) and \(\delta > 0\),
    \begin{equation}
        \frac{1}{n} \mathbf{H}_{4}^{4} + \mathbf{\Gamma}_{m,2}^{2} + n^{2 + \mu}\mathbf{H}_{4 + \delta}^{4} = o\pqty{n^{-\epsilon}},
    \end{equation}
    with \(\mu = \frac{C \delta}{4 + \delta}\log\rho\), then we have \({W}_{n} \rightsquigarrow N(0,1)\) as sample size \(n\to \infty\).
\end{proposition}

\autoref{thm:degenerate} and the corollary appear to be suitable for index spaces that are relatively ``uniform'', in the sense that the \(\eta_{m,i}\) are of the same order across \(i\in \mathcal{I}\), so that using the maximum cardinalities of neighbourhoods \(\eta_{m}\) in \autoref{lemma:sparsity-bound} is a reasonable choice. For more asymmetric networks, it may be more appropriate to consider the general result and use a more precise counting method in the estimation, as in \autoref{thm:degenerate in detail}.

In practice, we also need to estimate the long-run variance $s_{n}^{2}$ for degenerate U-statistics. In the time-series setting, \cite{fan1999CentralLimit} show that a variance estimator constructed as though the sample were independent can nonetheless be consistent, and demonstrate this in the context of nonparametric specification testing. More generally, bootstrap methods for U-statistics have been extensively studied under independence and time-series dependence. For independent samples, \cite{bickel1981AsymptoticTheory} develop bootstrap theory for U-statistics, while \cite{arcones1992BootstrapStatisticsa} extend these results to degenerate U-statistics. Under weak dependence, \cite{dehling2010CentralLimit} establish bootstrap validity for U-statistics of mixing processes, and \cite{leucht2013DependentWild} propose a dependent wild bootstrap for degenerate U-statistics.

Extending bootstrap methods for U-statistics to settings with cross-sectional dependence is an important direction that we intend to explore in future work, but it seems beyond the scope of this paper. Instead, we provide a complementary result to the central limit theorem for degenerate U-statistics (\autoref{thm:degenerate}) by simplifying the variance expression. This result is used in the proof of \autoref{thm:nonpar test} to show that a variance estimator $\hat{s}_n^2$, constructed as if the sample were independent, remains consistent for the variance of the test statistic under suitable moment and sparsity conditions, in the same spirit as \cite{fan1999CentralLimit}.
\begin{theorem}\label{thm:degenerate u variance}
    Under Assumptions \ref{asmp:no-clustering-point}, \ref{asmp:beta-mixing} and \ref{asmp:kernel}, if we have
    \begin{equation*}
        \tau_{4}^{m} \mathbf{H}_{2}^{2} + \pqty{ \tau_{3,1}^m + \tau_{2,1,1}^m + \tau_{1,1,1,1}^m + \tau_{2,2}^{m}} \mathbf{H}_{2+\delta}^{2}\beta_{4}(m)^{\frac{\delta}{2 + \delta}} + \tau_{2,2}^{m} \gamma_{m,1} = o(\sigma_{n}^{2}),
    \end{equation*}
    where \(\beta_{4}(m) := \max_{q_{1}+q_{2}=4}\beta(q_{1},q_{2},m)\) and $$\gamma_{m,1} := \sup\cqty{\abs{\Ep \Gamma^{i,j}_{k_{1},k_{2}}} : i,j,k_{1},k_{2}\in \mathcal{I}_{n},\ j\in \nb{i}{m},\ k_{1}\neq k_{2}},$$
    then \(s_{n}^{2} = \Var\pqty{S_{n}} = (2 + o(1))\sigma_{n}^{2}\), where \(\sigma_{n}^{2} = \sum_i \sum_{k\neq i} \Ep H_{i \tilde{k}}^{2}\).
\end{theorem}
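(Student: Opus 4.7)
The plan is to decompose $\Var(S_{n}) = \sum_{i\neq k}\sum_{i'\neq k'}\Cov(H_{ik},H_{i'k'})$ and group the four-tuples $(i,k,i',k')$ by how many indices the pairs $\cqty{i,k}$ and $\cqty{i',k'}$ share, partitioning each overlap class further by the $m$-profile from Section~\ref{sec:model}. The diagonal block $\cqty{i,k} = \cqty{i',k'}$ will produce the main term $2\sigma_{n}^{2}(1+o(1))$, and every other block will be absorbed into one of the three bounds stated in the hypothesis.

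First I would handle the diagonal. Since $H$ is symmetric, the two configurations $(i',k')=(i,k)$ and $(i',k')=(k,i)$ each contribute $\sum_{i\neq k}\Var(H_{ik})$, for a total of $2\sum_{i\neq k}\Var(H_{ik})$. Using $\Ep H_{i\tilde k} = 0$ from degeneracy, it suffices to show
\[
\sum_{i\neq k}\abs{\Ep H_{ik}^{2} - \Ep H_{i\tilde k}^{2}} + \sum_{i\neq k}(\Ep H_{ik})^{2} = o(\sigma_{n}^{2}).
\]
I would split the sum by $\dist(i,k)\leq m$ versus $\dist(i,k)>m$: the close-pair part is bounded trivially by $\tau_{2}^{m}\mathbf{H}_{2}^{2}\leq \tau_{4}^{m}\mathbf{H}_{2}^{2}$ (each $m$-connected pair lifts to the profile-$(4)$ tuple $(i,i,k,k)$), and the far-pair part is handled by Berbee's lemma, coupling $X_{k}$ to an independent copy at cost $\mathbf{H}_{2+\delta}^{2}\beta(1,1,m)^{\delta/(2+\delta)}$ per pair, which is dominated by $(\hat\tau_{4}^{m}+\tau_{2,2}^{m})\mathbf{H}_{2+\delta}^{2}\beta_{4}(m)^{\delta/(2+\delta)}$.

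Next I would control the off-diagonal pieces. Overlap-$1$ contributions reduce by symmetry of $H$ to sums of $\Cov(H_{ik},H_{ik'})$ over distinct triples $(i,k,k')$; partitioning by the $m$-profile of the triple, profile-$(3)$ contributions are absorbed into $\tau_{3}^{m}\mathbf{H}_{2}^{2}\leq \tau_{4}^{m}\mathbf{H}_{2}^{2}$, and the remaining profiles are decoupled via Berbee and absorbed into $\hat\tau_{4}^{m}\mathbf{H}_{2+\delta}^{2}\beta_{4}(m)^{\delta/(2+\delta)}$. For the disjoint-index case I partition by the $m$-profile of $(i,k,i',k')$: profile $(4)$ goes into $\tau_{4}^{m}\mathbf{H}_{2}^{2}$; for profile $(2,2)$, where the two internally $m$-connected pairs are separated by distance exceeding $m$, two applications of Berbee give
\[
\Cov(H_{ik},H_{i'k'}) = \Ep\Gamma^{i,i'}(X_{k},X_{k'}) - \Ep H_{i\tilde k}\,\Ep H_{i'\tilde k'} + O\!\pqty{\mathbf{H}_{2+\delta}^{2}\beta_{4}(m)^{\delta/(2+\delta)}},
\]
in which the middle term vanishes by degeneracy and the first is bounded in absolute value by $\gamma_{m,1}$, yielding exactly $\tau_{2,2}^{m}\gamma_{m,1}$ and $\tau_{2,2}^{m}\mathbf{H}_{2+\delta}^{2}\beta_{4}(m)^{\delta/(2+\delta)}$. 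The remaining profiles containing a singleton class are fully decoupled via Berbee and absorbed into the $\hat\tau_{4}^{m}$ term.

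The main obstacle I expect is the profile-$(2,2)$ analysis in the disjoint-index case: in the i.i.d.\ setting degeneracy kills these covariances identically, but under cross-sectional dependence they only vanish up to a coupling remainder, and the leading residual $\Ep\Gamma^{i,i'}(X_{k},X_{k'})$ is nontrivial. Invoking $\gamma_{m,1}$ requires $k,k'$ to be far both from each other and from $(i,i')$, so synchronising the coupling distance $m$ across the two Berbee applications while preserving the index separation needed for the $\Gamma$-bound forces exactly the three-way interaction of sparsity count, mixing rate, and $\Gamma$-moment that appears in the hypothesis; the bookkeeping has to be done with care to avoid double-counting across profiles.
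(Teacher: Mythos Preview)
Your decomposition and profile-by-profile strategy match the paper's proof almost exactly. There is, however, a genuine gap in your treatment of the overlap-$1$ block. You claim that for triples $(i,k,k')$ with profile $(2,1)$ or $(1,1,1)$ the covariance ``decouples via Berbee and is absorbed into $\hat\tau_{4}^{m}\mathbf{H}_{2+\delta}^{2}\beta_{4}(m)^{\delta/(2+\delta)}$''. This is fine when the $m$-free index is $k$ or $k'$: coupling that argument and using degeneracy kills the expectation. But in the profile-$(2,1)$ subcase where the \emph{shared} index $i$ is the free one (so $k\stackrel{m}{\leftrightarrow}k'$ while $\dist(i,\{k,k'\})>m$), coupling $X_{i}$ to an independent $\tilde X_{i}$ leaves the residual
\[
\Ep H_{\tilde i k}H_{\tilde i k'}=\Ep\,\Gamma^{i,i}(X_{k},X_{k'}),
\]
which is \emph{not} zero---$\tilde X_{i}$ appears in both factors, so degeneracy does nothing here. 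This is exactly the same mechanism you correctly identify in the disjoint profile-$(2,2)$ case, and it forces a $\gamma_{m,1}$ contribution from $\mathbf{R}_{1}$ as well (the paper records it as $\tau_{2,1}^{m}\gamma_{m,1}$, then absorbs it into $\tau_{2,2}^{m}\gamma_{m,1}$ via the injection $(i,k,k')\mapsto(i,i,k,k')$). Your proposal needs to add this subcase explicitly.

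A smaller point: in the disjoint profile-$(2,2)$ case you only display the formula for the pairing $\{i,i'\},\{k,k'\}$. The paper distinguishes this from the pairing $\{i,k\},\{i',k'\}$ (each kernel pair internally close), where coupling the entire block $(X_{i'},X_{k'})$ gives a pure mixing error with no $\gamma_{m,1}$ residual. This is the easier subcase, but your write-up should make clear which pairing you are in before invoking the $\Gamma$-bound.
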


\section{Application: nonparametric specification test}\label{s1:nonpar test}In statistics and econometrics, there has always been a trade-off between model complexity and statistical inference. For a statistical model \(\cqty{\Prob_{\gamma}: \gamma\in \Gamma}\), where \(\Prob_{\gamma}\) denotes the probability law that governs the data generating process indexed by a parameter \(\gamma\) in the parameter space \(\Gamma\), we face the risk of model misspecification if we estimate a restricted model \({\Gamma}_0\), while the true parameter lies outside the models we consider, \(\gamma_{0} \notin {\Gamma}_0\). On the other hand, with a more complex model, we may lose identification power and efficiency. This trade-off necessitates close examination of how we specify our models and calls for the development of specification tests.  We will focus on the test for the specification of a regression model, \(Y_{i} = g(Z_{i}) + u_{i}\), where we observe \(\cqty{(Z_{i}, Y_{i}):  i\in \mathcal{I}}\) allowing for cross-sectional dependence among the observations.

A vast literature exists on such specification tests. Bearing in mind the trade-off mentioned, this paper will consider the specification test where we have the null hypothesis that a regression function lies in a parametric family indexed by some finite-dimensional parameter against a general nonparametric alternative, i.e. \(H_{0}: g(z) = g(z,\gamma)\) for a known parametric function indexed by a parameter \(\gamma \in \Gamma\subset \mathbb{R}^{d}\) against the alternative that \(g(z)\) is a smooth nonparametric function. Several such tests have been proposed based on the kernel smoothing method; for example, \cite{hardle1993ComparingNonparametric} compares the \(L_{2}\) distance between the parametric and nonparametric fit, while \cite{zheng1996ConsistentTest}, \cite{fan1999CentralLimit}, and \cite{li2007NonparametricEconometrics} propose tests based on the idea that if the parametric model is correctly specified, then a kernel smoothing fit for the residuals should be approximately zero. Such tests avoid the random denominator problem. Additionally, \cite{fan1999CentralLimit} dealt with absolutely regular time series data, which was later generalised to broader mixing concepts by \cite{gao2008CentralLimit} and others. Finally, \cite{horowitz2001AdaptiveRateOptimal} and \cite{horowitz2002AdaptiveRateOptimal} propose adaptive and rate optimal tests that are uniformly consistent against alternatives by considering multiple bandwidths.

For these test statistics, the dominating parts are second-order U-statistics and the Central Limit Theorems for the test statistics are derived under conditions of \cite{hall1984CentralLimit}, \cite{dejong1987CentralLimit}, and their extensions to the time series settings. With the normal approximation theory developed in the previous section, we are now able to generalise the nonparametric specification test to allow for cross-sectional dependence.

\subsection{Test statistics}
Suppose we have a collection of cross-sectional observations \(\cqty{(Y_{i}, Z_{i}) : i\in \mathcal{I}}\), where \(\mathcal{I}\) is the index set and \(n =\abs{\mathcal{I}}\) is the sample size, \(Y_{i}\) is a random scalar, and \(Z_{i}\in \mathbb{R}^{d}\). Suppose for \(i \in \mathcal{I}\), \(\Ep(Y_{i}\mid Z_{i}) = g(Z_{i})\).
We wish to test the null hypothesis of a parametric model specification for \(g(\cdot, \gamma)\) where \(\gamma\in \Gamma_0\) against a nonparametric alternative,
\begin{align*}
    &\mathbb{H}_{0}: \Ep\bqty{Y_{i} \mid Z_{i}} = g(Z_{i} , \gamma_{0}) \mathtext{a.s. for some} \gamma_{0} \in \Gamma_0, \\
    &\mathbb{H}_{1}: \Prob\bqty{\Ep\pqty{Y_{i}\mid Z_{i}} \neq g(Z_{i}, \gamma)} > 0 \mathtext{for all} \gamma\in \Gamma_0.
\end{align*}
Let \(u_{i} = Y_{i} - g(Z_i, \gamma_0)\), we will allow for cross-sectional dependence in \(X_{i} = \pqty{u_{i}, Z_{i}}\) under the null hypothesis \(H_{0}\).
Let \(\hat{\gamma}\) be the nonlinear regression estimate for \(\gamma\) which is assumed to be \(\sqrt{n}\)-consistent, and \(\hat{u}_{i} = Y_{i} - g\pqty{Z_{i}, \hat{\gamma}}\).
Following \cite{fan1996ConsistentModel} and \cite{li2007NonparametricEconometrics}, we base our test on the kernel estimator for \(\Ep\bqty{u_{i}\Ep\bqty{u_{i}\mid Z_{i}} f_{i}(Z_{i})}\), which equals zero under the null hypothesis \(H_{0}\), where \(f_{i}(\cdot)\) is the marginal density function of \(Z_{i}\). Consider the following statistic,
\begin{equation}\label{eq:test statistic base}
    I_{n} = \sum_{i}\sum_{j\neq i} b^{- \frac{d}{2}} \hat{u}_{i} \hat{u}_{j} K\pqty{\frac{Z_{i} - Z_{j}}{b}},
\end{equation}
where \(b\) is a bandwidth we choose. We will write \(K_{ij} = K\pqty{\frac{Z_{i} - Z_{j}}{b}}\) for a kernel weighting function \(K\). For later use, let $I_{n}^{\circ} = \sum_{i}\sum_{j\neq i} b^{- \frac{d}{2}} u_{i}u_{j} K\pqty{\frac{Z_{i} - Z_{j}}{b}}$. This oracle statistic \(I_{n}^{\circ}\) is a U-statistic of order \(2\) with samples \(\cqty{\pqty{u_i, Z_i} : i \in \mathcal{I}}\) and kernel
\begin{equation*}
    H(x,y) = b^{- \frac{d}{2}} x_{1}y_{1} K\pqty{\frac{x_{2} - y_{2}}{b}},
\end{equation*}
where \(x = (x_{1}, x_{2}')'\) and \(y = (y_{1}, y_{2}')'\in \mathbb{R}^{1}\times \mathbb{R}^{d}\). The feasible statistic \(I_{n}\) in \eqref{eq:test statistic base} is the plug-in version obtained by replacing \(u_i\) with \(\hat u_i\).
Let \(\hat{s}_{n}^{2} = 2 b^{- d} \sum_{i}\sum_{j\neq i} \hat{u}^{2}_{i}\hat{u}^{2}_{j} K_{ij}^{2}\), we will consider the following test statistic
\begin{equation}\label{eq:test stats}
    \mathbf{T}_{n} = \frac{I_{n}}{\hat{s}_{n}}.
\end{equation}

We make the following assumptions on the data generating process. Firstly, we allow for cross-sectional dependence in the sample but require the sample to be \(\beta\)-mixing.
\begin{assumption}\label{asmp:nonpar dgp}
    We observe a sample \(\cqty{\pqty{Y_{i}, Z_{i}}: i\in \mathcal{I}_n}\) which is \(\beta\)-mixing with coefficients \(\beta(n_{1},n_{2},m)\) with respect to some distance \(\dist:\mathcal{I}_{n} \times \mathcal{I}_{n} \to [0, \infty]\). For \(m > 0\), let \(\nb{i}{m}\) be the \(m\)-neighbourhood of \(i\), \(\nb{i}{m} = \cqty{j : \dist(i,j) \leq m}\), and \(\eta_{m} = \max_{i\in \mathcal{I}} \abs{\nb{i}{m}} = O(m^{C_{1}})\) for some \(0 \leq C_{1} < \infty\).
\end{assumption}

And we make the following assumptions on the moments of the error terms, and the smoothness of the density function of \(Z_{i}\)'s and the regression function.
\begin{assumption}\label{asmp:nonpar moments}
    Assume that \(\Ep\bqty{u_{i}\mid Z_{i}} = 0\), and there exist constants \(0 < \underline{\mu} \leq \overline{\mu} < \infty\) such that
    \[
        \underline{\mu}
        \leq
        \mu_{i, 2}(z) := \Ep\bqty{u_{i}^{2} \mid Z_{i} = z}
        \leq
        \overline{\mu}
    \]
    for all \(z\) in the support of \(Z_{i}\) and all \(i \in \mathcal{I}_n\). For all \(i\neq j \in \mathcal{I}\), \(\norm{u_{i}^{4}}_{\frac{1}{1-\xi}}\) and \(\norm{u_{i}^{4}u_{j}^{4}}_{\frac{1}{1-\xi}}\) are bounded for some \(0 < \xi < 1\), and \(\norm{u_{i}u_{j}u_{k}u_{l}}_{\frac{1}{1 - \xi_1}} < \infty\) for some \(\xi_1\in (\frac{1}{2}, 1)\) and \(i,j,k,l \in \mathcal{I}\).
\end{assumption}
\begin{assumption}\label{asmp:z}
    \begin{enumerate}[(a)]
        \item Let \(f_{i_{1}, \dots, i_{l}}\) be the density of \((Z_{i_{1}}, Z_{i_{2}}, \dots, Z_{i_{l}})\). Then there exist \(\mathcal{Z}\subset \mathbb{R}^{d}\) with positive measure and a constant \(\underline{f} > 0\) such that \(f_{i}(z) \geq \underline{f}\) for all \(z \in \mathcal{Z}\) and all \(i \in \mathcal{I}_{n}\). For all \(i_{1}, \dots, i_{l} \in \mathcal{I}_{n}\), \(l\leq 4\), the densities \(f_{i_{1},\dots, i_{l}}\) are bounded and satisfy the following Lipschitz condition,
            \begin{equation*}
                \abs{f_{i_{1}, \dots, i_{l}}(z_{1} + v_{1}, \dots, z_{l} + v_{l}) - f_{i_{1}, \dots, i_{l}}(z_{1}, \dots, z_{l})} < D_{l}(z_{1}, \dots, z_{l}) \abs{v},
            \end{equation*}
            where \(D_{l}(z_{1}, \dots, z_{l})\) are integrable, \(\int D_{l}(z_{1}, \dots, z_{l}) f_{i_{1}, \dots, i_{l}}(z_{1}, \dots, z_{l}) < \infty\). For all \(i\neq j \in \mathcal{I}_n\),
        \item
            \(g(z,\cdot)\) is twice differentiable and the norms of \(\frac{\partial}{\partial \gamma} {g(z,\cdot)}\) and \(\frac{\partial^{2}}{\partial \gamma \partial \gamma'}g(z, \cdot)\) are bounded by a Lipschitz continuous function \(M_{g}(z)\) with finite second moment in a neighborhood of \(\gamma_{0}\), such that \(\int M_{g}^4(z) f_{i}(z)\dd z < \infty, \int M_{g}^4(z) f_{i,j}(z,z)\dd z < \infty,\int M_{g}^4(z) D_{l}(z, \dots, z)\dd z < \infty,\) for all \(i,j \in \mathcal{I}_{n}\), \(l \leq 4\).
    \end{enumerate}
\end{assumption}
This assumption that the supports of \(Z_{i}\)'s are not disjoint will be satisfied if we assume stationarity.
The following assumptions on the kernel smoothing function \(K\) are standard in the literature and satisfied for commonly chosen kernel smoothing functions, such as normal kernel, Epanechnikov kernel, and uniform kernel.
\begin{assumption}\label{asmp:nonpar kernel}
    \(K(v)\) is a product kernel smoothing function \(K(v) = \prod_{k=1}^{d}\tilde{K}(v_{(k)})\) where \(v_{(k)}\) denotes the \(k\)-th element of the \(d\)-dimensional vector \(v\) and \(\tilde{K}\) is a nonnegative, bounded, symmetric function \(\tilde{K}\) with \(\int \tilde{K}(v) = 1\) and \(\int \abs{v}^{2}K(v) \dd v < \infty\).
\end{assumption}

Our main result is the following distribution theory for the test statistics.
\begin{theorem}\label{thm:nonpar test}
    Suppose Assumptions \ref{asmp:nonpar dgp}--\ref{asmp:nonpar kernel} hold, and that \(\sqrt{n}\pqty{\hat{\gamma} - \gamma_{0}} = O_{p}(1)\) under \(\mathbb H_{0}\). Suppose also that \(\beta(m) = O(\rho^{m})\) for some \(0 < \rho < 1\), and the bandwidth \(b^{-d} = \Theta(n^{\alpha})\) for some \(0 < \alpha < \frac{1}{2-\xi}\), then
    \begin{equation}\label{eq:nonpar test}
        \mathbf{T}_{n} \rightsquigarrow_{H_{0}} N(0,1),
    \end{equation}
    as the sample size \(n\to \infty\).
\end{theorem}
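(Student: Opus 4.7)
The plan is to compare $\mathbf{T}_{n}$ with an infeasible test statistic built from the true errors $u_{i}$, apply \autoref{cor:degenerate} to the infeasible numerator, prove consistency of the variance estimator, and invoke Slutsky's theorem. Define the infeasible sum $I_{n,0} = b^{-d/2}\sum_{i\neq j} u_{i} u_{j} K_{ij}$, which is a degenerate second-order U-statistic in the sample $X_{i}=(u_{i},Z_{i})$ with kernel $H(x,y) = b^{-d/2} x_{1} y_{1} K((x_{2}-y_{2})/b)$; degeneracy follows from $\Ep\bqty{H(X_{i},x)} = b^{-d/2} x_{1}\int K((z-x_{2})/b)\Ep\bqty{u_{i}\mid Z_{i}=z}f_{i}(z)\dd z = 0$ under $\mathbb{H}_{0}$. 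First I would Taylor expand $g(Z_{i},\hat\gamma)$ at $\gamma_{0}$, obtaining $\hat u_{i} = u_{i} - \nabla_{\gamma}g(Z_{i},\gamma_{0})'(\hat\gamma-\gamma_{0}) + O_{p}(n^{-1}) M_{g}(Z_{i})$, and substitute into $I_{n}$ to get
\begin{equation*}
    I_{n} = I_{n,0} - 2(\hat\gamma-\gamma_{0})' A_{n} + (\hat\gamma-\gamma_{0})' B_{n} (\hat\gamma-\gamma_{0}) + \text{l.o.t.},
\end{equation*}
with $A_{n} = b^{-d/2}\sum_{i\neq j} u_{i} \nabla_{\gamma}g(Z_{j},\gamma_{0}) K_{ij}$ and $B_{n} = b^{-d/2}\sum_{i\neq j}\nabla_{\gamma}g(Z_{i},\gamma_{0})\nabla_{\gamma}g(Z_{j},\gamma_{0})' K_{ij}$. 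Using the kernel-smoothing approximation $\sum_{j\neq i}\nabla_{\gamma}g(Z_{j},\gamma_{0}) K_{ij} \approx n b^{d} f(Z_{i})\nabla_{\gamma}g(Z_{i},\gamma_{0})$ and the CLT for $\beta$-mixing sums applied to $\sum_{i} u_{i}\nabla_{\gamma}g(Z_{i},\gamma_{0}) f(Z_{i}) = O_{p}(\sqrt{n})$, I obtain $A_{n}(\hat\gamma-\gamma_{0}) = O_{p}(n b^{d/2})$ and $(\hat\gamma-\gamma_{0})' B_{n}(\hat\gamma-\gamma_{0}) = O_{p}(n b^{d/2})$, both $o_{p}(s_{n})$ since $b\to 0$ and $s_{n}=\Theta(n)$.

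Second, I verify the hypotheses of \autoref{cor:degenerate} for $I_{n,0}$. A standard change of variables using the bounded joint densities in \autoref{asmp:z} gives $\mathbf{H}_{p}^{p} = O(b^{d(1-p/2)})$ for $p \leq 4/(1-\xi)$, and the bounded conditional second moment of $u_{i}u_{j}$ given $(Z_{i},Z_{j})$ yields $\mathbf{\Gamma}_{m,2}^{2} = O(b^{2d}) = o(1)$. With $b^{d}\sim n^{-\alpha}$ and $m = C\log n$, the conditions $n^{-1}\mathbf{H}_{4}^{4} = O(n^{-1+\alpha}) = o(1)$ and $n^{2+\mu}\mathbf{H}_{4+\delta}^{4} = O(n^{2+\mu+2\alpha(2+\delta)/(4+\delta)})$ can be met by choosing $C$ large, since $\mu = C\delta(4+\delta)^{-1}\log\rho$ is negative and can be made arbitrarily negative. \autoref{thm:degenerate u variance} further identifies $s_{n}^{2} = (2+o(1))\sigma_{n}^{2}$ with $\sigma_{n}^{2} = \sum_{i\neq j}\Ep H_{i\tilde j}^{2} = \Theta(n^{2})$, so $s_{n}=\Theta(n)$ and $I_{n,0}/s_{n} \rightsquigarrow N(0,1)$.

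Third, I establish $\hat s_{n}^{2}/s_{n}^{2}\to_{p} 1$. After Taylor-expanding $\hat u_{i}^{2}\hat u_{j}^{2}$ and absorbing the $\hat\gamma$-dependent cross terms (each $o_{p}(\sigma_{n}^{2})$ by the same kernel-moment arithmetic), it suffices to show $\Ep\bqty{\sum_{i\neq j} H_{ij}^{2}} = \sigma_{n}^{2} + o(n^{2})$ and $\Var(\sum_{i\neq j} H_{ij}^{2}) = o(n^{4})$. The first holds because $\Ep H_{ij}^{2}$ coincides with $\Ep H_{i\tilde j}^{2}$ except on the $O(n\eta_{m})$ close pairs; the second follows by partitioning quadruples $(i,j,k,l)$ according to their $m$-profile and invoking Davydov's covariance inequality for $\beta$-mixing on separated quadruples, with the geometric mixing and polynomial neighbourhood growth $\eta_{m}=O((\log n)^{C_{1}})$ ensuring the far contribution of order $n^{4}\beta(m)^{\xi/(1+\xi)}$ and the close contribution of order $n\eta_{m}^{3}$ are both $o(n^{4})$. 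Combining with $I_{n} = I_{n,0} + o_{p}(s_{n})$, Slutsky's theorem delivers $\mathbf{T}_{n}\rightsquigarrow N(0,1)$. I expect the main difficulty to lie in this third step: bounding $\Var(\hat s_{n}^{2})$ under $\beta$-mixing demands a careful case-by-case enumeration of quadruples together with a truncated Davydov-type inequality compatible with the moment budget $\|u_{i}^{4}u_{j}^{4}\|_{1/(1-\xi)}<\infty$ from \autoref{asmp:nonpar moments}, while simultaneously controlling the replacement of $\hat u$ by $u$.
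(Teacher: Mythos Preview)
Your overall plan mirrors the paper's own proof: decompose $I_{n}=I_{n,0}-2\mathbf I_{n2}+\mathbf I_{n3}$ via Taylor expansion, verify \autoref{cor:degenerate} for the infeasible $I_{n,0}$, use \autoref{thm:degenerate u variance} to identify $s_{n}^{2}$, show the estimation-error pieces are $o_{p}(s_{n})$, and prove $\hat s_{n}^{2}/s_{n}^{2}\to_{p}1$ (the paper does the last step by a direct appeal to \autoref{thm:lln for non-degenerate} rather than a hand-built Davydov enumeration, but your route would also work).

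There is, however, a concrete quantitative gap in your moment bookkeeping. You assert $\mathbf{H}_{p}^{p}=O\!\big(b^{d(1-p/2)}\big)$ and $\mathbf{\Gamma}_{m,2}^{2}=O(b^{2d})$ ``by bounded conditional second moment of $u_{i}u_{j}$ given $(Z_{i},Z_{j})$''. Assumption~\ref{asmp:nonpar moments} does \emph{not} give bounded conditional fourth (or product) moments; it only gives $\|u_{i}^{4}u_{j}^{4}\|_{1/(1-\xi)}<\infty$. With that, H\"older yields
\[
\mathbf{H}_{4}^{4}\;\le\;B^{-2}\,\|u_{i}^{4}u_{j}^{4}\|_{1/(1-\xi)}\,\|K_{ij}^{4}\|_{1/\xi}
\;=\;O\!\big(B^{\xi-2}\big),
\]
not $O(B^{-1})$, and similarly $\mathbf{\Gamma}_{m,2}^{2}=O(B)$ (the diagonal case $i=j$ is the binding one). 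The correct fourth-moment estimate is exactly what forces the bandwidth restriction $\alpha<\tfrac{1}{2-\xi}$ in the theorem: the condition $n^{-1}\mathbf H_{4}^{4}=o(1)$ becomes $n^{-1+\alpha(2-\xi)}\to 0$, which fails for $\alpha\ge \tfrac{1}{2-\xi}$. Your version $n^{-1}\mathbf H_{4}^{4}=O(n^{-1+\alpha})$ would make the hypothesis $\alpha<\tfrac{1}{2-\xi}$ look superfluous, so this is not cosmetic. Once you replace your moment rates by the H\"older-based ones (the paper writes $\mathbf H_{p}=O(B^{\lambda_{p}})$ with $\lambda_{p}=(4-2p-p(1-\xi))/(4p)$), the rest of your verification of \autoref{cor:degenerate} and \autoref{thm:degenerate u variance} goes through unchanged.
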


\subsection{Simulations}
In this section, we conduct simulations to show the finite-sample performance of the proposed specification test with cross-sectionally dependent data.\footnote{The codes for implementing the nonparametric specification test and the simulations are available at \href{https://github.com/lwg342/specification-test-csc-code}{https://github.com/lwg342/specification-test-csc-code}.} The simulation specification is close to the one in \cite{horowitz2001AdaptiveRateOptimal}. For each simulation we generate \(n = 2000\) samples of \((z_{i}, y_{i})\), where \(z_{i} \sim N\pqty{0,25}\) and the null-hypothesis model is
\begin{equation*}
    y_{i} = \gamma_{0} + \gamma_{1}z_{i} + u_{i},
\end{equation*}
where we take the true \(\gamma_{0} = \gamma_{1} = 1\), and the alternative model is
\begin{equation*}
    y_{i} = 1 + z_{i} + \psi\cdot{\frac{5}{\tau}} \phi\pqty{\frac{z_{i}}{\tau}} + u_{i},
\end{equation*}
where \(\phi\) is the standard normal density function and \(\tau\) is the bandwidth parameter where \(\tau = 1\) or \(0.25\), it controls the shape of deviation from the null model with a smaller \(\tau\) implying a more ``spiked'' deviation. The \(\psi\) parameter controls the overall magnitude of deviation of the true model from the null models. The error terms \(u_{i}\) are generated according to three types of models. The first is the benchmark i.i.d. model, where \(u_{i} \sim N\pqty{0,1}\). The second is an AR(1) model with \(u_{i} = \rho u_{i - 1} + \upsilon_{i}\) where \(\upsilon_{i} \sim N\pqty{0,1}\). The third is a two-way clustering structure. Assume there is a bijective mapping \(\pi: [n] \to [n_{1}]\times [n_{2}]\), such that \(i,j\) are independent if \(\pi(i) = \pqty{\pi_{1}(i), \pi_{2}(i)}\) and \(\pi(j)\) share no common coordinates. In particular the model we consider is \(u_{i} = \frac{1}{\sqrt{2}} \bqty{\lambda_{\pi_{1}(i)} + F_{\pi_{2}(i)}}\) where each \(\lambda, F\) are independent standard normal random variables. For the two-way clustering design, we use \((n_{1}, n_{2}) = (40, 50)\) and \((100, 20)\), so that the total sample size remains \(n = 2000\) in each case.

For each simulated dataset \(\pqty{z_{i}, y_{i}}\), we compute the residual \(\hat{u}_{i}\) from the linear regression of \(y_{i}\) on \(z_{i}\). We then compute the test statistic \(\mathbf{T}_{n}\) using the residuals \(\hat{u}_{i}\), with fixed bandwidth \(b_{n} = n^{-1/4}\), and reject the null when \(\mathbf{T}_{n} > 1.645\). We repeat the simulation 2000 times and compute the empirical size and power of the test. The results are shown in Table \ref{tab:spec test}.

\begin{table}[tbp]
    \centering
    \setlength{\tabcolsep}{4pt}
    \resizebox{\textwidth}{!}{%
        \begin{tabular}{llccccc}
            \toprule
            & & \multicolumn{1}{c}{Null model} & \multicolumn{4}{c}{Alternative model}\\
            \cmidrule(lr){3-3} \cmidrule(lr){4-5} \cmidrule(lr){6-7}
            & &  & \multicolumn{2}{c}{ $\tau = 0.25$} & \multicolumn{2}{c}{ $\tau = 1$} \\
            \cmidrule(lr){4-5} \cmidrule(lr){6-7}
            Model of $\epsilon$ & Parameters &  & \(\psi\) = 0.2 & \(\psi\) = 0.1 & \(\psi\) = 0.2 & \(\psi\) = 0.1 \\
            \midrule
            Normal & N(0,1) & 0.046 & 1.0 & 0.949 & 0.837 & 0.209 \\
            \addlinespace
            \multirow{3}{*}{AR} & $\rho = 0.1$ & 0.046 & 1.0 & 0.940 & 0.847 & 0.208 \\
            & $\rho = 0.5$ & 0.049 & 1.0 & 0.847 & 0.690 & 0.157 \\
            & $\rho = 0.9$ & 0.047 & 0.860 & 0.224 & 0.163 & 0.067 \\
            \addlinespace
            \multirow{2}{*}{Two-way clustering} & $n_1 = 40$, $n_2 = 50$ & 0.044 & 1.0 & 0.949 & 0.856 & 0.211 \\
            & $n_1 = 100$, $n_2 = 20$ & 0.058 & 1.0 & 0.953 & 0.839 & 0.232 \\
            \bottomrule
        \end{tabular}
    }
    \caption{Empirical rejection probability of the specification test with cross-sectionally dependent data.}
    \label{tab:spec test}
\end{table}

It can be seen from the table that when the null hypothesis \(\mathbb{H}_{0}\) is true, the empirical probability of the test statistic rejecting the null hypothesis is close to the nominal level of 0.05, even when we allow for cross-sectional dependence. When the alternative hypothesis \(\mathbb{H}_{1}\) is true, the power of the test statistic depends on the deviation from the null as well as the cross-sectional dependence structure. In general when the deviation from the null is large, then the power of the test is reasonably good. When the deviation gets closer to being local, the effect of cross-sectional dependence is more noticeable. With stronger cross-sectional dependence, the power generally decreases, as we use the consistent variance estimator in \autoref*{thm:degenerate u variance} which requires weak dependence and sparsity. In this case, perhaps we could consider more sophisticated variance estimators that take into account the cross-sectional dependence structure.

\subsection{Empirical application: Airbnb pricing}

We apply the nonparametric specification test to a dataset on Airbnb listings in Dublin to test whether the log of listing prices is linear in the log-distance to Dublin city center, conditional on a set of listing and host characteristics.
Airbnb listings are likely to exhibit local spatial dependence because nearby listings may share unobserved neighbourhood quality, common demand shocks, and local amenities.
The data are obtained from Inside Airbnb\footnote{The data can be downloaded from \url{https://insideairbnb.com}} and are based on a Dublin extract that contains only the quarterly listings snapshot scraped on September 16, 2025.
We restrict attention to listings classified as ``Entire home or apartment.'' After retaining observations with positive prices, the sample contains 2,958 listings.
Excluding observations with missing regression controls yields a final testing sample of 2,494 listings.

The outcome variable is \(Y_i=\log(\text{price}_i)\), and the main regressor of interest is the log-distance to the city center, as in the literature \citep[for example,][]{combes2019CostsAgglomeration,gupta2022FlatteningCurve}.
Specifically, we define \(Z_i=\log(d_i+0.25)\), where \(d_i\) denotes the straight-line distance, measured in kilometers, from listing \(i\) to a central Dublin reference point at \((53.35^\circ\text{N},\; 6.26^\circ\text{W})\), which is the location of the Spire of Dublin.
We add the small constant $0.25$, following \cite{gupta2022FlatteningCurve}, to regularize the transformation for listings located very close to the city center.
Robustness checks with alternative distance transformations are reported in the Appendix, and the main conclusions are not sensitive to this choice.

The null hypothesis is \(H_0:\ \mathbb{E}[Y_i\mid Z_i,W_i]=\alpha+\beta Z_i+W_i'\delta\), where \(W_i\) is a vector of observable listing and host characteristics.
We estimate the null model by OLS and then compute the proposed test statistic with Epanechnikov kernel over \(Z_i\). As a benchmark choice, we use the rule-of-thumb bandwidth \(b_0=1.06\,\hat{\sigma}(Z)\,n^{-1/5}\), which equals \(b_0=0.212\) in our sample.

Table~\ref{tab:dublin_main} reports the OLS estimation results (Panel A) and the specification test statistics (Panel B). In Panel B, we report the benchmark test statistic computed as in \autoref{eq:test stats} together with its asymptotic $p$-value under the simple variance estimator. Because Airbnb listings are likely to be spatially dependent, we also report results from dependence-robust bootstrap procedures. We implemented a neighbourhood block wild bootstrap, grid block wild bootstraps, and \emph{Leucht-style} spatial dependent multiplier bootstraps evaluated at several dependence scales.
The neighbourhood and grid procedures are in the spirit of wild and block bootstrap methods for heteroskedastic and dependent data \cite{mammen1993BootstrapWild,zhu2007BootstrappingEmpirical,cameron2011RobustInference}. The multiplier procedure is instead motivated by \cite{leucht2013DependentWild}, although our spatial implementation is an adaptation to cross-sectional proxy distance rather than a literal application of their theorem.
The residual dependence diagnostics reported in the Appendix indicate that dependence is local. In the Dublin dataset, the mean residual cross-product within 1 km is about 0.008, while beyond 2 km it is essentially zero. This pattern informs our choice of the hyperparameters for the bootstrap procedures.

First, we compute a neighbourhood block wild bootstrap. Let \(g(i)\) denote the neighbourhood block of listing \(i\), based on the variable \textit{neighbourhood}. In each bootstrap replication, we draw a Rademacher multiplier \(\xi_{g(i)}\in\{-1,+1\}\), and the bootstrap outcome is \(Y_i^*=\hat m_0(X_i)+\xi_{g(i)}\hat u_i\), where \(\hat m_0(X_i)\) is the fitted value under the null. We then refit the same null model using \(Y_i^*\), obtain bootstrap residuals \(\hat u_i^*\), and recompute the studentized bootstrap statistic. The bootstrap \(p\)-value is the empirical upper-tail probability of \(T_n^*(b)\) relative to the observed \(T_n(b)\). In the Dublin file, however, the variable \textit{neighbourhood} yields only four non-missing blocks, so this procedure should be interpreted only as a coarse diagnostic.

Second, we compute finer grid block wild bootstraps. The bootstrap construction is identical, namely \(Y_i^*=\hat m_0(X_i)+\xi_{g(i)}\hat u_i\), but the block indicator \(g(i)\) is now defined by projected square grid cells. This gives a much finer partition of space: the 0.5 km grid yields 713 blocks and the 1.0 km grid yields 359 blocks.

Third, we report a \emph{Leucht-style} spatial dependent multiplier bootstrap. Unlike the two block bootstraps, this procedure does not resample the outcomes. Instead, it applies listing-level multipliers \(\xi_i\) whose dependence decays with projected geographic distance, with the decay controlled by the scale parameter \(\ell\). The bootstrap statistic is computed analogously to the original test statistic, but with \(I_n^{*}(b) = \sum_i \sum_{k\neq i}b^{-1/2}\hat u_i\hat u_k K\!\left(\frac{Z_i-Z_k}{b}\right)\xi_i\xi_k\), where the multiplier field is generated as follows. Let \(d(i,j)\) denote projected geographic distance, let \(\eta_j \overset{iid}{\sim} N(0,1)\), and define
\[
    \xi_i=\sum_{j\in \mathcal N_i(\ell)} w_{ij}(\ell)\eta_j,
    \qquad
    w_{ij}(\ell)=
    \frac{\rho\!\left(d(i,j)/\ell\right)}
    {\left(\sum_{m\in \mathcal N_i(\ell)}\rho\!\left(d(i,m)/\ell\right)^2\right)^{1/2}},
\]
where \(\mathcal N_i(\ell)=\{j:\rho(d(i,j)/\ell)>0\}\). In the reported calculations, we use the Bartlett decay function \(\rho(u)=\max\{1-u,0\}\), so only listings within distance \(\ell\) receive positive weight. This creates a spatially structured perturbation field. We report the results for \(\ell\in\{0.1,0.25,0.5,1.0\}\) km in \autoref{tab:dublin_main}.

Table~\ref{tab:dublin_main} reports the OLS estimates under the null hypothesis together with the corresponding specification test results. In the Dublin dataset, both the asymptotic test and all dependence-robust bootstrap procedures yield \(p\)-values well above conventional significance levels.
For Dublin listings, these findings suggest that a linear model provides an adequate approximation to the relationship between price and location in logarithms. The non-rejection is consistent across all dependence-robust procedures.
Additional robustness checks, reported in the Appendix, corroborate the same qualitative conclusion.

\begin{table}[!htbp]
    \small
    \centering
    \caption{Airbnb pricing in Dublin: OLS estimation and specification test}
    \label{tab:dublin_main}
    \begin{tabular}{lrrrr}
        \hline
        Variable & Estimate & Std. err. & $t$-stat & $p$-value \\
        \hline
        \multicolumn{5}{l}{\rule{0pt}{2.6ex}\textbf{Panel A. OLS regression under $H_0$}} \\
        \multicolumn{5}{l}{\textit{Property characteristics}} \\
        \hspace{1em}Accommodates (log) & 0.750 & 0.038 & 19.983 & 0.000 \\
        \hspace{1em}Bedrooms & 0.082 & 0.016 & 5.277 & 0.000 \\
        \hspace{1em}Bathrooms & 0.065 & 0.013 & 4.878 & 0.000 \\
        \hspace{1em}Beds & -0.020 & 0.008 & -2.465 & 0.014 \\
        \hspace{1em}Minimum nights (log) & -0.121 & 0.012 & -9.895 & 0.000 \\
        \multicolumn{5}{l}{\textit{Host characteristics}} \\
        \hspace{1em}Superhost & 0.078 & 0.020 & 3.982 & 0.000 \\
        \hspace{1em}Identity verified & -0.033 & 0.049 & -0.681 & 0.496 \\
        \hspace{1em}Instant bookable & 0.007 & 0.021 & 0.311 & 0.756 \\
        \hspace{1em}Response rate & 0.002 & 0.0005 & 4.576 & 0.000 \\
        \hspace{1em}Acceptance rate & -0.001 & 0.0004 & -1.758 & 0.079 \\
        \multicolumn{5}{l}{\textit{Reviews}} \\
        \hspace{1em}Number of reviews (log) & -0.029 & 0.008 & -3.782 & 0.000 \\
        \hspace{1em}Reviews per month & -0.043 & 0.005 & -9.139 & 0.000 \\
        \hspace{1em}Rating & 0.125 & 0.026 & 4.851 & 0.000 \\
        \hspace{1em}Rating missing & 0.043 & 0.031 & 1.382 & 0.167 \\
        \multicolumn{5}{l}{\textit{Other}} \\
        \hspace{1em}Listings count (log) & -0.021 & 0.008 & -2.673 & 0.008 \\
        \hspace{1em}$Z$: Log-distance to city center & -0.151 & 0.010 & -15.555 & 0.000 \\
        \hline
        \multicolumn{5}{l}{\rule{0pt}{2.8ex}\textbf{Panel B. Specification test}} \\
        \multicolumn{2}{l}{Inference method} & \multicolumn{2}{l}{Test Statistic / Hyperparameters} & $p$-value \\
        \hline
        \multicolumn{2}{l}{\hspace{1em}Test Statistic} & \multicolumn{2}{l}{$T_n = 0.855$} & 0.196 \\
        \multicolumn{2}{l}{\hspace{1em}Neighbourhood block bootstrap} & \multicolumn{2}{l}{4 neighbourhoods} & 0.498 \\
        \multicolumn{2}{l}{\hspace{1em}Grid block bootstrap} & \multicolumn{2}{l}{Grid width 0.5 km} & 0.552 \\
        \multicolumn{2}{l}{\hspace{1em}Grid block bootstrap} & \multicolumn{2}{l}{Grid width 1.0 km} & 0.744 \\
        \multicolumn{2}{l}{\hspace{1em}Leucht-style multiplier bootstrap} & \multicolumn{2}{l}{$\ell = 0.1$ km} & 0.416 \\
        \multicolumn{2}{l}{\hspace{1em}Leucht-style multiplier bootstrap} & \multicolumn{2}{l}{$\ell = 0.25$ km} & 0.676 \\
        \multicolumn{2}{l}{\hspace{1em}Leucht-style multiplier bootstrap} & \multicolumn{2}{l}{$\ell = 0.5$ km} & 0.846 \\
        \multicolumn{2}{l}{\hspace{1em}Leucht-style multiplier bootstrap} & \multicolumn{2}{l}{$\ell = 1.0$ km} & 0.928 \\
        \hline
    \end{tabular}
    \vspace{0.2em}

    \parbox{0.92\linewidth}{\footnotesize
    Notes: The sample consists of 2,494 Dublin Airbnb listings classified as ``Entire home or apartment'' with positive prices and non-missing controls. In Panel A, the dependent variable is \(Y_i=\log(\text{price}_i)\), and the regressor of interest is \(Z_i=\log(d_i+0.25)\), where \(d_i\) is the straight-line distance in kilometers from the Spire of Dublin. Panel A reports OLS estimates under the null linear specification. The matched Dublin regression includes property-type fixed effects (15 estimated dummies, omitted from the table). The OLS regression has \(R^2=0.5662\). Panel B reports the observed value of the specification test statistic together with the asymptotic \(p\)-value based on the simple variance estimator. All reported bootstrap \(p\)-values use the studentized statistic \(T_n\).}
\end{table}

\section{Conclusion}\label{s1:conclusion}
This paper develops normal approximation results for second-order U-statistics when the underlying observations exhibit cross-sectional dependence. Using Stein's coupling method together with a decomposable argument, we obtain Wasserstein bounds for both non-degenerate and degenerate U-statistics. The results extend the classical theory of U-statistics beyond independent and time-series settings, and also complement recent limit theorems for sums of cross-sectionally or network-dependent random variables.

The main difficulty is that U-statistics combine two sources of dependence: the dependence generated by the nonlinear kernel and the dependence already present in the underlying sample. Our results show that asymptotic normality can still be obtained under conditions on the mixing rate, the sparsity of the dependence structure, and the moments of the kernel. For non-degenerate U-statistics, the projection component is dominant and feasible inference can be conducted using a HAC-type variance estimator when the underlying distance is observed. For degenerate U-statistics, we provide sufficient conditions under which normal approximation remains valid and discuss variance estimation in the spirit of existing specification-testing procedures.

We also illustrate the usefulness of the theory through a nonparametric specification test that allows for cross-sectional dependence. The empirical application to Airbnb pricing demonstrates how the proposed framework can be used in settings where spatial or network dependence is a natural concern.

Several directions remain for future research. First, it would be useful to develop uniform laws of large numbers for U-statistics under cross-sectional dependence. Such results would be relevant for estimators based on pairwise comparisons, such as the pairwise-difference estimators studied by \cite{honore1997PairwiseDifference}. Second, bootstrap inference for U-statistics under cross-sectional dependence is an important open direction. While bootstrap methods are well understood in classical settings for non-degenerate U-statistics and sample averages, and network block bootstrap methods have been proposed for sample means under network dependence by \cite{kojevnikov2021BootstrapNetwork}, the bootstrap theory for degenerate U-statistics under general cross-sectional dependence appears substantially more challenging. We leave this problem for future work.

\bibliographystyle{apalike}
\bibliography{lib.bib}

\newpage
\appendix
\begin{center}
    \Large \textbf{Supplementary Material for ``Normal Approximation for U-Statistics with Cross-Sectional Dependence''}
\end{center}

\section{Proofs}\label{sec:proofs}
\subsection{Lemmas}The following lemma, based on Stein's Coupling \citep{chen2010SteinCouplings} and the decomposable methods \citep{barbour1989CentralLimit}, is a variant of the Stein's method and the key to our analysis.
\begin{lemma}\label{lemma:stein}
    Given a random variable \(W\in L^{2}\pqty{\Prob}\), suppose there exist square-integrable random variables \((G_{i},D_{ij}, D_{ij}')\) for \(i \in \mathcal{I}\) and \(j \in \mathcal{J}_{i} \subset \mathcal{I}\), such that \(\sum_{i\in \mathcal{I}} G_{i} = W\), then we have the following bound on the Wasserstein distance between \(W_{n}\) and \(N(0,1)\):
    \begin{equation}
        \dist_{W}(W_{n}, N(0,1)) \leq  2 A_{1} + \sqrt{\frac{2}{\pi}} A_{2} +  \sqrt{\frac{2}{\pi}}A_{3} +  2A_{4} + A_{5},
    \end{equation}
    where
    \begin{align*}
        A_{1} &= \Ep\sum_{i}\abs{\Ep\bqty{ G_{i}\mid W - D_{i}}},
        &A_{2} &= \Ep\abs{1 - \sum_{i}\sum_{j \in \mathcal{J}_{i}} M_{ij}},\\
        A_{3} &= \Ep\sum_{i}\sum_{j \in \mathcal{J}_{i}}\abs{\Ep \pqty{M_{ij}- G_{i}D_{ij} \mid W - D_{ij}'}},
        &A_{4} &= \Ep\sum_{i}\sum_{j \in \mathcal{J}_{i}} \abs{\pqty{M_{ij} - G_{i}D_{ij}}D_{ij}'}, \\
        A_{5} &= \sum_{i} \Ep\abs{G_{i}D_{i}^{2}},
    \end{align*}
    with \(D_{i} = \sum_{j\in \mathcal{J}_{i}} D_{ij}\), and \(M_{ij} = \Ep \bqty{G_{i} D_{ij}}\).
\end{lemma}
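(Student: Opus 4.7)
The plan is to carry out a standard Stein's method argument tailored to the ingredients $(G_i, D_{ij}, D_{ij}')$ and read off the five error terms. For any test function $g \in \mathbb{L}_1$, let $f = f_g$ denote the bounded solution to Stein's equation $f'(w) - w f(w) = g(w) - \Ep g(Z)$, where $Z \sim N(0,1)$. From the classical bounds for this solution one has $\|f\|_\infty \leq 2$, $\|f'\|_\infty \leq \sqrt{2/\pi}$, and $\|f''\|_\infty \leq 2$. It therefore suffices to bound $|\Ep[f'(W) - W f(W)]|$ by the stated quantity and take the supremum over $g \in \mathbb{L}_1$.

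First, using $\sum_i G_i = W$, I would write $\Ep[W f(W)] = \sum_i \Ep[G_i f(W)]$ and Taylor expand $f$ at $W$ with step $-D_i$ to obtain $f(W) = f(W - D_i) + D_i f'(W) + \rho_i$, with remainder $|\rho_i| \leq \frac{1}{2}\|f''\|_\infty D_i^2 \leq D_i^2$. This produces
\begin{equation*}
\Ep[W f(W)] = \sum_i \Ep[G_i f(W - D_i)] + \sum_i \sum_{j \in \mathcal{J}_i} \Ep[G_i D_{ij} f'(W)] + \sum_i \Ep[G_i \rho_i].
\end{equation*}
The first sum is controlled by the tower property together with $\|f\|_\infty \leq 2$, yielding the contribution $2 A_1$; the Taylor remainder is controlled by $\|f''\|_\infty/2 \cdot \sum_i \Ep|G_i D_i^2| \leq A_5$.

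The remaining task is to match the middle sum with $\Ep[f'(W)]$. I would write
\begin{equation*}
\Ep[f'(W)] - \sum_i \sum_{j \in \mathcal{J}_i}\Ep[G_i D_{ij} f'(W)] = \Ep[f'(W)]\Bigl(1 - \sum_i \sum_{j \in \mathcal{J}_i} M_{ij}\Bigr) + \sum_i \sum_{j \in \mathcal{J}_i}\Ep[(M_{ij} - G_i D_{ij}) f'(W)];
\end{equation*}
the first piece yields $\sqrt{2/\pi}\, A_2$ via $\|f'\|_\infty$. For the second, introduce $D_{ij}'$ and split $f'(W) = f'(W - D_{ij}') + \bigl(f'(W) - f'(W - D_{ij}')\bigr)$. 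The portion with $f'(W) - f'(W - D_{ij}')$ uses the Lipschitz bound $\|f''\|_\infty \leq 2$ of $f'$ to produce $2 A_4$; the portion with $f'(W - D_{ij}')$ applies the tower property to pull $\Ep[M_{ij} - G_i D_{ij} \mid W - D_{ij}']$ outside, and then bounds by $\|f'\|_\infty$ to produce $\sqrt{2/\pi}\, A_3$. Assembling the five pieces gives the claimed bound.

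The argument is mechanical once the decomposition is set up, so the only real design choice, and hence the main ``obstacle'', is the twin Taylor expansion: once at $W$ with step $-D_i$ (which exploits the local conditioning structure and produces $A_1$ and $A_5$), and once involving the pair-level variable $D_{ij}'$ (which matches $G_i D_{ij}$ with its mean $M_{ij}$ and produces $A_3$ and $A_4$). A subtlety worth flagging is that the $2A_1$ estimate relies on the universal sup-norm bound $\|f\|_\infty \leq 2$, which is available here only because $g$ is Lipschitz; without this bound the naive estimate of $|\Ep[G_i f(W - D_i)]|$ would not close.
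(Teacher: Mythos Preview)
Your proposal is correct and follows essentially the same route as the paper's proof: the paper also starts from the Stein-equation bounds $\|f\|\le 2$, $\|f'\|\le\sqrt{2/\pi}$, $\|f''\|\le 2$, Taylor expands $f(W)-f(W-D_i)$ to produce the $A_1$ and $A_5$ terms, then splits $f'(W)\bigl[1-\sum_{i,j}G_iD_{ij}\bigr]$ using $M_{ij}$ and a second Taylor step $f'(W)=f'(W-D_{ij}')+D_{ij}'f''(\tilde W_{ij}')$ to obtain $A_2$, $A_3$, and $A_4$. The only cosmetic difference is that the paper manipulates $f'(W)-Wf(W)$ as a single expression with explicit intermediate points, whereas you organize the bookkeeping around $\Ep[Wf(W)]$ and $\Ep[f'(W)]$ separately; the decomposition and the resulting constants are identical.
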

This lemma reduces the problem of estimating the Wasserstein distance to that of estimating moments as soon as we have constructed the coupling variables \((G_{i}, D_{ij}, D_{ij}')\). It remains to construct suitable coupling variables for U-statistics with cross-sectionally dependent samples. It is a result of Theorem 2.1 in \cite{chen2010SteinCouplings}, but we provide a proof here for the special cases of interest to clarify the items' dependence on the indices.

The proof is based on the following lemma, see \cite{ross2011FundamentalsSteins} Proposition 2.4 and Lemma 2.5.
\begin{lemma}\label{lemma:stein-lemma}
    Let \(Z\sim N(0,1)\) and \(W\) be a random variable. Given a class of absolutely continuous test functions \(\mathbb{G}\), for any \(g\in \mathbb{G}\), there exists \(f_{g}\) that solves the equation:
    \begin{equation*}
        f_{g}'(w) - wf_{g}(w) = g(w) - \Ep g(Z),
    \end{equation*}
    such that,
    \begin{equation*}
        \norm{f_{g}} \leq 2\norm{g'}, \quad \norm{f'_{g}}\leq \sqrt{\frac{2}{\pi}}\norm{g'}, \quad \norm{f''_{g}} \leq 2\norm{g'},
    \end{equation*}
    where \(\norm{f} = \sup_{x\in \mathcal{X}}\abs{f(x)}\), \(\mathcal{X}\) being the domain of \(f\). As a result,
    \begin{equation}\label{eq:stein - lemma}
        d_{\mathbb{G}}(W,Z) := \sup_{g\in \mathbb{G}}\abs{\Ep g(W) - \Ep g(Z)}  = \sup_{g\in \mathbb{G}} \abs{\Ep\pqty{f_{g}'(W) - Wf_{g}(W)}}.
    \end{equation}
\end{lemma}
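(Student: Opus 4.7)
The plan is to split into two tasks: (i) solve the Stein equation explicitly and bound the sup-norms of $f_g$ and its first two derivatives in terms of $\|g'\|$; (ii) derive the expectation identity for the distance.

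For (i), I would first reduce to the mean-zero case by replacing $g$ with $\tilde g := g - \Ep g(Z)$, which leaves both $g'$ and the Stein equation unchanged. The Stein equation is then first-order linear with integrating factor $e^{-w^2/2}$, giving the unique bounded solution
\begin{equation*}
f_g(w) = e^{w^2/2} \int_{-\infty}^{w} \tilde g(t) e^{-t^2/2} \, dt = -e^{w^2/2} \int_{w}^{\infty} \tilde g(t) e^{-t^2/2} \, dt,
\end{equation*}
with the two forms coinciding because $\Ep \tilde g(Z) = 0$. Direct differentiation then yields explicit formulas for $f_g'$ and, after one more differentiation combined with the Stein equation itself, $f_g''$.

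The key technical step is bounding $\|f_g\|$, $\|f_g'\|$, $\|f_g''\|$ by constants times $\|g'\|$. Using the representation $\tilde g(t) = \int \pqty{g(t) - g(s)}\phi(s)\,ds$ and the Lipschitz bound $|g(t) - g(s)| \leq \|g'\|\,|t-s|$, one obtains the pointwise estimate $|\tilde g(t)| \leq \|g'\|\,\Ep|t - Z|$. Substituting this into whichever of the two integral representations has the smaller Gaussian tail for the given $w$, and then evaluating the resulting integrals via standard Gaussian tail estimates (in particular, Mill's ratio), yields $\|f_g\| \leq 2\|g'\|$. The bound $\|f_g'\| \leq \sqrt{\pi/2}\,\|g'\|$ follows from a parallel but more delicate computation, again handling positive and negative $w$ by choosing the appropriate tail. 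Finally, differentiating the Stein equation gives $f_g'' = \tilde g' + f_g + w f_g'$, and combining $\|\tilde g'\| \leq \|g'\|$ with the previous bounds and another Mill's-ratio estimate on $w f_g'(w)$ produces $\|f_g''\| \leq 2\|g'\|$.

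For (ii), evaluating the Stein equation at $w = W$ and taking expectations immediately gives $\Ep\bqty{f_g'(W) - W f_g(W)} = \Ep g(W) - \Ep g(Z)$; taking absolute values and the supremum over $g \in \mathbb{G}$ yields (\ref{eq:stein - lemma}). The main obstacle is obtaining the sharp constants $2$ and $\sqrt{\pi/2}$ in the derivative bounds; this requires the careful tail-splitting and Mill's-ratio arguments above rather than crude triangle-inequality estimates, and it is precisely these sharp constants that propagate into the coefficients in \autoref{lemma:stein}.
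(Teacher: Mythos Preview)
Your proposal is correct and follows the standard textbook argument for the Stein equation bounds. Note, however, that the paper does not actually prove this lemma: it simply cites \cite{ross2011FundamentalsSteins}, Proposition~2.4 and Lemma~2.5, and uses the result as a black box. What you have sketched is essentially the proof found in that reference (and in Chen--Goldstein--Shao), so there is no substantive divergence in approach---you are supplying the details the paper deliberately omits. One small caution: in your last step, writing $f_g'' = \tilde g' + f_g + w f_g'$ and then bounding term by term will not yield the sharp constant~$2$; the standard argument instead uses an integral representation of $f_g''$ directly (see Lemma~2.4 in Chen--Goldstein--Shao), so if you pursue this route be sure to use that refinement rather than the naive triangle inequality.
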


\begin{proof}[Proof of \autoref{lemma:stein}]
    As a result of \autoref{lemma:stein-lemma},
    \begin{equation*}
        \dist_{W}(W, N(0,1)) = \sup_{g\in \mathbb{L}_{1}}\abs{\Ep\bqty{f_{g}'(W) - Wf_{g}(W)}}.
    \end{equation*}
    From \cite{chen2010SteinCouplings}, we know \(f_{g}\) is twice differentiable and \(f_{g}'\) is continuous. For any twice differentiable \(f\) with continuous derivative,
    \begin{align*}
        f'(W) - Wf(W) &= f'(W) - \sum_{i}G_{i}\bqty{f(W) - f(W_{i}') + f(W_{i}')} \\
        &= - \sum_{i} G_{i} f(W_{i}') + f'(W) \bqty{1 - \sum G_{i}D_{i}} + \frac{1}{2}\sum_{i} G_{i}D_{i}^{2}f''(\tilde{W}_{i}'),
    \end{align*}
    where we construct \(G_{i}\) such that \(\sum_{i\in \mathcal{I}} G_{i} = W\), \(D_{i} = W - W_{i}'\) and \(\tilde{W}_{i}'\) is some point between \(W\) and \(W_{i}'\). We refine the analysis of the second term by further decomposing each \(D_{i}\), that is we construct \(D_{ij}\) such that \(\sum_{j} D_{ij} = D_{i}\), and let \(M_{ij} = \Ep G_{i} D_{ij}\),
    \begin{align*}
        f'(W)\bqty{1 - \sum_{i} G_{i}D_{i}} &= f'(W) \bqty{1 - \sum_{i,j} M_{ij} + \sum_{i,j} \pqty{M_{ij} - G_{i}D_{ij}}} \\
        &= f'(W) \bqty{1 - \sum_{i,j} M_{ij}} + \sum_{i,j} f'(W)\pqty{M_{ij} - G_{i}D_{ij}}
    \end{align*}

    For each \(i \in \mathcal{I}\) and \(j \in \mathcal{J}_{i}\), we then construct \(W_{ij}'\) and \(D_{ij}' = W - W_{ij}'\), and with \(f'(W) = f'(W_{ij}') + f''\pqty{\tilde{W}_{ij}'} D_{ij}'\), we have,
    \begin{align*}
        \sum_{i,j} f'(W)\pqty{M_{ij} - G_{i}D_{ij}} = {\sum_{ij}(M_{ij} - G_{i}D_{ij})\bqty{f'(W_{ij}') + {D}_{ij}' f''(\tilde{W}_{ij}')}}.
    \end{align*}
    Putting together we have,
    \begin{align*}
        f'(W) - Wf(W)
        &= {-\sum_{i} G_{i}f(W_{i}')}
        + {f'(W)(1 - \sum_{ij}M_{ij}) }\nonumber \\
        &\quad + {\sum_{ij} f'(W_{ij}')(M_{ij} - G_{i}D_{ij}) }
        + {\sum_{ij}(M_{ij} - G_{i}D_{ij}){D}_{ij}' f''(\tilde{W}_{ij}')} \nonumber\\
        &\quad + {\frac{1}{2} \sum_{i} G_{i}D_{i}^{2} f''(\tilde{W}_{i})}.
    \end{align*}

    We know from \cite{ross2011FundamentalsSteins},
    \begin{equation*}
        \norm{f_{g}} \leq 2 \norm{g'}, \quad \norm{f_{g}'}\leq \sqrt{\frac{2}{\pi}} \norm{g'} , \quad \norm{f_{g}''}\leq 2 \norm{g'}
    \end{equation*}
    and since \(g\) is Lipschitz with constant \(1\), \(\norm{g'}\leq 1\).
    For the first term we have,
    \begin{equation*}
        \abs{\Ep\sum_{i} G_{i} f_{g}\pqty{W_{i}'}} \leq \Ep \sum_{i}\abs{f_{g}\pqty{W_{i}'} \Ep^{W_{i}'}G_{i}} \leq \sum_{i}\Ep \norm{f_{g}} \abs{\Ep^{W_{i}'}G_{i}} \leq 2\sum_{i}\Ep\abs{\Ep^{W_{i}'} G_{i}}
    \end{equation*}
    by taking conditional expectation and moving absolute value inside. The other terms can be bounded in a similar way.
\end{proof}

We will apply \autoref{lemma:stein} to gauge the error we make when approximating the distribution of U-statistics with a normal distribution by constructing the coupling variables \((G_{i}, D_{ij}, D_{ij}')\) for non-degerate and degenerate U-statistics respectively, when the underlying sample is cross-sectionally dependent.

Our estimates of the moments and conditional moments are based on the following coupling result, famously known as Berbee's lemma. It allows us to replace a dependent random variable with an independent copy and to bound the error probability with the \(\beta\)-mixing coefficients.

\begin{lemma}[Berbee's Coupling Lemma]\label{lemma:berbee}
    Let \((\Omega, \mathcal{F}, P)\) be a probability space and \(\mathcal{A}\) be a sub-\(\sigma\)-algebra of \(\mathcal{F}\), and suppose \(X: (\Omega, \mathcal{F}, P) \to \mathcal{X}\) is a random variable taking values in a Polish space \(\mathcal{X}\), then there exists a random variable \(\tilde{X}\) which is independent of \(\mathcal{A}\) and has the same law as \(X\) such that
    \begin{equation*}
        P\pqty{X \neq \tilde{X}} = \beta\pqty{\mathcal{A}, \sigma\pqty{X}}.
    \end{equation*}
\end{lemma}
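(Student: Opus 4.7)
The plan is to prove Berbee's lemma by the classical maximal coupling construction, carried out pointwise in $\omega$ on a suitably enlarged probability space. The key observation is that the $\beta$-mixing coefficient equals the expected total variation distance between the conditional and unconditional laws of $X$, and this lower bound on the disagreement probability is achieved exactly by maximal coupling.

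First, since $\mathcal{X}$ is Polish, the disintegration theorem provides a regular conditional probability $Q(\omega, \cdot) := P(X \in \cdot \mid \mathcal{A})(\omega)$. Writing $\mu := P^{X}$ for the marginal law of $X$, I would verify
\begin{equation*}
    \beta(\mathcal{A}, \sigma(X)) \;=\; \int_{\Omega} d_{\mathrm{TV}}\bigl(Q(\omega, \cdot), \mu\bigr) \, dP(\omega),
\end{equation*}
where $d_{\mathrm{TV}}(\mu_{1},\mu_{2}) = \sup_{B} |\mu_{1}(B) - \mu_{2}(B)|$. This follows directly from the paper's partition definition: fixing a $\sigma(X)$-partition $\{\{X \in B_{j}\}\}$ and refining the $\mathcal{A}$-partition makes $P(X \in B_{j} \mid \mathcal{A}_{\pi})$ converge to $Q(\omega, B_{j})$, so $\tfrac{1}{2}\sum_{j} |Q(\omega, B_{j}) - \mu(B_{j})|$ appears inside the expectation; taking supremum over $\{B_{j}\}$ inside via bounded convergence yields the TV distance.

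Next, I would enlarge the space to $(\Omega \times [0,1], \mathcal{F} \otimes \mathcal{B}_{[0,1]}, P \otimes \mathrm{Leb})$ carrying $U \sim \mathrm{Unif}[0,1]$ independent of $\mathcal{F}$, and apply pointwise maximal coupling between $Q(\omega, \cdot)$ and $\mu$. Using the Lebesgue decomposition $Q(\omega, \cdot) = (Q(\omega, \cdot) \wedge \mu) + Q_{\perp}(\omega, \cdot)$ and $\mu = (Q(\omega, \cdot) \wedge \mu) + \mu_{\perp}(\omega, \cdot)$, with $Q_{\perp}, \mu_{\perp}$ mutually singular of common mass $d_{\mathrm{TV}}(Q(\omega, \cdot), \mu)$, the recipe is: set $\tilde X := X$ on the event (measurable in $(\omega, U)$) that $X$ lies in the support of the common part, and otherwise sample $\tilde X$ from $\mu_{\perp}/\|\mu_{\perp}\|$ using $U$. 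Crucially, the conditional law of $\tilde X$ given $\mathcal{A}$ is $(Q \wedge \mu) + \mu_{\perp} = \mu$ for \emph{every} $\omega$, which forces $\tilde X \perp \mathcal{A}$, not merely that the marginals match. Integrating the per-$\omega$ disagreement probability gives
\begin{equation*}
    P(X \neq \tilde X) \;=\; \int_{\Omega} d_{\mathrm{TV}}\bigl(Q(\omega, \cdot), \mu\bigr) \, dP(\omega) \;=\; \beta(\mathcal{A}, \sigma(X)).
\end{equation*}

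The main obstacle is joint measurability of the coupling in $\omega$: one must show $\omega \mapsto (Q(\omega, \cdot) \wedge \mu)(B)$ is $\mathcal{A}$-measurable for each Borel $B$, and that the two-stage sampling can be realized as a single $\mathcal{F} \otimes \mathcal{B}_{[0,1]}$-measurable map of $(\omega, U)$. This is exactly where the Polish assumption is used — a Borel isomorphism with $[0,1]$ reduces everything to measurable Radon–Nikodym derivatives and inverse-CDF sampling against $U$. Once this technical point is handled, the remainder of the argument is essentially soft: identify $\beta$ with expected TV distance, then saturate the TV bound via the pointwise maximal coupling.
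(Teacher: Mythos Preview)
The paper does not actually prove this lemma; its entire proof reads ``See \cite{rio2017AsymptoticTheory} Lemma 5.1 and \cite{doukhan1994Mixing}.'' Your sketch is the standard argument behind those references: identify $\beta(\mathcal{A},\sigma(X))$ with $\int d_{\mathrm{TV}}(Q(\omega,\cdot),\mu)\,dP(\omega)$ and then saturate that bound by a pointwise maximal coupling, using the Polish assumption to secure regular conditional distributions and joint measurability. So your approach is correct and is in fact more informative than what the paper provides.

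One small imprecision worth tightening if you write it out in full: the phrase ``$X$ lies in the support of the common part'' is not quite the right event. The maximal coupling uses a rejection step with the auxiliary uniform $U$ against the Radon--Nikodym derivatives, e.g.\ accept $\tilde X = X$ when $U\,h_\omega(X) \le g(X)$ (with $h_\omega = dQ(\omega,\cdot)/d\nu$, $g = d\mu/d\nu$ for a common dominating $\nu$), and otherwise resample from the normalized residual of $\mu$. You clearly have this in mind when you refer to ``two-stage sampling'' and the measurability obstacle, but the support language alone would not give the exact TV bound.
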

\begin{proof}
    See \cite{rio2017AsymptoticTheory} Lemma 5.1 and \cite{doukhan1994Mixing}.
\end{proof}

In the next lemma, we use the notation that for two sub-\(sigma\)-algebra \(\mathcal{F}_{1}, \mathcal{F}_{2}\subset \mathcal{F}\), \(\mathcal{F}_{1} \oplus \mathcal{F}_{2} = \sigma\pqty{\mathcal{F}_{1}\cup \mathcal{F}_{2}}\).
\begin{lemma}\label{lemma:independent}
    Suppose a sub-\(\sigma\)-algebra \(\mathcal{A}\) is independent of \(\mathcal{F}_{1} \oplus \mathcal{F}_{2}\), then
    \begin{equation*}
        \beta\pqty{\mathcal{F}_{1}, \mathcal{A} \oplus \mathcal{F}_{2}} \leq \beta(\mathcal{F}_{1}, \mathcal{F}_{2}).
    \end{equation*}
\end{lemma}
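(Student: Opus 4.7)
The plan is to work directly from the partition-based definition of the $\beta$-mixing coefficient and exploit the twofold independence of $\mathcal{A}$. The strategy is to reduce the supremum in $\beta(\mathcal{F}_1, \mathcal{A}\oplus \mathcal{F}_2)$ to partitions of the product form $\{A_l \cap B_m\}$ with $\{A_l\} \subset \mathcal{A}$ and $\{B_m\} \subset \mathcal{F}_2$, via a standard approximation by the algebra generated by $\mathcal{A} \cup \mathcal{F}_2$, and then collapse the $\mathcal{A}$-part of the sum using independence.

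For the key calculation, fix a finite partition $\{F_i\} \subset \mathcal{F}_1$ and a product partition $\{A_l \cap B_m\}$ as above. Since $\mathcal{A}$ is independent of $\mathcal{F}_1 \oplus \mathcal{F}_2$, which contains both $F_i \cap B_m$ and $B_m$, one immediately obtains
\[
P(F_i \cap A_l \cap B_m) = P(A_l)\, P(F_i \cap B_m), \qquad P(A_l \cap B_m) = P(A_l)\, P(B_m).
\]
Substituting and factoring out $P(A_l)$, together with $\sum_l P(A_l) = 1$, yields
\[
\sum_{i,l,m}\bigl|P(F_i \cap A_l \cap B_m) - P(F_i)P(A_l \cap B_m)\bigr| = \sum_{i,m} \bigl|P(F_i \cap B_m) - P(F_i)P(B_m)\bigr| \leq 2\beta(\mathcal{F}_1, \mathcal{F}_2),
\]
by the defining supremum on the right applied to $\{F_i\}, \{B_m\}$.

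For a general finite partition $\{C_k\}$ of $\Omega$ with $C_k \in \mathcal{A} \oplus \mathcal{F}_2$, I would approximate each $C_k$, for any $\epsilon > 0$, by a set $E_k$ in the algebra generated by $\mathcal{A} \cup \mathcal{F}_2$; such an $E_k$ is automatically a finite disjoint union of rectangles $A \cap B$. Collecting the finitely many $A$'s and $B$'s appearing across all $k$ produces partitions $\{A_l\} \subset \mathcal{A}$ and $\{B_m\} \subset \mathcal{F}_2$ whose product partition refines each $E_k$. Since passing to a refinement only enlarges the partition sum in the definition of $\beta$ (by the triangle inequality applied to $|P(F_i \cap C) - P(F_i)P(C)|$ summed over the sub-cells of $C$), the bound from the previous paragraph applies to the product partition and, up to an $O(\epsilon)$ error from $\sum_k P(C_k \triangle E_k)$, to $\{C_k\}$ itself. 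Taking $\epsilon \to 0$ and a supremum over $\{F_i\}$ and $\{C_k\}$ concludes the argument.

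The main obstacle is the bookkeeping in the approximation step: one has to pick the approximants $E_k$ compatibly so that a single product partition $\{A_l \cap B_m\}$ refines all of them simultaneously, with the approximation error scaling in a way that survives the summation over cells. This is routine but must be done with care. A more streamlined alternative would be to reformulate $\beta$ as the total-variation distance between the joint law and the product of marginals on a product space and invoke the identity $\|P_\mathcal{A} \otimes (\mu - \nu)\|_{TV} = \|\mu - \nu\|_{TV}$, which in fact delivers equality $\beta(\mathcal{F}_1, \mathcal{A} \oplus \mathcal{F}_2) = \beta(\mathcal{F}_1, \mathcal{F}_2)$, sharper than the stated inequality.
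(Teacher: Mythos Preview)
Your argument is correct: the factorisation step for product partitions is exactly right, the monotonicity of the partition sum under refinement is the correct tool, and the density of the algebra generated by $\mathcal{A}\cup\mathcal{F}_2$ in $\mathcal{A}\oplus\mathcal{F}_2$ handles the passage to general partitions. Your remark that the total-variation reformulation even yields equality is also valid.

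The paper takes a quite different, and much shorter, route: it invokes an existing subadditivity result for $\beta$-mixing (Theorem~1 in \cite{doukhan1994Mixing}, see also \cite{bradley2005BasicProperties}), which says that if $\mathcal{G}_1\oplus\mathcal{G}_2$ is independent of $\mathcal{H}_1\oplus\mathcal{H}_2$ then $\beta(\mathcal{G}_1\oplus\mathcal{H}_1,\mathcal{G}_2\oplus\mathcal{H}_2)\le\beta(\mathcal{G}_1,\mathcal{G}_2)+\beta(\mathcal{H}_1,\mathcal{H}_2)$. Applying this with $\mathcal{G}_1$ the trivial $\sigma$-algebra $\mathcal{C}=\{\emptyset,\Omega\}$, $\mathcal{G}_2=\mathcal{A}$, $\mathcal{H}_1=\mathcal{F}_1$, $\mathcal{H}_2=\mathcal{F}_2$ gives the claim immediately, since $\beta(\mathcal{C},\mathcal{A})=0$ and $\mathcal{C}\oplus\mathcal{F}_1=\mathcal{F}_1$. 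In effect, your direct computation reproves from scratch the special case of that cited theorem that is needed here; what you gain is a self-contained argument with no external reference (and the sharper conclusion of equality), while the paper's proof trades that for a one-line reduction to a known result.
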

\begin{proof}
    Let \(\mathcal{C} = \cqty{\emptyset, \Omega}\) denote the trivial \(\sigma\)-algebra, then \(\mathcal{C}\oplus \mathcal{A}\) is independent of \(\mathcal{F}_{1} \oplus \mathcal{F}_{2}\), and by Theorem 1 in \cite{doukhan1994Mixing}, also \cite{bradley2005BasicProperties},
    \begin{equation*}
        \beta(\mathcal{C}\oplus \mathcal{F}_{1}, \mathcal{A}\oplus \mathcal{F}_{2}) \leq \beta(\mathcal{F}_{1}, \mathcal{F}_{2}) + \beta(\mathcal{C}, \mathcal{A}),
    \end{equation*}
    and the result follows.
\end{proof}

We now formally define the notion of \(m\)-connectedness and \(m\)-profile, which are key to our analysis of the dependence structure of the observations indexed by \(\mathbf{i}\) and to our construction of the coupling variables.
Let \(\mathbf{i} = (i_{1}, \dots, i_{q})\) denote a vector of indices of length \(q \in \mathbb{N}^{+}\). We write \(i\in \mathbf{i}\) if \(i\) belongs to the set \(\cqty{i_{1}, \dots,i_{q}}\). Let \(\nb{i}{m} = \cqty{j\in \mathcal{I}_{n} : \dist\pqty{i,j} \leq m}\) be the \(m\)-neighbourhood of \(i\) for each \(i \in \mathcal{I}_{n}\).
If \(j\in \nb{i}{m}\), then \(i\in \nb{j}{m}\) and we write \(i \stackrel{m}{\leftrightarrow} j\), which is a reflexive and symmetric relation.
The main quantity we use to measure the sparsity of the index space \(\mathcal{I}_{n}\) depends on the following definition of an equivalence relation built from the relation \(\stackrel{m}{\leftrightarrow}\).
For a fixed \(m > 0\) and a given \(\mathbf{i}\in \mathcal{I}_{n}^{q}\), we say \(i\in \mathbf{i}\) and \(j \in \mathbf{i}\) are \textit{\(m\)-connected in \(\mathbf{i}\)}, written as \(i \stackrel{\mathbf{i},m}{\leftrightsquigarrow} j\), if there exist nodes \(i'_{1}, \dots, i_{p}' \in \mathbf{i}\), such that
\begin{equation*}
    i \stackrel{m}{\leftrightarrow} i'_{1} \stackrel{m}{\leftrightarrow} i'_{2} \stackrel{m}{\leftrightarrow} \cdots \stackrel{m}{\leftrightarrow} i'_{p} \stackrel{m}{\leftrightarrow} j.
\end{equation*}
Intuitively, \(i\) and \(j\) are \(m\)-connected in \(\mathbf{i}\) if we can go from \(i\) to \(j\) while moving at most \(m\) units at a time and using only intermediate nodes in \(\mathbf{i}\).

For each \(\mathbf{i} = (i_{1}, \dots, i_{q})\), \(m\)-connectedness defines an equivalence relation on the set \(\cqty{i_{1}, \dots, i_{q}}\), and the corresponding quotient set is \(\cqty{i_{1}, \dots, i_{q}}\slash \stackrel{\mathbf{i},m}{\leftrightsquigarrow} = \cqty{Q_{1}, \dots, Q_{S}}\), with \(S\) equal to the number of equivalence classes in \(\mathbf{i}\). We can then compute, for each \(s = 1,2,\dots, S\),
\begin{equation*}
    \Theta_{s}^{m}(\mathbf{i}) = \sum_{j = 1}^{q} \indicator{i_{j} \in Q_{s}} ,
\end{equation*}
which is the number of occurrences of indices in each \(Q_{s}\), counting duplicates, and \(\sum_{s = 1}^{S} \Theta_{s}^{m}(\mathbf{i}) = q\).
We will arrange \(\Theta_{s}^{m}\pqty{\mathbf{i}}\) in descending order such that \(\Theta_{1}^{m}\pqty{\mathbf{i}} \geq \Theta_{2}^{m}\pqty{\mathbf{i}} \geq\dots \geq \Theta_{S}^{m}\pqty{\mathbf{i}}\). This constructive definition shows that every vector of indices \(\mathbf{i}\) has a unique \textit{profile} in terms of the \(\theta\)'s.
\begin{definition}
    For a fixed \(m > 0\) and \(q \in \mathbb{N}^{+}\), the \textit{\(m\)-profile} of \(\mathbf{i}\in \mathcal{I}_{n}^{q}\) is defined as
    \begin{equation*}
        \pi_{m} \pqty{\mathbf{i}} = \pqty{\Theta_{1}^{m}(\mathbf{i}), \Theta_{2}^{m}\pqty{\mathbf{i}}, \dots, \Theta_{S}^{m}\pqty{\mathbf{i}}}.
    \end{equation*}
\end{definition}

The \(m\)-profile encodes the dependence structure of the observations indexed by \(\mathbf{i}\).
We will treat functionals of \((X_{i}: i \in \mathbf{i})\) differently according to the \(m\)-profile of \(\mathbf{i}\) in our theoretical analysis.
For given integers \(q_{1}\geq \cdots \geq q_{S} > 0\) satisfying \(\sum_{s =1}^{S} q_{s} = q\), we define the following set, which consists of all vectors of indices of length \(q\) with a given \(m\)-profile.
\begin{equation*}
    T_{q_{1}, \dots, q_{S}}^{m} = \cqty{\mathbf{i} =(i_{1}, \dots, i_{q}) : \pi_{m}(\mathbf{i}) = (q_{1}, \dots, q_{S})}, \mathtext{with} \tau_{q_{1}, \dots, q_{S}}^{m} = \abs{T_{q_{1}, \dots, q_{S}}^{m}}.
\end{equation*}
The space \(\mathcal{I}_{n}^{q}\) can be partitioned into sets of vectors of indices with the same \(m\)-profile.\footnote{For ease of notation, in the subsequent sections we write $q_{j}^{l}$ for $q_{j},q_{j+1},\dots, q_{j+l}$ when $q_{j} = q_{j+1} = \dots = q_{j+l}$. For example, \(\tau_{1^{4}}^{m} = \tau^{m}_{1,1,1,1}\).}
\begin{equation*}
    \mathcal{I}_{n}^{q} = \bigsqcup_{\stackrel{q_{1}\geq \dots \geq q_{S} > 0}{\sum q_{s} = q}} T_{q_{1}, \dots, q_{S}}^{m},
\end{equation*}
where the symbol \(\sqcup\) denotes a union of disjoint sets.
As a result, for any fixed \(m\),
\begin{equation*}
    \sum_{\stackrel{q_{1}\geq \dots \geq q_{S} > 0}{\sum_{s=1}^{S} q_{s} = q}} \tau_{q_{1},\dots,q_{S}}^{m} = n^{q}.
\end{equation*}

The profile \(\pi_{m}\pqty{\mathbf{i}}\) contains information about the dependence structure of the observations indexed by the components of \(\mathbf{i}\).
One particularly useful case is when one index \(i_{k}\) in the vector of indices \(\mathbf{i}\) is more than \(m\)-distance away from the rest, we call such index an \(m\)-free index in the vector \(\mathbf{i}\).
\begin{definition}[\(m\)-free index]\label{def:m-free}
    Given a vector of indices \( \lst = (i_{1},\dots, i_{q}) \in \mathcal{I}^{q}_{n}\),  we say \(i_{k}\) is an \textit{\(m\)-free index} in \(\lst\), if
    \begin{equation*}
        \dist\pqty{i_{k}, \mathbf{i}_{- i_{k}}} = \min_{i = i_{1},\dots, i_{k - 1}, i_{k + 1}} \dist(i_{k}, i) > m.
    \end{equation*}
    A vector of indices \(\mathbf{i}\) has an \(m\)-free index if and only if the last digit of its \(m\)-profile is \(1\), that is,
    \begin{equation*}
        \pi_{m}(\mathbf{i}) = (\theta_{1}^{m}\pqty{\mathbf{i}}, \dots, 1).
    \end{equation*}
\end{definition}
When \(i_{k}\) is an \(m\)-free index in \(\mathbf{i}\), and \(\cqty{X_{i}:i \in \mathcal{I}_{n}}\) is \(\beta\)-mixing with respect to \(\dist\), we can bound the error we make if we replace \(X_{i_{k}}\) by an independent copy \(\tilde{X}_{i_{k}}\) using \autoref{lemma:berbee}.

In our treatment of both the non-degenerate and degenerate cases, we will need bounds on the moments of products of the form \(\prod_{j}^{J} H_{i_{j}k_{j}}\). The following lemma is particularly useful which provides a bound on the expectation of the products when one of the indices is \(m\)-free.

\begin{lemma}\label{lemma:m-free}
    Suppose \(\pqty{X_{i}:i\in \mathcal{I}}\) is \(\beta\)-mixing, \(H\) is a degenerate kernel, and \(\mathbf{H}_{J+\delta}<\infty\) for some \(\delta > 0\). If there exists an \(m\)-free index \(i^{*}\) in \(\lst = \pqty{i_{1},k_{1},\dots,i_{J},k_{J}}\), then
    \begin{equation*}
        \abs{\Ep\bqty{\prod_{j=1}^{J} H_{i_{j} k_{j}}}}
        \leq 2 \mathbf{H}_{J+ \delta}^{J} \beta(1, 2J-1, m)^{\frac{\delta}{J + \delta}}.
    \end{equation*}
\end{lemma}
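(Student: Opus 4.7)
The plan is to combine Berbee's coupling with the degeneracy of the kernel. By \autoref{asmp:no-clustering-point}, distinct nodes have distance at least $1$, and since the $m$-free index $i^*$ has distance strictly greater than $m \geq 1$ from every other component of $\mathbf{i}$, the index $i^*$ must occur exactly once in $\mathbf{i}$. Hence it sits in a unique factor $H_{i_{j^*} k_{j^*}}$, and by symmetry of $H$ I may relabel so that $i^* = i_{j^*}$ for a unique $j^*$. Setting $\mathcal{A} = \sigma(X_i : i \in \mathbf{i}, i \neq i^*)$, since $(\{i^*\}, \mathbf{i}\setminus\{i^*\}) \in \mathcal{P}_n(1, 2J-1, m)$, Berbee's lemma (\autoref{lemma:berbee}) furnishes a copy $\tilde{X}_{i^*}$ of $X_{i^*}$ that is independent of $\mathcal{A}$ and satisfies $\Prob(X_{i^*} \neq \tilde{X}_{i^*}) \leq \beta(1, 2J-1, m)$.

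Next I would use degeneracy to annihilate the leading term of the coupling decomposition. Conditioning on $\mathcal{A}$ and integrating over $\tilde{X}_{i^*}$ yields $\hat{H}_{i^*}(X_{k_{j^*}}) \equiv 0$, so
\begin{equation*}
    \Ep\!\left[H(\tilde{X}_{i^*}, X_{k_{j^*}}) \prod_{j \neq j^*} H_{i_j k_j}\right] = 0,
\end{equation*}
and consequently the target expectation equals $\Ep\bqty{(H_{i^* k_{j^*}} - \tilde{H}_{i^* k_{j^*}}) \prod_{j \neq j^*} H_{i_j k_j}}$, whose integrand is supported on the coupling-failure event $\{X_{i^*} \neq \tilde{X}_{i^*}\}$.

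The final step is standard Hölder bookkeeping on this residual expression. Using $\abs{H_{i^* k_{j^*}} - \tilde{H}_{i^* k_{j^*}}} \leq \abs{H_{i^* k_{j^*}}} + \abs{\tilde{H}_{i^* k_{j^*}}}$, Hölder with conjugate exponents $p = (J+\delta)/J$ and $q = (J+\delta)/\delta$ for the indicator of the failure event, and generalized Hölder with common exponent $pJ = J+\delta$ across the $J$ kernel factors, together produce the bound $2 \mathbf{H}_{J+\delta}^{J} \beta(1, 2J-1, m)^{\delta/(J+\delta)}$. The moment control on $\tilde{H}_{i^* k_{j^*}}$ is already built into the definition of $\mathbf{H}_p$ through its inclusion of $\norm{H_{i \tilde{j}}}_p$, so no additional work is needed there. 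The only part requiring genuine care is the index accounting: confirming that $i^*$ appears only once, that the remaining distinct indices number at most $2J - 1$ so that the pair lies in $\mathcal{P}_n(1, 2J-1, m)$, and that the generalized Hölder exponents line up so that the degeneracy identity can be invoked cleanly inside the coupled expectation rather than the original one.
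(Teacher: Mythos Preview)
Your proposal is correct and follows essentially the same approach as the paper: Berbee's coupling on the $m$-free index, degeneracy to kill the coupled expectation, then H\"older with the indicator of the coupling-failure event. Your index accounting (uniqueness of $i^*$ in $\mathbf{i}$, the $2J-1$ count for $\mathcal{P}_n$) is slightly more explicit than the paper's, but the argument is the same.
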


\begin{proof}
    Without loss of generality, let \(k_{J}\) be \(m\)-free, and write \(\mathcal{G}_{-k_{J}} = \sigma\pqty{X_{i}: i\in \mathbf{i}_{- k_{J}}}\). By \autoref{lemma:berbee}, we can construct \(\tilde{X}_{k_{J}}\) that is independent of \(\mathcal{G}_{-k_{J}}\) and has the same distribution as \(X_{k_{J}}\), such that
    \begin{equation*}
        P(X_{k_{J}} \neq \tilde{X}_{k_{J}}) = \beta\pqty{\sigma(X_{k_{J}}), \sigma\pqty{X_{i}: i \in \mathbf{i}_{- k_{J}}}}.
    \end{equation*}
    As \(H\) is degenerate, we have by \autoref{lemma:taking-out-cond-set},
    \begin{align*}
        \Ep\bqty{H_{i_{1}k_{1}}\dots H_{i_{J - 1} k_{J - 1}} H_{i_{J}\tilde{k}_{J}}}
        &= \Ep\bqty{H_{i_{1}k_{1}} \dots H_{i_{J - 1} k_{J - 1}} \Ep\bqty{H_{i_{J} \tilde{k}_{J}} \mid \mathcal{G}_{-k_{J}}}} \\
        &= \Ep\bqty{H_{i_{1}k_{1}} \dots H_{i_{J - 1} k_{J - 1}} \Ep\bqty{H_{i_{J} \tilde{k}_{J}} \mid X_{i_{J}}}}
        = 0.
    \end{align*}
    As a result,
    \begin{align*}
        \Ep\prod_{j}^{J} H_{i_{j}k_{j}}
        &= \Ep\bqty{\prod_{j}^{J-1} H_{i_{j}k_{j}}\cdot\pqty{H_{i_{J}k_{J}} - H_{i_{J}\tilde{k}_{J}}}} \\
        &= \Ep\bqty{\prod_{j}^{J-1} H_{i_{j}k_{j}}\cdot\pqty{H_{i_{J}k_{J}} - H_{i_{J}\tilde{k}_{J}}} \indicator{X_{k_{J}}\neq \tilde{X}_{k_{J}}}}.
    \end{align*}
    Hence we have
    \begin{align*}
        \abs{\Ep\bqty{\prod_{j}^{J} H_{i_{j}k_{j}}}}
        &\leq \norm{\prod_{j}^{J-1} H_{i_{j}k_{j}}\cdot\pqty{H_{i_{J}k_{J}} - H_{i_{J}\tilde{k}_{J}}} \indicator{X_{k_{J}}\neq \tilde{X}_{k_{J}}}}_{1}\\
        &\leq \prod_{j}^{J-1} \norm{H_{i_{j}k_{j}}}_{J+ \delta} \cdot \pqty{\norm{H_{i_{J} k_{J}}}_{J+\delta} + \norm{H_{i_{J} \tilde{k}_{J}}}_{J+\delta}} \cdot \norm{\indicator{X_{k_{J}} \neq \tilde{X}_{k_{J}}}}_{\frac{J + \delta}{\delta} }.
    \end{align*}
    As \(\norm{\indicator{X_{k_{J}} \neq \tilde{X}_{k_{J}}}}_{\frac{J + \delta}{\delta} } = \beta\left(\sigma\left(X_{k_{J}}\right), \sigma\pqty{X_{i}: i\in \mathbf{i}_{- k_{J}}}\right)^{\delta/(J +\delta)}\), we have
    \begin{equation*}
        \abs{\Ep\bqty{\prod_{j}^{J} H_{i_{j}k_{j}}}}\leq 2 \mathbf{H}_{J +\delta}^{J} \beta(1, 2J - 1 , m )^{\frac{\delta}{J + \delta}},
    \end{equation*}
    using the moment and mixing assumptions.
\end{proof}

\begin{lemma}\label{lemma:aux-one-index}
    Let \(g_{i} = g_{i}(X_{i})\) be a centered one-index term with \(\Ep g_{i} = 0\) and \(\norm{g_{i}}_{1+\delta}<\infty\). Let \(\mathcal{G}\subset \mathcal{F}\) be a sub-\(\sigma\)-algebra, and let \(\tilde{X}_{i}\) be a copy of \(X_{i}\) that is independent of \(\mathcal{G}\). Write \(\tilde{g}_{i} = g_{i}(\tilde{X}_{i})\). Then
    \begin{equation*}
        \Ep\abs{\Ep\bqty{g_{i} \mid \mathcal{G}}} \leq 2 \norm{g_{i}}_{1+\delta}\Prob\pqty{X_{i}\neq \tilde{X}_{i}}^{\frac{\delta}{1+\delta}}.
    \end{equation*}
    If \(g_{j} = g_{j}(X_{j})\) is another centered one-index term with \(\norm{g_{i}}_{2+\delta}, \norm{g_{j}}_{2+\delta}<\infty\), and if \(\tilde{X}_{i}\) is independent of \(\sigma(X_{j})\), then
    \begin{equation*}
        \abs{\Ep\bqty{g_{i}g_{j}}} \leq 2 \norm{g_{i}}_{2+\delta}\norm{g_{j}}_{2+\delta}\Prob\pqty{X_{i}\neq \tilde{X}_{i}}^{\frac{\delta}{2+\delta}}.
    \end{equation*}
    In particular, if \(j\notin \nb{i}{m}\), the second bound applies with \(\Prob(X_{i}\neq \tilde{X}_{i})\leq \beta(1,1,m)\); and if \(\mathcal{G} = \sigma\cqty{X_{k}: k\notin \nb{i}{m}}\), the first bound applies with \(\Prob(X_{i}\neq \tilde{X}_{i})\leq \beta(1,\infty,m)\).
\end{lemma}
\begin{proof}
    Since \(\Ep \tilde{g}_{i} = \Ep g_{i} = 0\) and \(\tilde{g}_{i}\) is independent of \(\mathcal{G}\),
    \begin{equation*}
        \Ep\abs{\Ep\bqty{g_{i}\mid \mathcal{G}}}
        = \Ep\abs{\Ep\bqty{g_{i}-\tilde{g}_{i}\mid \mathcal{G}}}
        \leq \Ep\abs{\pqty{g_{i}-\tilde{g}_{i}}\indicator{X_{i}\neq \tilde{X}_{i}}}.
    \end{equation*}
    By H\"older's inequality,
    \begin{equation*}
        \Ep\abs{\pqty{g_{i}-\tilde{g}_{i}}\indicator{X_{i}\neq \tilde{X}_{i}}}
        \leq \norm{g_{i}-\tilde{g}_{i}}_{1+\delta}\Prob\pqty{X_{i}\neq \tilde{X}_{i}}^{\frac{\delta}{1+\delta}}
        \leq 2\norm{g_{i}}_{1+\delta}\Prob\pqty{X_{i}\neq \tilde{X}_{i}}^{\frac{\delta}{1+\delta}}.
    \end{equation*}

    For the covariance bound, since \(\tilde{g}_{i}\) is independent of \(g_{j}\) and both are centered,
    \begin{equation*}
        \Ep\bqty{\tilde{g}_{i}g_{j}} = 0.
    \end{equation*}
    Hence
    \begin{align*}
        \abs{\Ep\bqty{g_{i}g_{j}}}
        &= \abs{\Ep\bqty{\pqty{g_{i}-\tilde{g}_{i}}g_{j}}} \\
        &= \abs{\Ep\bqty{\pqty{g_{i}-\tilde{g}_{i}}g_{j}\indicator{X_{i}\neq \tilde{X}_{i}}}} \\
        &\leq \norm{g_{i}-\tilde{g}_{i}}_{2+\delta}\norm{g_{j}}_{2+\delta}\Prob\pqty{X_{i}\neq \tilde{X}_{i}}^{\frac{\delta}{2+\delta}} \\
        &\leq 2 \norm{g_{i}}_{2+\delta}\norm{g_{j}}_{2+\delta}\Prob\pqty{X_{i}\neq \tilde{X}_{i}}^{\frac{\delta}{2+\delta}}.
    \end{align*}
\end{proof}

\begin{lemma}\label{lemma:aux-remainder-centered}
    Let \(R_{ik} = H_{ik} - \hat{H}_{k}(X_{i}) - \hat{H}_{i}(X_{k}) + \theta_{ik}\) be the remainder kernel in \eqref{eq:hoeffding-depdendent}, and suppose \(H\) is symmetric. If \(\tilde{X}_{k}\) has the same distribution as \(X_{k}\) and is independent of \(X_{i}\), then
    \begin{equation*}
        \Ep\bqty{R_{i\tilde{k}} \mid X_{i}} = 0 \qquad \text{a.s.}
    \end{equation*}
    Likewise, if \(\tilde{X}_{i}\) has the same distribution as \(X_{i}\) and is independent of \(X_{k}\), then
    \begin{equation*}
        \Ep\bqty{R_{\tilde{i}k} \mid X_{k}} = 0 \qquad \text{a.s.}
    \end{equation*}
\end{lemma}
\begin{proof}
    Using symmetry of \(H\), independence, and the definitions of \(\hat{H}_{k}\) and \(\theta_{ik}\),
    \begin{align*}
        \Ep\bqty{R_{i\tilde{k}} \mid X_{i}}
        &= \Ep\bqty{H(X_{i},\tilde{X}_{k}) \mid X_{i}} - \hat{H}_{k}(X_{i}) - \Ep\bqty{\hat{H}_{i}(\tilde{X}_{k}) \mid X_{i}} + \theta_{ik} \\
        &= \hat{H}_{k}(X_{i}) - \hat{H}_{k}(X_{i}) - \theta_{ik} + \theta_{ik} = 0.
    \end{align*}
    The second claim is identical after exchanging the roles of \(i\) and \(k\).
\end{proof}

\begin{lemma}\label{lemma:gamma-prod}
    Let \(H\) be a degenerate kernel, suppose \(\dist\pqty{\cqty{i,j}, \cqty{k_{1},k_{2} }} > m\), and \( 0 < \dist(i,j) \leq m\), \(0< \dist(k_{1},k_{2}) \leq m\), then for \(\delta > 0\),
    \begin{equation*}
        \abs{\Ep \pqty{H_{ik_{1}}H_{jk_{2}}}} \leq C \mathbf{H}_{2 + \delta}^{2} \beta(2,2,m)^{\delta/(2+\delta)} + \mathbf{\gamma}_{m,1}.
    \end{equation*}
    where  \(\mathbf{\gamma}_{m,1} = \sup \cqty{\abs{\Ep\gga_{k_{1}k_{2}}} : i,j,k_{1},k_{2}\in \mathcal{I}_{n}, j\in \nb{i}{m}, k_{1}\neq k_{2}}\).
\end{lemma}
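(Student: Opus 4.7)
The plan is to use Berbee's coupling lemma (\autoref{lemma:berbee}) to replace the pair \((X_{k_{1}}, X_{k_{2}})\) by an independent copy on the event where the coupling succeeds, and then identify the resulting ``independent'' expectation with the conditional-moment function \(\Gamma^{i,j}\).

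First, since \(\dist(\{i,j\}, \{k_{1}, k_{2}\}) > m\), \autoref{lemma:berbee} applied to \(\sigma(X_{i}, X_{j})\) and the random element \((X_{k_{1}}, X_{k_{2}})\) gives us a pair \((\tilde{X}_{k_{1}}, \tilde{X}_{k_{2}})\) with the same joint law as \((X_{k_{1}}, X_{k_{2}})\), independent of \(\sigma(X_{i}, X_{j})\), and satisfying \(\Prob((X_{k_{1}}, X_{k_{2}}) \neq (\tilde{X}_{k_{1}}, \tilde{X}_{k_{2}})) \leq \beta(2,2,m)\). Writing the trivial identity
\begin{equation*}
    \Ep\bqty{H_{ik_{1}}H_{jk_{2}}} = \Ep\bqty{H(X_{i}, \tilde{X}_{k_{1}})H(X_{j}, \tilde{X}_{k_{2}})} + R,
\end{equation*}
where \(R\) is the expectation of the difference restricted to the event \(\{(X_{k_{1}}, X_{k_{2}})\neq (\tilde{X}_{k_{1}}, \tilde{X}_{k_{2}})\}\), splits the problem into two pieces.

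Second, I bound the remainder \(R\) by Hölder's inequality with exponents \((2+\delta)/2\) and \((2+\delta)/\delta\): the moment factors are each \(\leq \mathbf{H}_{2+\delta}^{2}\) (since \(X_{i}\) is independent of \(\tilde{X}_{k_{1}}\) and similarly for the \(j,k_{2}\) pair, so these terms fall under the definition of \(\mathbf{H}_{p}\)), while the indicator factor gives \(\beta(2,2,m)^{\delta/(2+\delta)}\). The triangle inequality then yields \(|R| \leq 2\mathbf{H}_{2+\delta}^{2}\beta(2,2,m)^{\delta/(2+\delta)}\).

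Third, for the main term, I condition on \((\tilde{X}_{k_{1}}, \tilde{X}_{k_{2}})\), which by construction is independent of \((X_{i}, X_{j})\). The inner expectation over \((X_{i}, X_{j})\) at fixed arguments \(x, y\) is exactly \(\Gamma^{i,j}(x, y)\), so
\begin{equation*}
    \Ep\bqty{H(X_{i}, \tilde{X}_{k_{1}})H(X_{j}, \tilde{X}_{k_{2}})} = \Ep\bqty{\Gamma^{i,j}(\tilde{X}_{k_{1}}, \tilde{X}_{k_{2}})}.
\end{equation*}
Taking absolute values and using that \((\tilde{X}_{k_{1}}, \tilde{X}_{k_{2}})\) has the same joint law as \((X_{k_{1}}, X_{k_{2}})\), this is bounded by \(\Ep|\Gamma^{i,j}(X_{k_{1}}, X_{k_{2}})| \leq \gamma_{m,1}\), where the supremum in the definition of \(\gamma_{m,1}\) is legitimate because the hypotheses \(0 < \dist(i,j) \leq m\) and \(k_{1}\neq k_{2}\) are exactly the restrictions in its definition. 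Combining the two bounds gives the claim with \(C = 2\).

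The only genuinely delicate point is keeping track of what independence the coupling does and does not provide: \(\tilde{X}_{k_{1}}\) and \(\tilde{X}_{k_{2}}\) are independent of \((X_{i}, X_{j})\) \emph{jointly}, but not independent of each other (their joint law is that of \((X_{k_{1}}, X_{k_{2}})\), which are close in \(\dist\)). This is why the degeneracy of \(H\) cannot be used to kill the leading term, and why the quantity \(\Gamma^{i,j}\) (rather than a vanishing projection) appears in the final bound.
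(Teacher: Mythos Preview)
Your proof is correct and follows essentially the same route as the paper's: both apply Berbee's lemma to decouple the two pairs and then bound the remainder by H\"older and the main term by \(\gamma_{m,1}\). The only cosmetic difference is that the paper couples \((X_{i},X_{j})\) to an independent copy while you couple \((X_{k_{1}},X_{k_{2}})\); since the two pairs play symmetric roles here, this choice is immaterial.
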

\begin{proof}
    By Berbee's lemma, we can construct \(\pqty{\tilde{X}_{i}, \tilde{X}_{j}}\) that has the same distribution as \(\pqty{{X}_{i}, X_{j}}\) and independent of \(\sigma(X_{k_{1}}, X_{k_{2}})\) such that \(P\pqty{\pqty{\tilde{X}_{i}', \tilde{X}_{j}'} \neq \pqty{X_{i}', X_{j}'}} \leq \beta(2,2,m)\),
    \begin{align*}
        \abs{\Ep H_{ik_{1}}H_{jk_{2}}} &\leq
        \abs{
            \Ep\pqty{H_{ik_{1}}H_{jk_{2}} - H_{\tilde{i}k_{1}}H_{\tilde{j}k_{2}}}
            \indicator{\pqty{\tilde{X}_{i}', \tilde{X}_{j}'} \neq \pqty{X_{i}', X_{j}'}}
        }
        + \abs{\Ep H_{\tilde{i}k_{1}}H_{\tilde{j}k_{2}}} \\
        &\leq \pqty{\norm{H_{ik_{1}} H_{jk_{2}}}_{1 + \delta/2} + \norm{H_{\tilde{i}k_{1}}H_{\tilde{j}k_{2}}}_{1 + \delta/2}} \beta(2,2,m)^{\delta/(2 + \delta)}
        + \mathbf{\gamma}_{m,1}\\
        &\leq C \mathbf{H}_{2 + \delta}^{2} \beta(2,2,m)^{\delta/(2+ \delta)} + \mathbf{\gamma}_{m,1}.  \qedhere
    \end{align*}
\end{proof}

Recall the definition, \(\gga_{k_{1}k_{2}} = \Ep[H(\tilde{X}_{i}, X_{k_{1}}) H(\tilde{X}_{j}, X_{k2})\mid X_{k_{1}}, X_{k_{2}}]\).
Before treating the terms relating to \(\gga_{k_{1}k_{2}}\), we first prove a technical lemma that basically says we can discard redundant information when doing conditional expectation.
\begin{lemma}\label{lemma:taking-out-cond-set}
    Suppose \(f(X,Y)\in L^{1}(\mathbb{R}^{d}\times \mathbb{R}^{d})\), and \(Y\) is independent of \(\sigma(X,Z)\), then
    \begin{equation*}
        \Ep\bqty{f(X,Y) \mid {X}} = \Ep\bqty{f(X,Y)\mid {X,Z}} \mathtext{a.s.}
    \end{equation*}
\end{lemma}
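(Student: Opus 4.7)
The statement is a standard measure-theoretic fact about conditional expectation: since $Y$ is independent of $\sigma(X,Z)$, conditioning on $(X,Z)$ cannot extract any information about $f(X,Y)$ beyond what conditioning on $X$ already provides. The plan is to verify the defining property of conditional expectation directly on a generating $\pi$-system and then appeal to a monotone class argument.

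First I would observe that $\Ep[f(X,Y)\mid X]$ is $\sigma(X)$-measurable and therefore \emph{a fortiori} $\sigma(X,Z)$-measurable, so it is a legitimate candidate for $\Ep[f(X,Y)\mid X,Z]$. By the defining characterisation of conditional expectation, it then suffices to show
\begin{equation*}
    \Ep\bqty{\mathbf{1}_{A}\, \Ep[f(X,Y)\mid X]} = \Ep\bqty{\mathbf{1}_{A}\, f(X,Y)}
\end{equation*}
for every $A$ in a $\pi$-system generating $\sigma(X,Z)$. A natural choice is sets of the form $A=\{X\in B,\, Z\in C\}$ for Borel $B,C$, which is closed under intersection and generates $\sigma(X,Z)$.

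For such $A$, I would use Fubini's theorem together with the independence of $Y$ from $\sigma(X,Z)$. Writing $\mu$ for the joint law of $(X,Z)$ and $\nu$ for the law of $Y$, the right-hand side factors as
\begin{equation*}
    \Ep\bqty{\mathbf{1}_{B}(X)\mathbf{1}_{C}(Z)\, f(X,Y)} = \int \mathbf{1}_{B}(x)\mathbf{1}_{C}(z)\,\pqty{\int f(x,y)\,\dd\nu(y)}\dd\mu(x,z).
\end{equation*}
Setting $g(x):=\int f(x,y)\,\dd\nu(y)$, the factorisation lemma for conditional expectation with independent inputs (a direct consequence of Fubini when $Y\perp X$) gives $g(X)=\Ep[f(X,Y)\mid X]$ almost surely, so the displayed expression equals $\Ep[\mathbf{1}_{A}\,\Ep[f(X,Y)\mid X]]$, as required.

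The main obstacle, such as it is, is the careful handling of integrability: I should first verify the identity for $f\geq 0$ (allowing $+\infty$) by monotone convergence from simple functions, then decompose a general $L^{1}$ kernel into positive and negative parts so Fubini applies without ambiguity. Once the identity holds on the generating $\pi$-system, a standard monotone class (or $\pi$-$\lambda$) argument extends it to all of $\sigma(X,Z)$, concluding the proof.
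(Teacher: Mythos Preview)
Your proof is correct. Both you and the paper verify the defining property of conditional expectation, namely that $\Ep[\mathbf{1}_A\,\Ep[f(X,Y)\mid X]]=\Ep[\mathbf{1}_A\,f(X,Y)]$ for all $A\in\sigma(X,Z)$, but the routes differ in one technical respect. You restrict to the generating $\pi$-system $A=\{X\in B,\,Z\in C\}$, factor the integral explicitly via Fubini and the product law, identify $g(X)=\Ep[f(X,Y)\mid X]$, and then invoke a monotone class extension. The paper instead works with an arbitrary $A\in\sigma(X,Z)$ from the start: after applying Fubini in the $Y$-variable (using $Y\perp\sigma(X,Z)$), it uses the tower property to replace $A$ by $\Ep[A\mid\sigma(X)]$ inside the integral, which immediately reduces everything to $\sigma(X)$-measurable objects and avoids the $\pi$-$\lambda$ step altogether. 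Your approach is more concrete and perhaps more transparent about where independence enters; the paper's is slightly slicker in that it dispenses with the extension argument.
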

\begin{proof}
    For any \(A \in \sigma(X,Z)\), with a slight abuse of notation, we also denote \(A = \indicator{\omega \in A}\),
    \begin{align*}
        \Ep\pqty{f(X,Y)A} &= \int \Ep\bqty{f(X, y) A} \dd{P_{Y}(y)} \\
        &= \int \Ep\bqty{f(X, y) \Ep\pqty{A \mid \sigma(X)}} \dd{P_{Y}(y)} \\
        &= \Ep\bqty{f(X,Y)\Ep\bqty{A\mid \sigma(X)}}\\
        &= \Ep\bqty{\Ep\bqty{f(X,Y)\mid \sigma(X)} \Ep\bqty{A \mid \sigma(X)}} \\
        &= \Ep\cqty{\Ep\bqty{\Ep\bqty{f(X,Y)\mid \sigma(X)}A \mid \sigma(X)}} \\
        &= \Ep\bqty{\Ep\bqty{f(X,Y)\mid \sigma(X)} A}.
    \end{align*}
    And the result follows by the definition of conditional expectation.
\end{proof}

With the previous lemma in hand, we are ready to prove that \(\gga_{k_{1}k_{2}}\) and products \(\gga_{k_{1}k_{2}}\gga_{k_{3}k_{4}}\) have mean \(0\) as long as one of the inputs is independent of the rest.
\begin{lemma}\label{lemma:gamma-mean}
    Suppose \(\tilde{X}_{k_{4}}\) is independent of \(\pqty{X_{k_{1}},X_{k_{2}}, X_{k_{3}}}\), then
    \begin{equation*}
        \Ep\bqty{\gga_{k_{1}k_{2}}\gga_{k_{3}\tilde{k}_{4}}}= 0, \mathtext{and} \Ep\gga_{k_{3}\tilde{k}_{4}} = 0.
    \end{equation*}
\end{lemma}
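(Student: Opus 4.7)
The plan is to exploit the degeneracy of $H$ together with the independence of $\tilde{X}_{k_{4}}$ from $(X_{k_{1}}, X_{k_{2}}, X_{k_{3}})$, reducing both identities to the fact that $\Ep[H(\tilde{X}_{j}, y)] = 0$ for any fixed $y$. The second identity is the simpler building block, so I would establish it first.

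For the claim $\Ep\gga_{k_{3}\tilde{k}_{4}} = 0$, I would first unwind the conditional expectation inside $\gga_{k_{3}\tilde{k}_{4}}$ by the tower property to obtain $\Ep[\gga_{k_{3}\tilde{k}_{4}}] = \Ep[H(\tilde{X}_{i}, X_{k_{3}}) H(\tilde{X}_{j}, \tilde{X}_{k_{4}})]$, where $\tilde{X}_{i}, \tilde{X}_{j}$ are the independent copies used in the definition of $\Gamma^{i,j}$. Since $\tilde{X}_{j}$ is independent of the triple $(\tilde{X}_{i}, X_{k_{3}}, \tilde{X}_{k_{4}})$, I would condition on $\sigma(\tilde{X}_{i}, X_{k_{3}}, \tilde{X}_{k_{4}})$ and factor out $H(\tilde{X}_{i}, X_{k_{3}})$. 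The resulting inner conditional expectation is $\Ep[H(\tilde{X}_{j}, y)]\big|_{y = \tilde{X}_{k_{4}}}$, which vanishes by the degeneracy condition $\hat{H}_{j}(y) = 0$.

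For the first and main claim $\Ep[\gga_{k_{1}k_{2}} \gga_{k_{3}\tilde{k}_{4}}] = 0$, the strategy is to pull $\gga_{k_{1}k_{2}}$ through a conditional expectation. Since $\gga_{k_{1}k_{2}}$ is $\sigma(X_{k_{1}}, X_{k_{2}})$-measurable, the tower property gives
\begin{equation*}
    \Ep[\gga_{k_{1}k_{2}} \gga_{k_{3}\tilde{k}_{4}}] = \Ep\bqty{\gga_{k_{1}k_{2}}\, \Ep[\gga_{k_{3}\tilde{k}_{4}} \mid X_{k_{1}}, X_{k_{2}}, X_{k_{3}}]}.
\end{equation*}
The inner conditional expectation is where \autoref{lemma:taking-out-cond-set} enters: taking $X = X_{k_{3}}$, $Y = \tilde{X}_{k_{4}}$ and $Z = (X_{k_{1}}, X_{k_{2}})$, the independence of $\tilde{X}_{k_{4}}$ from $\sigma(X_{k_{1}}, X_{k_{2}}, X_{k_{3}})$ collapses the conditioning to $\Ep[\gga_{k_{3}\tilde{k}_{4}} \mid X_{k_{3}}]$. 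Repeating the argument from the second identity (now conditionally on $X_{k_{3}}$, exploiting independence of $\tilde{X}_{j}$ and $\tilde{X}_{k_{4}}$), this conditional expectation equals zero almost surely, hence the whole product expectation vanishes.

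The only delicate step is the application of \autoref{lemma:taking-out-cond-set} to drop $(X_{k_{1}}, X_{k_{2}})$ from the conditioning $\sigma$-algebra; everything else is a straightforward unwrapping of the nested expectations defining $\gga$ together with the degeneracy of $H$. No moment or mixing inputs are needed beyond what is already assumed.
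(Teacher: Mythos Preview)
Your overall plan---tower property to pull out $\gga_{k_{1}k_{2}}$, then invoke \autoref{lemma:taking-out-cond-set} to collapse the conditioning to $X_{k_{3}}$ alone, then show $\Ep[\gga_{k_{3}\tilde{k}_{4}}\mid X_{k_{3}}]=0$---is exactly the paper's route. However, your execution of the last step contains a genuine error.

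You claim that ``$\tilde{X}_{j}$ is independent of the triple $(\tilde{X}_{i}, X_{k_{3}}, \tilde{X}_{k_{4}})$'' and condition on $\sigma(\tilde{X}_{i}, X_{k_{3}}, \tilde{X}_{k_{4}})$. But $\tilde{X}_{i}$ and $\tilde{X}_{j}$ are \emph{not} independent of each other: by definition of $\Gamma^{i,j}(x,y)=\Ep[H(X_{i},x)H(X_{j},y)]$, the pair $(\tilde{X}_{i},\tilde{X}_{j})$ used to represent $\gga_{k_{3}\tilde{k}_{4}}$ must have the \emph{joint} law of $(X_{i},X_{j})$, and in the relevant context $j\in\nb{i}{m}$ so these are dependent. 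Hence your inner conditional expectation is not $\Ep[H(\tilde{X}_{j},y)]$ and the degeneracy condition cannot be applied as written.

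The fix is to swap which variable you ``freeze''. Condition instead on $\sigma(\tilde{X}_{i},\tilde{X}_{j},X_{k_{3}})$ and use that $\tilde{X}_{k_{4}}$ is independent of this $\sigma$-algebra (this holds because $(\tilde{X}_{i},\tilde{X}_{j})\perp(X_{k_{3}},\tilde{X}_{k_{4}})$ by construction and $\tilde{X}_{k_{4}}\perp X_{k_{3}}$ by hypothesis, so the three blocks are mutually independent). Then
\[
\Ep\bqty{H(\tilde{X}_{j},\tilde{X}_{k_{4}})\,\big|\,\tilde{X}_{i},\tilde{X}_{j},X_{k_{3}}}
=\Ep\bqty{H(\tilde{X}_{j},\tilde{X}_{k_{4}})\,\big|\,\tilde{X}_{j}}
=\hat{H}_{k_{4}}(\tilde{X}_{j})=0
\]
by \autoref{lemma:taking-out-cond-set} and degeneracy, which is precisely what the paper does. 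The same correction applies to your argument for $\Ep\gga_{k_{3}\tilde{k}_{4}}=0$.
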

\begin{proof}
    We only prove the more difficult first statement, the second statement is straightforward.
    \begin{align*}
        \Ep\bqty{\gga_{k_{1}k_{2}}\gga_{k_{3}\tilde{k}_{4}}} &= \Ep \pqty{\gga_{k_{1}k_{2}} \Ep\bqty{\gga_{k_{3}\tilde{k}_{4}}\mid k_{1},k_{2}, k_{3}}} \\
        &= \Ep\pqty{\gga_{k_{1}k_{2}} \Ep\bqty{\gga_{k_{3}\tilde{k}_{4}} \mid k_{3}}} && \mathtext{because} \tilde{k}_{4} \perp k_{1}, k_{2}, k_{3}\\
        &= \Ep\pqty{\gga_{k_{1}k_{2}} \Ep\bqty{\Ep\pqty{H_{\tilde{i}k_{3}}H_{\tilde{j}\tilde{k}_{4}}\mid k_{3}, \tilde{k}_{4}} \mid k_{3}}} &&\mathtext{where} \tilde{i},\tilde{j} \perp k_{3},\tilde{k}_{4}\\
        &= \Ep\pqty{\gga_{k_{1}k_{2}} \Ep\bqty{{H_{\tilde{i}k_{3}}H_{\tilde{j}\tilde{k}_{4}}} \mid k_{3}}}\\
        &= \Ep\pqty{\gga_{k_{1}k_{2}} \Ep\bqty{\Ep\pqty{H_{\tilde{i}k_{3}}H_{\tilde{j}\tilde{k}_{4}}\mid \tilde{i}, \tilde{j}, k_{3}} \mid k_{3}}} \\
        &= \Ep\pqty{\gga_{k_{1}k_{2}} \Ep\bqty{H_{\tilde{i}k_{3}} \Ep\pqty{ H_{\tilde{j}\tilde{k}_{4}}\mid \tilde{i}, \tilde{j}, k_{3}} \mid k_{3}}}
    \end{align*}
    because \(\tilde{k}_{4} \perp (\tilde{i}, \tilde{j}, k_{3})\) by \autoref{lemma:independent}, we have \(\Ep\bqty{H_{\tilde{j}\tilde{k}_{4}}\mid \tilde{i}, \tilde{j}, k_{3}} = \Ep\bqty{H_{\tilde{j}\tilde{k}_{4}}\mid \tilde{j}} = 0\) almost surely by \autoref{lemma:taking-out-cond-set}.
\end{proof}

Now we are ready to bound higher moments of \(\gga_{k_{1}k_{2}}\) using \autoref{lemma:berbee}.
\begin{lemma}\label{lemma:gamma-cov}
    For each \(i\in \mathcal{I}\) and \(j\in \nb{i}{m}\) and \(k_{1},k_{3}\neq i, k_{2},k_{4}\neq j\), with the definition \(\gga(x,y) = \Ep \bqty{H\pqty{X_{i}, x}H\pqty{X_{j}, y}}\) and \(\gga_{k_{1}k_{2}} = \gga(X_{k_{1}}, X_{k_{2}})\), we have the bounds on the norm \(\norm{\gga_{k_{1}k_{2}}}_{p} \leq \mathbf{H}_{2p}^{2}\), and the covariance between \(\gga_{k_{1}k_{2}}\) and \(\gga_{k_{3}k_{4}}\).
    \begin{enumerate}
        \item
            If there exists an \(m\)-free index in \((k_{1},k_{2},k_{3},k_{4})\), then
            \begin{equation*}
                \abs{\Cov\pqty{\gga_{k_{1}k_{2}}, \gga_{k_{3}k_{4}}}} \leq 4 \mathbf{H}_{4 + \delta}^{4} \beta(1,3,m)^{\frac{\delta}{4 + \delta}};
            \end{equation*}

        \item
            If \((k_{1},k_{2},k_{3},k_{4}) \in T_{2,2}^{m}\), and \(\dist\pqty{k_{1},k_{2}} \leq m\), \(\dist\pqty{k_{3},k_{4}}\leq m\), then
            we have
            \begin{equation*}
                \abs{ \Cov\pqty{\gga_{k_{1},k_{2}}, \gga_{k_{3},k_{4}}}} \leq 2 \mathbf{H}_{4 + \delta}^{4} \beta(2,2,m)^{\frac{\delta}{4 + \delta}}
            \end{equation*}

        \item
            If \((k_{1},k_{2},k_{3},k_{4}) \in T_{2,2}^{m}\) and \(\dist\pqty{k_{1},k_{3}}\leq m\) or \(\dist\pqty{k_{1},k_{4}}\leq m\), then
            we have
            \begin{equation*}
                \abs{\Cov\pqty{\gga_{k_{1}k_{2}}, \gga_{k_{3}k_{4}}}} \leq \mathbf{\Gamma}_{m,2}^{2} + C \mathbf{H}_{4 + \delta}^{4} \beta(2,2,m)^{\frac{\delta}{4 + \delta}}
            \end{equation*}
    \end{enumerate}
\end{lemma}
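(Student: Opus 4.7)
The plan is to handle the $L^p$-norm bound first by a Jensen--Cauchy--Schwarz argument, then to attack each of the three covariance inequalities by a Berbee coupling (\autoref{lemma:berbee}) tailored to the geometric configuration of the indices, combined with \autoref{lemma:gamma-mean} in cases (1)--(2) and a direct Cauchy--Schwarz into $\mathbf{\Gamma}_{m, 2}$ in case (3).

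For the norm bound, I would view $\gga_{k_1 k_2} = \Gamma^{i, j}(X_{k_1}, X_{k_2})$ as the conditional expectation $\Ep[H(X_i', X_{k_1}) H(X_j', X_{k_2}) \mid X_{k_1}, X_{k_2}]$, where $(X_i', X_j')$ is an independent copy of $(X_i, X_j)$. Conditional Jensen gives $|\gga_{k_1 k_2}|^p \le \Ep[|H(X_i', X_{k_1}) H(X_j', X_{k_2})|^p \mid X_{k_1}, X_{k_2}]$; after taking the unconditional expectation and splitting via Cauchy--Schwarz, each resulting factor $\Ep|H(X_i', X_{k_1})|^{2p}$, $\Ep|H(X_j', X_{k_2})|^{2p}$ is controlled by $\mathbf{H}_{2p}^{2p}$ via the $\tilde X$-independence clause in the definition of $\mathbf{H}_{2p}$, yielding $\|\gga_{k_1 k_2}\|_p \le \mathbf{H}_{2p}^2$.

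For case (1), let the $m$-free index be $i^* = k_4$ (WLOG). Berbee produces $\tilde X_{k_4}$ independent of $\sigma(X_{k_1}, X_{k_2}, X_{k_3})$ with $\Prob(X_{k_4} \ne \tilde X_{k_4}) \le \beta(1, 3, m)$, and \autoref{lemma:gamma-mean} gives $\Ep[\gga_{k_1 k_2} \gga_{k_3 \tilde k_4}] = 0 = \Ep\gga_{k_3 \tilde k_4}$. Writing the covariance as
\begin{equation*}
\Ep\bigl[\gga_{k_1 k_2}(\gga_{k_3 k_4} - \gga_{k_3 \tilde k_4})\indicator{X_{k_4} \ne \tilde X_{k_4}}\bigr] - \Ep[\gga_{k_1 k_2}]\,\Ep\bigl[(\gga_{k_3 k_4} - \gga_{k_3 \tilde k_4})\indicator{X_{k_4} \ne \tilde X_{k_4}}\bigr],
\end{equation*}
three-factor H\"older with exponents $(\tfrac{4+\delta}{2}, \tfrac{4+\delta}{2}, \tfrac{4+\delta}{\delta})$, combined with the norm bound and $\|\indicator{X_{k_4} \ne \tilde X_{k_4}}\|_{(4+\delta)/\delta} \le \beta(1, 3, m)^{\delta/(4+\delta)}$, accumulates total constant at most $4$. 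Case (2) follows the same recipe with $(X_{k_3}, X_{k_4})$ coupled jointly to an independent copy of the same joint law at cost $\beta(2, 2, m)$; full independence of the copy from $(X_{k_1}, X_{k_2})$ makes $\Ep[\gga_{k_1 k_2} \gga_{\tilde k_3 \tilde k_4}] = \Ep\gga_{k_1 k_2}\cdot\Ep\gga_{k_3 k_4}$ exactly, so the covariance collapses to a single coupling-error piece bounded by $2 \mathbf{H}_{4+\delta}^4 \beta(2, 2, m)^{\delta/(4+\delta)}$.

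Case (3) is the main obstacle. The profile $(2, 2)$ together with $d(k_1, k_3) \le m$ forces the partition of $\{k_1, k_2, k_3, k_4\}$ into the equivalence classes $\{k_1, k_3\}$ and $\{k_2, k_4\}$ with inter-class distance exceeding $m$; in particular $d(k_1, k_2), d(k_3, k_4) > m$, which matches the admissibility condition in the definition of $\mathbf{\Gamma}_{m, 2}$. I would couple $(X_{k_2}, X_{k_4})$ jointly with $(\tilde X_{k_2}, \tilde X_{k_4}) \perp (X_{k_1}, X_{k_3})$ of the same joint law, and a single Cauchy--Schwarz then gives
\begin{equation*}
|\Ep[\gga_{k_1 \tilde k_2} \gga_{k_3 \tilde k_4}]| \le \|\gga_{k_1 \tilde k_2}\|_2 \,\|\gga_{k_3 \tilde k_4}\|_2 \le \mathbf{\Gamma}_{m, 2}^2,
\end{equation*}
since each factor now satisfies the admissibility $\tilde X_{k_2} \perp X_{k_1}$, $\tilde X_{k_2} \stackrel{d}{=} X_{k_2}$, $d(k_1, k_2) > m$. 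What remains --- the coupling error on the main term and the mean product $\Ep[\gga_{k_1 k_2}]\Ep[\gga_{k_3 k_4}]$ --- must be absorbed into the $\mathbf{H}_{2+\delta}^4 \beta(1, 1, m)^{4/(2+\delta)}$ piece. For the coupling error I would apply a H\"older split between the product $\gga\gga - \gga\gga$ and the indicator of the coupling event; for the mean product I would use the degeneracy identity $\Ep\gga_{k_1 \tilde k_2} = 0$ (a consequence of symmetry together with $\Ep[H(X_{k_2}, x)] = 0$) to rewrite each marginal as its own $\beta(1, 1, m)$-coupling error. The delicate part is the exponent bookkeeping: allocating moments between the $\gga$-factor and the indicator factor, and combining the two separate marginal couplings, so that the final estimate matches the stated exponent $4/(2+\delta)$ on $\beta(1, 1, m)$ and the stated power on $\mathbf{H}_{2+\delta}$.
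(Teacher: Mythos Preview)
Your plan for the norm bound and for cases (1)--(2) is exactly the paper's argument: conditional Jensen plus Cauchy--Schwarz for the norm, Berbee plus \autoref{lemma:gamma-mean} plus H\"older for the covariances, with the same constants.

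For case (3) your architecture is also the paper's (the paper couples $(X_{k_1},X_{k_3})$ to an independent copy rather than $(X_{k_2},X_{k_4})$, but by symmetry of the configuration this is immaterial). The one genuine gap is your intention to absorb the \emph{main-term coupling error} into the $\mathbf{H}_{2+\delta}^{4}\,\beta(1,1,m)^{4/(2+\delta)}$ bucket. That cannot work: the coupling you describe replaces a \emph{pair} $(X_{k_2},X_{k_4})$ at distance $>m$ from the pair $(X_{k_1},X_{k_3})$, so Berbee delivers $\beta(2,2,m)$, and H\"older against the product $\gga\cdot\gga$ forces $(4+\delta)$-moments of $H$; the natural output is $C\,\mathbf{H}_{4+\delta}^{4}\,\beta(2,2,m)^{\delta/(4+\delta)}$, not anything in $\beta(1,1,m)$ or $\mathbf{H}_{2+\delta}$. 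The paper does precisely this and records the main-term error as $6\,\mathbf{H}_{4+\delta}^{4}\,\beta(2,2,m)^{\delta/(4+\delta)}$. Only the \emph{mean product} $\lvert\Ep\gga_{k_1k_2}\rvert\cdot\lvert\Ep\gga_{k_3k_4}\rvert$ is handled via two separate one-on-one couplings, yielding $4\,\mathbf{H}_{2+\delta}^{4}\,\beta(1,1,m)^{2\delta/(2+\delta)}$ (note the exponent is $2\delta/(2+\delta)$, not $4/(2+\delta)$), which the paper then declares ``of smaller order'' than the $\beta(2,2,m)$ piece. In short: your worry about exponent bookkeeping is well placed --- the stated right-hand side in case (3) does not match what the argument actually produces, and when the lemma is invoked later (in the bound for $A_3$) the paper itself uses the $\mathbf{\Gamma}_{m,2}^{2}+\mathbf{H}_{4+\delta}^{4}\,\beta(2,2,m)^{\delta/(4+\delta)}$ form. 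Do not try to force the $\beta(2,2,m)$ error into the $\beta(1,1,m)$ slot; keep both pieces and you recover the bound the paper actually uses downstream.
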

\begin{proof}
    It is straightforward to find that, for \(p \geq 1\), and \(\pqty{\tilde{X}_{i}, \tilde{X_{j}}}\) that has the same distribution as \(\pqty{X_{i}, X_{j}}\) and independent of \(\pqty{X_{k_{1}}, X_{k_{2}}}\),
    \begin{align*}
        \Ep\bqty{\abs{\gga_{k_{1},k_{2}}}^{p}} &= \Ep\bqty{\abs{\Ep\bqty{H\pqty{\tilde{X}_{i}, X_{k_{1}}} H\pqty{\tilde{X}_{j}, X_{k2}}\mid X_{k_{1}}, X_{k_{2}}}}^{p}}\\
        &\leq \Ep\bqty{\Ep\pqty{\abs{H^{p}\pqty{\tilde{X}_{i}, X_{k_{1}}} H^{p}\pqty{\tilde{X}_{j}, X_{k_{2}}}} \mid X_{k_{1}}, X_{k_{2}}}}\\
        &\leq \sqrt{\Ep H_{\tilde{i},k_{1}}^{2p}\Ep H_{\tilde{j},k_{2}}^{2p}} \\
        &\leq \mathbf{H}_{2p}^{2p}.
    \end{align*}

    In the first case, when there is an \(m\)-free index in \((k_{1},k_{2},k_{3},k_{4})\), without loss of generality we can assume it's \(k_{4}\). By \autoref{lemma:berbee} we can construct \(\tilde{X}_{k_{4}}\) which is independent of \(\sigma(X_{k_{1}}, X_{k_{2}}, X_{k_{3}}, \tilde{X}_{{i}}, \tilde{X}_{\tilde{j}})\) and has the same distribution as \(X_{k_{4}}\). Furthermore, according to \cite{doukhan1994Mixing} Theorem 1, since \(\sigma\pqty{\tilde{X}_{i}, \tilde{X}_{j}} \perp \sigma\pqty{X_{k_{1}},\dots,X_{k_{4}}}\), we have that \(P\pqty{X_{k_{4}} \neq \tilde{X}_{k_{4}}} \leq \beta(1,3,m)\). Using \autoref{lemma:gamma-mean} and similar to \autoref{lemma:m-free}, we have,
    \begin{align*}
        \abs{\Cov\pqty{\gga_{k_{1}k_{2}}, \gga_{k_{3}k_{4}}}} &\leq\abs{\Ep\gga_{k_{1}k_{2}} \gga_{k_{3}k_{4}}} + \abs{\Ep\gga_{k_{1}k_{2}} \Ep\gga_{k_{3}k_{4}}} \nonumber\\
        &= \abs{\Ep\gga_{k_{1}k_{2}} \gga_{k_{3}k_{4}} - \Ep\gga_{k_{1}k_{2}} \gga_{k_{3}\tilde{k}_{4}}} + \abs{\Ep\gga_{k_{1}k_{2}}} \abs{\Ep\gga_{k_{3}k_{4}}} \nonumber\\
        &\leq 2 \norm{\gga_{k_{1}k_{2}}}_{2+\frac{\delta}{2}}\norm{\gga_{k_{3}k_{4}}}_{2 + \frac{\delta}{2}} \beta(1,3,m)^{\frac{\delta}{4 + \delta}} +
        2\norm{\gga_{k_{1}k_{2}}}_{1}\norm{\gga_{k_{3}k_{4}}}_{1 + \delta} \beta(1,1,m)^{\frac{\delta}{1 + \delta}}\nonumber\\
        &\leq 4\mathbf{H}_{4 + \delta}^{4} \beta(1,3,m)^{\frac{\delta}{4 + \delta}}
    \end{align*}

    The second case is a modification of \autoref{lemma:m-free}. By constructing \(\pqty{\tilde{X}_{k_{3}}, \tilde{X}_{k_{4}}}\) that has the same distribution as \(\pqty{X_{k_{3}}, X_{k_{4}}}\) and independent of \(\sigma(X_{k_{1}}, X_{k_{2}})\), we have, by noticing that \(\Ep\gga_{k_{1}k_{2}}\gga_{\tilde{k}_{3} \tilde{k}_{4}} = \Ep\gga_{k_{1}k_{2}} \Ep\gga_{k_{3}k_{4}}\).
    \begin{align*}
        \abs{\Cov\pqty{\gga_{k_{1}k_{2}}, \gga_{k_{3}k_{4}}}}
        &= \abs{\bqty{\Ep \gga_{k_{1}k_{2}} \gga_{k_{3}k_{4}} - \Ep \gga_{k_{1}k_{2}} \gga_{\tilde{k}_{3}\tilde{k}_{4}}} \indicator{\pqty{\tilde{X}_{k_{3}}, \tilde{X}_{k_{4}}} \neq \pqty{X_{k_{3}}, X_{k_{4}}}}} \nonumber\\
        &\leq 2 \mathbf{H}_{4 + \delta}^{4} \beta(2,2,m)^{\frac{\delta}{4 + \delta}}
    \end{align*}

    As for the third case, in general we can't separate the covariance into two parts, and we resort to the bound we have \(\mathbf{\Gamma}_{m,p} = \sup_{k_{1},k_{2}}\sup_{i\in \mathcal{I},j\in \nb{i}{m}} \norm{\gga\pqty{X_{k_{1}}, \tilde{X}_{k_{2}}}}_{p}\) where \(\tilde{X}_{k_{2}}\) has the same distribution of \(X_{k_{2}}\) but independent of \(\sigma(X_{k_{1}})\). Let's consider the case when \(\dist\pqty{k_{1}, k_{3}} \leq m\) and \(\dist\pqty{k_{2},k_{4}}\leq m\).
    \begin{align*}
        \abs{\Cov\pqty{\gga_{k_{1}k_{2}}, \gga_{k_{3}k_{4}}}} &\leq \abs{\Ep\bqty{\gga_{k_{1}k_{2}}\gga_{k_{3}k_{4}}}} + \abs{\Ep\gga_{k_{1},k_{2}} \Ep\gga_{k_{3}k_{4}}}\nonumber \\
        &\leq \mathbf{\Gamma}_{m,2}^{2} +\abs{\Ep\bqty{\gga_{k_{1}k_{2}}\gga_{k_{3}k_{4}} - \gga_{\tilde{k}_{1} k_{2}} \gga_{\tilde{k}_{3}k_{4}}}} + \abs{\Ep\gga_{k_{1},k_{2}} \Ep\gga_{k_{3}k_{4}}}\nonumber \\
        &\leq \mathbf{\Gamma}_{m,2}^{2} + 6\mathbf{H}_{4 + \delta}^{4}\beta(2,2,m)^{\frac{\delta}{4 + \delta}}.
    \end{align*}
    The last inequality comes from the estimates that, by constructing \(\tilde{X}_{k_{2}} \perp \tilde{X}_{k_{1}}\), \(\Ep\gga_{\tilde{k}_{1}k_{2}} = 0\), and
    \begin{equation*}
        \Ep\abs{\gga_{k_{1}k_{2}} - \gga_{\tilde{k}_{1}k_{2}}} \leq 2\mathbf{H}_{2 + \delta}^{2} \beta(1,1,m)^{\frac{\delta}{2 + \delta}}.
    \end{equation*}
    Therefore, we have \(\abs{\Ep\gga_{k_{1}k_{2}}\Ep\gga_{k_{3}k_{4}}} \leq 4 \mathbf{H}_{2 + \delta}^{4}\beta(1,1,m)^{\frac{2\delta}{2 + \delta}}\), which is of smaller order than the middle term.
\end{proof}

\subsection{Non-degenerate U-statistics}\subsubsection{Proof for \autoref{thm:lln for non-degenerate}}

We start from \eqref{eq:hoeffding-depdendent}. We have
\begin{equation*}
    \frac{1}{n(n-1)} \sum_i \sum_{k\neq i} \pqty{H_{ik} - \theta_{ik}} = \frac{2}{n} \sum_{i} h_{i} + \frac{1}{n(n-1)} \sum_i \sum_{k\neq i} \pqty{H_{ik} - \hat{H}_{i}(X_{k}) - \hat{H}_{k}(X_{i}) + \theta_{ik}}.
\end{equation*}
Now \(\Ep h_{i} = 0\) for all \(i\), and for \(p\geq 1\),
\begin{equation*}
    \norm{h_{i}}_{p} \leq \frac{1}{n-1}\sum_{k\neq i} \norm{\hat{H}_{k}(X_{i}) - \theta_{ik}}_{p} \leq 2\mathbf{H}_{p}.
\end{equation*}
Noticing that \((i,j) \in T_{2}^{m} \cup T_{1,1}^{m}\), with \(\abs{T_{1,1}^{m}} = \tau_{1,1}^{m} \leq n^{2}\), \autoref{lemma:aux-one-index} gives, for \(j\notin \nb{i}{m}\),
\begin{equation*}
    \abs{\Ep\bqty{h_{i} h_{j}}} \lesssim \mathbf{H}_{2+\delta}^{2} \beta(1,1,m)^{\frac{\delta}{2+\delta}} \leq \mathbf{H}_{2+\delta}^{2} \beta(1,3,m)^{\frac{\delta}{2+\delta}},
\end{equation*}
where the last inequality uses monotonicity of the mixing coefficient in the block size. Hence
\begin{align*}
    \Ep \big(\sum_{i} h_{i}\big)^{2}
    &=  \sum_{i}\sum_{j} \Ep h_{i} h_{j} \\
    &\lesssim \tau_{2}^{m} \sup_{i} \norm{h_{i}}_{2}^{2} + n^{2} \sup_{i,j: j\notin \nb{i}{m}} \abs{\Ep h_{i} h_{j}} \\
    &\lesssim \tau_{2}^{m} \mathbf{H}_{2}^{2} + n^{2} \mathbf{H}_{2+\delta}^{2} \beta(1,3,m)^{\frac{\delta}{2+\delta}}.
\end{align*}
Hence \(\frac{1}{n} \sum_{i}h_{i} = o_{p}(1).\)

For the remainder term, write
\begin{equation*}
    R_{ik} = H_{ik} - \hat{H}_{i}(X_{k}) - \hat{H}_{k}(X_{i}) + \theta_{ik}.
\end{equation*}
As shown in the proof of \autoref{thm:non-degenerate} below, \(\norm{R_{ik}}_{p} \leq 4 \mathbf{H}_{p}\). If \((i_{1},k_{1},i_{2},k_{2})\in T_{4}^{m}\cup T_{2,2}^{m}\), there are at most \(\tau_{4}^{m}+\tau_{2,2}^{m}\) such terms and each term is bounded by \(16\mathbf{H}_{2}^{2}\). If \((i_{1},k_{1},i_{2},k_{2})\in T_{3,1}^{m}\cup T_{2,1,1}^{m}\cup T_{1^{4}}^{m}\), there is an \(m\)-free index among \((i_{1},k_{1},i_{2},k_{2})\). Coupling that index with an independent copy by Berbee's lemma, using \autoref{lemma:aux-remainder-centered}, and the same H\"older argument as in \autoref{lemma:m-free} yields
\begin{equation*}
    \abs{\Ep\bqty{R_{i_{1}k_{1}}R_{i_{2}k_{2}}}} \lesssim \mathbf{H}_{2+\delta}^{2} \beta(1,3,m)^{\frac{\delta}{2+\delta}}.
\end{equation*}
Therefore,
\begin{equation*}
    \Ep \bqty{\pqty{\sum_i \sum_{k\neq i} \pqty{H_{ik} - \hat{H}_{i}(X_{k}) - \hat{H}_{k}(X_{i}) + \theta_{ik}}}^{2}} \lesssim {\pqty{\tau_{4}^{m}+ \tau_{2,2}^{m}} \mathbf{H}_{2}^{2} + n^{4} \mathbf{H}_{2+\delta}^{2} \beta(1,3,m)^{\frac{\delta}{2+\delta}}} = o(n^{4}),
\end{equation*}
as \(\tau_{2,2}^{m} \leq n^{2}\tau_{2}^{m}\) and \(\tau_{3,1}^{m} + \tau_{2,1,1}^{m} + \tau_{1^{4}}^{m} \leq n^{4}\). \qed

\subsubsection{Proof for \autoref{thm:non-degenerate}}
The rest of the section proves \autoref{thm:non-degenerate}. Start from \eqref{eq:hoeffding-depdendent}, let \(\nu_{n}^{2} = \Var\bqty{\sum_{i} h_{i}}\), and
\begin{equation*}
    \hat{W}_{n} = \frac{1}{2(n-1)\nu_{n}}\hat{S}_{n} = \frac{1}{\nu_{n}}\sum_{i\in \mathcal{I}_{n}} h_{i}.
\end{equation*}
Notice that
\begin{equation*}
    \dist_{W} \pqty{W_{n}, Z} \leq \dist_{W}\pqty{\hat{W}_{n}, Z} + \dist_{W}\pqty{\hat{W}_{n}, W_{n}},
\end{equation*}
where
\begin{equation*}
    \dist_{W}\pqty{\hat{W}_{n}, W_{n}} = \sup_{f\in \mathbb{L}_{1}}\abs{\Ep f\pqty{\hat{W}_{n}} - \Ep f(W_{n})} \leq \Ep \abs{\hat{W}_{n} - W_{n}} \leq \frac{1}{2(n-1)\nu_{n}}\sqrt{\Ep S_{n}^{* 2}}.
\end{equation*}
We first bound \(\Ep S_{n}^{* 2}\) and then bound \(\dist_{W}\pqty{\hat{W}_{n}, Z}\) using Stein's method.

\paragraph{Degenerate part \tmath{S^{*}_{n}}.}
We first consider the degenerate part. For \(R_{ik} = H_{ik} - \hat{H}_{k}(X_{i}) - \hat{H}_{i}(X_{k}) + \theta_{ik}\),
\begin{equation}\label{eq:S-star}
    \Ep S^{*2}_{n} =  \sum_{i_{1}\neq k_{1}}\sum_{i_{2}\neq k_{2}} \Ep \bqty{R_{i_{1}k_{1}}R_{i_{2}k_{2}}}.
\end{equation}
Notice that for \(p \geq 1\), with conditional Jensen's inequality, by constructing a generic \(\tilde{X}_{k}\) that has the same distribution as \(X_{k}\) for some \(k\in \mathcal{I}\) while being independent of \(X_{i}\) using \autoref{lemma:berbee}, we have
\begin{equation}\label{eq:norm of h}
    \norm{\hat{H}_k(X_{i})}_{p} = \pqty{\Ep\abs{\hat{H}_k(X_{i})}^{p}}^{\frac{1}{p}} = \pqty{\Ep\abs{\Ep\bqty{H(X_{i}, \tilde{X}_{k}) \mid X_{i}}}^{p}}^{\frac{1}{p}} \leq \pqty{\Ep \abs{H_{i, \tilde{k}}}^{p}}^{\frac{1}{p}} \leq\mathbf{H}_{p},
\end{equation}
and hence \(\norm{R_{ik}}_{p} = \norm{H_{ik} - \hat{H}_{i}(X_{k}) - \hat{H}_{k}(X_{i}) + \theta_{ik}}_{p} \leq 4 \mathbf{H}_{p}\).

The summation in \eqref{eq:S-star} can be separated into the following cases. For \((i_{1},k_{1},i_{2},k_{2})\in T_{4}^{m}\cup T_{2,2}^{m}\), we have fewer than \(\tau_{4}^{m} + \tau_{2,2}^{m}\) terms and each term is bounded by \(16 \mathbf{H}_{2}^{2}\). For \((i_{1},k_{1},i_{2},k_{2})\in T_{3,1}^{m}\cup T_{2,1,1}^{m}\cup T_{1^{4}}^{m}\), there is an \(m\)-free index. Coupling that index to an independent copy by \autoref{lemma:berbee}, the coupled product has mean zero by the auxiliary decoupled-centering lemma in Appendix A.1, and the same argument as in \autoref{lemma:m-free} gives
\begin{equation*}
    \abs{\Ep \bqty{R_{i_{1}k_{1}}R_{i_{2}k_{2}}}} \lesssim \mathbf{H}_{2 + \delta}^{2} \beta\pqty{1,3,m}^{\frac{\delta}{2 + \delta}}.
\end{equation*}
Hence we have
\begin{equation*}
    \Ep S^{*2}_{n} \lesssim \pqty{\tau_{4}^{m} + \tau_{2,2}^{m}} \mathbf{H}_{2}^{2} + \hat{\tau}_{4}^{m} \mathbf{H}_{2 + \delta}^{2} \beta(1,3,m)^{\frac{\delta}{2 + \delta}}.
\end{equation*}

For the projection part, we use Stein's method to bound \(\dist_{W}\pqty{\hat{W}_{n}, Z}\) for a standard normal random variable \(Z\). Recall \(\hat{W}:= \hat{W}_{n} = \frac{1}{\nu_{n}} \sum_{i\in \mathcal{I}} h_{i}\), our approximate Stein's coupling is constructed with the following quantities, \(G_{i} = \frac{1}{\nu_{n}}h_{i}\), \(D_{i} =\hat{W}- W_{i}' = \frac{1}{\nu_{n}}\sum_{j\in \nb{i}{m}} h_{j}\), \(D_{ij} = \frac{1}{\nu_{n}} h_{j}\) where \(j \in \nb{i}{m}\), and \(D_{ij}' = \sum_{k\in\nb{i}{m}\cup \nb{j}{m}}\frac{1}{\nu_{n}} h_{k}\). In the following, we treat each term in \autoref{lemma:stein}.

\paragraph{Bounds on \tmath{A^{N}_{1}}.}
For the term \(A^{N}_{1}\), we have \(A^{N}_{1} \leq \frac{1}{\nu_{n}} \sum_{i\in \mathcal{I}} \Ep\abs{\Ep^{W_{i}'} h_{i}}\), where \(\Ep^{X}\bqty{Y} = \Ep\bqty{Y \mid X}\).
The properties for \(h_{i} = \frac{1}{n-1}\sum_{k\neq i} (\hat{H}_{k}(X_{i}) - \theta_{ik})\) include \(\Ep h_{i} = 0\) and
\begin{equation*}
    \norm{h_{i}}_{p} \leq \frac{1}{(n-1)}\sum_{k\neq i} \norm{\hat{H}_{k}(X_{i}) - \theta_{ik}}_{p} \leq 2 \mathbf{H}_{p}.
\end{equation*}
Notice that \(W_{i}'\in \mathcal{F}_{- i}^{m} = \sigma\cqty{X_{j}: j\notin \nb{i}{m}}\). By \autoref{lemma:aux-one-index} with \(g_{i}=h_{i}\) and \(\mathcal{G}=\mathcal{F}_{- i}^{m}\), for all \(i\in \mathcal{I}\),
\begin{equation*}
    \Ep\abs{\Ep^{W_{i}'} h_{i}} \leq 2 \norm{h_{i}}_{1 + \delta} \beta(1,\infty, m)^{\frac{\delta}{1 + \delta}} \lesssim \mathbf{H}_{1 + \delta} \beta(1,\infty, m)^{\frac{\delta}{1 + \delta}},
\end{equation*}
therefore we have
\begin{equation}
    A^{N}_{1} \lesssim \frac{n}{\nu_{n}}\mathbf{H}_{1 + \delta} \beta(1,\infty, m)^{\frac{\delta}{1 + \delta}}.
\end{equation}

\paragraph{Bounds on \tmath{A^{N}_{2}}.}

For the second term, under the assumption that \(\Ep h_{i} = 0\) for all \(i\), we have
\begin{equation*}
    \nu_{n}^{2} = \Var\pqty{\sum_{i\in \mathcal{I}} h_{i}} = \sum_{i\in \mathcal{I}}\sum_{j\in \mathcal{I}} \Ep\bqty{h_{i}h_{j}}.
\end{equation*}
Then for \(A^{N}_{2} = \abs{1 - \sum_{i} \sum_{j \in \nb{i}{m}} M_{ij}}\), where \(M_{ij} = \Ep G_{i}D_{ij} = \frac{1}{\nu_{n}^{2}}\Ep\bqty{h_{i}h_{j}}\),
\begin{equation*}
    A^{N}_{2} = \frac{1}{\nu_{n}^{2}} \abs{\nu_{n}^{2} - \sum_{i}\sum_{j\in \nb{i}{m}}\Ep h_{i}h_{j}} = \frac{1}{\nu_{n}^{2}}\abs{\sum_{i}\sum_{j\notin \nb{i}{m}} \Ep h_{i}h_{j}}.
\end{equation*}
By \autoref{lemma:aux-one-index}, \(\abs{\Ep(h_{i} h_{j})}\lesssim \mathbf{H}_{2 + \delta}^{2} \beta\pqty{1,1,m}^{\frac{\delta}{2 + \delta}}\) for \(j\notin \nb{i}{m}\), therefore,
\begin{equation}
    A^{N}_{2}\lesssim \frac{n^{2}}{\nu_{n}^{2}}\mathbf{H}_{2 + \delta}^{2} \beta(1,1,m)^{\frac{\delta}{2 + \delta}}.
\end{equation}

\paragraph{Bounds on \tmath{A^{N}_{3}}.}

For the third term, with \(M_{ij} = \frac{1}{\nu_{n}^{2}}\Ep\bqty{h_{i}h_{j}}\), we have,
\begin{align*}
    A^{N}_{3} = \Ep\sum_{i} \sum_{j\in \nb{i}{m}} \abs{\Ep^{W_{ij}'} \pqty{M_{ij} - G_{i}D_{ij}}} = \frac{1}{\nu_{n}^{2}} \Ep\sum_{i} \sum_{j\in \nb{i}{m}} \abs{\Ep^{W_{ij}'} \pqty{\Ep h_{i}h_{j} - h_{i}h_{j}}}.
\end{align*}
Let \((\tilde{X}_{i}, \tilde{X}_{j})\) be a copy of \((X_{i}, X_{j})\) that is independent of \(\mathcal{F}_{- i, - j}^{m} = \sigma(X_{k}: k\notin \nb{i}{m}\cup \nb{j}{m})\), we have
\begin{equation*}
    \Ep\bqty{h_{\tilde{i}}h_{\tilde{j}} \mid \mathcal{F}_{- i, - j}^{m}} = \Ep h_{\tilde{i}}h_{\tilde{j}}  \mathtext{a.s.}
\end{equation*}
Therefore, by arguments similar to \autoref{lemma:m-free},
\begin{align*}
    \Ep\abs{\Ep^{W_{ij}'} (\Ep h_{i}h_{j} - h_{i}h_{j})} &= \Ep\abs{\Ep^{W_{ij}'} \pqty{\Ep h_{i}h_{j} - h_{i}h_{j} - \bqty{\Ep h_{\tilde{i}}h_{\tilde{j}} - h_{\tilde{i}}h_{\tilde{j}}}}} \\
    &\leq \Ep\abs{\pqty{\Ep h_{i}h_{j} - h_{i}h_{j} - \bqty{\Ep h_{\tilde{i}}h_{\tilde{j}} - h_{\tilde{i}}h_{\tilde{j}}}} \indicator{\pqty{X_{i}, X_{j}} \neq \pqty{\tilde{X}_{i}, \tilde{X}_{j}}}}\\
    &\leq \norm{\Ep h_{i}h_{j} - h_{i}h_{j} - \bqty{\Ep h_{\tilde{i}}h_{\tilde{j}} - h_{\tilde{i}}h_{\tilde{j}}}}_{1 + \frac{\delta}{2}} \norm{\indicator{\pqty{X_{i}, X_{j}} \neq \pqty{\tilde{X}_{i}, \tilde{X}_{j}}}}_{\frac{2 + \delta}{\delta}}.
\end{align*}
As \(\norm{\mathbb{I}\{(X_{i}, X_{j}) \neq (\tilde{X}_{i}, \tilde{X}_{j})\}}_{\frac{2 + \delta}{\delta}} = \beta(2,\infty,m)^{\frac{\delta}{2 + \delta}}\), we have,
\begin{equation}
    A^{N}_{3} \lesssim \frac{\tau_{2}^{m}}{\nu_{n}^{2}}\mathbf{H}_{2 + \delta}^{2}\beta(2,\infty, m)^{\frac{\delta}{2 + \delta}}.
\end{equation}

\paragraph{Bounds on \tmath{A^{N}_{4}} and \tmath{A^{N}_{5}}.}

For the fourth term,
\begin{align*}
    A^{N}_{4} &= \Ep\sum_{i} \sum_{j\in \nb{i}{m}} \abs{\pqty{M_{ij} - G_{i}D_{ij}} D_{ij}'}\\
    &\leq \frac{1}{\nu_{n}^{3}} \sum_{i}\sum_{j\in \nb{i}{m}} \sum_{k \in \nb{i}{m}\cup\nb{j}{m}}\bqty{\abs{\Ep h_{i}h_{j}}\Ep\abs{h_{k}} + \Ep\abs{h_{i}h_{j}h_{k}}}.
\end{align*}

The summands are \(O(\mathbf{H}_{3}^{3})\), and for any \(j\in\nb{i}{m}\) and \(k\in \nb{i}{m}\cup\nb{j}{m}\), \((i,j,k)\in T_{3}^{m}\), so there are \(O(\tau_{3}^{m})\) terms in the summation. Therefore,
\begin{equation}
    A^{N}_{4} \lesssim \frac{\tau_{3}^{m} }{\nu_{n}^{3}} \mathbf{H}_{3}^{3}.
\end{equation}
For the last term, we have similarly,
\begin{equation}
    A^{N}_{5} =\sum_{i} \Ep\abs{G_{i} D_{i}^{2}} = \frac{1}{\nu_{n}^{3}} \sum_{i} \sum_{j_{1}\in \nb{i}{m}, j_{2}\in \nb{i}{m}} \Ep\abs{h_{i}h_{j_{1}}h_{j_{2}}} \lesssim \frac{1}{\nu_{n}^{3}}\tau_{3}^{m} \mathbf{H}_{3}^{3}.
\end{equation}

\autoref{thm:non-degenerate} follows by combining the bounds for \(A^{N}_{1}, \dots, A^{N}_{5}\) with \(\hat{\tau}_{4}^{m} \leq n^{4}\).\qed

\subsection{Degenerate U-statistics}The main result of this subsection is the following theorem.
\begin{theorem}\label{thm:degenerate in detail}
    For degenerate U-statistics \(S_{n}\) defined as before, and \(s_{n}^{2} = \Var(S_{n})\), let \(W_{n} = \frac{1}{s_{n}}(S_{n} - \Ep S_{n})\) and \(Z \sim N(0,1)\),
    \begin{equation*}
        d_{W}\pqty{W_{n} , Z} \leq A_{0} + 2 A_{1} + \sqrt{\frac{2}{\pi}} A_{2} +  \sqrt{\frac{2}{\pi}}A_{3} +  2A_{4} + A_{5},
    \end{equation*}
    where the components are,
    \begin{align*}
        A_{0} &\lesssim \frac{1}{s_{n}}\sqrt{\tau^{4m}_{4} \mathbf{H}_{2}^{2} + \tau_{2,2}^{4m} \mathbf{H}_{2 + \delta}^{2} \beta_{4}(4m)^{\frac{\delta}{2 + \delta}}}, \quad A_{1} \lesssim  \frac{n^{2}}{s_{n}} \mathbf{H}_{1 + \delta} {\beta}(1,\infty,m)^{\frac{\delta}{1 + \delta}},\\
        A_{2} &\lesssim \frac{1}{s_{n}^{2}}\bqty{\tau_{4}^{4m} \mathbf{H}_{2}^{2} + n^{4}\mathbf{H}_{2 + \delta}^{2}\beta_{4}(m)^{\frac{\delta}{2 + \delta}}}, \\
        A_{3} &\lesssim \frac{\tau_{2}^{m}}{s_{n}^{2}}\pqty{n^{2}\mathbf{H}_{2+\delta}^{2} \beta(2,\infty, m)^{\frac{\delta}{2+\delta}} + \sqrt{\tau_{4}^{m} \mathbf{H}_{4}^{4} + (\hat{\tau}_{4}^{m} + \tau_{2,2}^{m}) \mathbf{H}_{4 + \delta}^{4}\beta_{4}(m)^{\frac{\delta}{4+\delta}} + \tau_{2,2}^{m}\mathbf{\Gamma}^{2}_{m,2} }}, \\
        A_{4}+ A_{5} &\lesssim \frac{\sqrt{\tau_{2}^{m}\tau_{4}^{m} + n \tau_{5}^{m}}}{s_{n}^{3}} \bqty{(\tau_{4}^{m} + \tau_{2,2}^{m}) \mathbf{H}_{4}^{4} + \hat{\tau}_{4}^{m} \mathbf{H}_{4 + \delta}^{4} \beta(1,6,m)^{\frac{\delta}{4 + \delta}}}^{\frac{1}{2}}\bqty{\tau_{2}^{m} \tilde{\mathbf{H}}_{2}^{2} + n^{2} \mathbf{H}_{2 + \delta}^{2}\beta_{4}(m)^{\frac{\delta}{2 + \delta}}}^{\frac{1}{2}},
    \end{align*}
    with \(\hat{\tau}^{m}_{4} = \tau_{3,1}^{m} + \tau_{2,1,1}^{m} + \tau_{1^{4}}^{m}\), \(\beta_{q}(m) = \max_{q_{1},q_{2}} \cqty{\beta(q_{1},q_{2},m) : q_{1},q_{2}\in \mathbb{N}^{+}, q_{1} + q_{2} = q}\) and \(\tilde{\mathbf{H}}_{2} = \sup_{i\neq k} \norm{H_{i,\tilde{k}}}_{2}\).
\end{theorem}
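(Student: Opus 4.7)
The plan is to apply Lemma~\ref{lemma:stein} to $W_n = (S_n - \Ep S_n)/s_n$ with a Stein coupling tailored to the degenerate U-statistic structure. I will index the coupling by ordered pairs $\mathbf{i} = (i,k)$ with $i \neq k$, setting $G_{\mathbf{i}} = (H_{ik} - \Ep H_{ik})/s_n$ so that $\sum_{\mathbf{i}} G_{\mathbf{i}} = W_n$. The dependence neighbourhood $\mathcal{J}_{\mathbf{i}}$ will collect all ordered pairs $\mathbf{j} = (j,l)$ with $\cqty{j,l} \cap \pqty{\nb{i}{m} \cup \nb{k}{m}} \neq \emptyset$, and I set $D_{\mathbf{i},\mathbf{j}} = (H_{jl} - \Ep H_{jl})/s_n$. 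The enlarged neighbourhood used to define $D_{\mathbf{i},\mathbf{j}}'$ will have radius $2m$ around $\cqty{i,k,j,l}$, so that Berbee's lemma (Lemma~\ref{lemma:berbee}) permits decoupling the four observations $(X_i, X_k, X_j, X_l)$ from $W - D_{\mathbf{i},\mathbf{j}}'$ with error governed by $\beta(4,\infty,m)$.

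Each of the five Stein terms $A_1, \ldots, A_5$ then reduces to a sum over tuples of indices, and I will partition every such sum by the $m$-profile $\pi_m(\cdot)$ defined in Section~\ref{sec:model}. For a tuple with an $m$-free index (last entry of the profile equal to $1$), the degeneracy condition $\hat{H}_k \equiv 0$ combined with Lemma~\ref{lemma:m-free} yields a product bound that is small in the $\beta$-mixing coefficient; the tuple count is at most $\hat{\tau}_{2J}^m$ or $n^{2J}$, producing the $\beta$-mixing terms in the bound. For tuples with no $m$-free index the count is controlled by $\tau_4^m$, $\tau_{2,2}^m$, $\tau_3^m$, or $\tau_5^m$, and a H\"older estimate with the moment assumption $\norm{H_{ik}}_p \leq \mathbf{H}_p$ supplies the product bound. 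The treatment of $A_3$ is more delicate: the conditional expectation $\Ep\bqty{H_{ik}H_{jl} \mid W - D_{\mathbf{i},\mathbf{j}}'}$ is first decoupled from the outer $\sigma$-field via Berbee, producing a $\Gamma^{i,j}(X_k, X_l)$-term whose covariances are bounded via Lemma~\ref{lemma:gamma-cov} and contribute the $\mathbf{\Gamma}_{m,2}$ factor in the final bound.

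The additional term $A_0$, which has no counterpart in the non-degenerate case, captures an initial truncation step. The Berbee coupling used inside $A_3$ requires the $4m$-enlargement around $\cqty{i,k,j,l}$ to be free of tightly clustered configurations, so I first replace $W_n$ by $\tilde{W}_n$ obtained by discarding contributions from index tuples whose $4m$-profile lies in $T_4^{4m} \cup T_{2,2}^{4m}$. The Wasserstein distance incurred is bounded by $s_n^{-1}\sqrt{\Var(S_n - \tilde{S}_n)}$, and splitting the second-moment computation by profile gives the two summands in $A_0$: $\tau_4^{4m}\mathbf{H}_2^2$ from tightly clustered tuples handled by moment bounds, and $\tau_{2,2}^{4m}\mathbf{H}_{2+\delta}^2 \beta_4(4m)^{\delta/(2+\delta)}$ from tuples admitting a Berbee coupling across the $4m$ gap.

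The main obstacle is $A_3$. Unlike the non-degenerate case, where the analogous step controls a linear centered functional and reduces to a single second-moment estimate, here $M_{\mathbf{i},\mathbf{j}} - G_{\mathbf{i}}D_{\mathbf{i},\mathbf{j}}$ is a quadratic functional whose conditional expectation is a function of $\Gamma^{i,j}(X_{k_1}, X_{k_2})$. The $m$-profile classification of the $4$-tuple $(i,k,j,l)$ branches into three cases that must each be handled by a different instance of Lemma~\ref{lemma:gamma-cov}: tuples admitting an $m$-free index, tuples in $T_{2,2}^m$ with the $m$-separation aligned with the natural pairing of $\Gamma$, and tuples in $T_{2,2}^m$ with the separation cutting across this pairing. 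Combining these with Cauchy--Schwarz on the triple sums that appear in $A_4$ and $A_5$ produces the nested square-root structure with factors $\sqrt{\tau_2^m\tau_4^m + n\tau_5^m}$ and the $\mathbf{H}_4, \mathbf{H}_{4+\delta}$ moment terms in the stated bound.
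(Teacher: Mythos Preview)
Your overall framework—Lemma~\ref{lemma:stein}, Berbee coupling, an $m$-profile case split, and a preliminary truncation producing $A_0$—matches the paper's. The gap is in the choice of Stein coupling. You index by ordered pairs $\mathbf{i}=(i,k)$ with $G_{(i,k)}=(H_{ik}-\Ep H_{ik})/s_n$; the paper instead indexes by \emph{single} nodes $i$, truncates to $k\notin\nb{i}{4m}$, and sets
\[
G_i=\frac{1}{s_n}\sum_{k\notin\nb{i}{4m}}H_{ik},\qquad
D_{ij}=\frac{2}{s_n}\sum_{k\notin\nb{j}{4m}}H_{jk}\quad(j\in\nb{i}{m}),
\]
so that the double sum over the ``far'' indices $k_1,k_2$ lives \emph{inside} $G_iD_{ij}$. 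This is what makes the $A_3$ bound in the statement---in particular the $\mathbf{\Gamma}_{m,2}$ term---go through.

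With the paper's coupling, $W_{ij}'\in\mathcal{F}^m_{-ij}$ and, because the $4m$ radius together with $j\in\nb{i}{m}$ forces $\dist(k_\ell,\{i,j\})>3m$, also $X_{k_1},X_{k_2}\in\mathcal{F}^m_{-ij}$. One then Berbee-couples only the close pair $(X_i,X_j)$ (hence $\beta(2,\infty,m)$, not $\beta(4,\infty,m)$), obtaining $\Ep^{\mathcal{F}^m_{-ij}}[G_iD_{ij}]\approx 2s_n^{-2}\sum_{k_1,k_2}\Gamma^{i,j}_{k_1k_2}$. The sum over $k_1,k_2$ is inside the absolute value, so Cauchy--Schwarz turns $A_{32}$ into $\big(\sum_{k_1,\dots,k_4}\Cov(\Gamma^{i,j}_{k_1k_2},\Gamma^{i,j}_{k_3k_4})\big)^{1/2}$, and Lemma~\ref{lemma:gamma-cov} is applied to the four-tuple $(k_1,k_2,k_3,k_4)$; this is precisely where $\tau_{2,2}^m\mathbf{\Gamma}_{m,2}^2$ and the Hall-type structure appear. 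In your pair-indexed coupling, $G_{(i,k)}D_{(i,k),(j,l)}=H_{ik}H_{jl}/s_n^2$ is a single product: there is no inner sum on which to run Cauchy--Schwarz, and the outer sums over $(i,k)$ and $(j,l)$ sit outside the absolute value in $A_3$. Moreover, your neighbourhood $\mathcal{J}_{(i,k)}$ only forces one of $j,l$ to be near one of $i,k$; it does not sort the four indices into a ``close'' pair $\{i,j\}$ and a ``far'' pair $\{k,l\}$, so decoupling $(X_i,X_j)$ alone does not yield $\Gamma^{i,j}(X_k,X_l)$, and decoupling all four as you propose kills the $\Gamma$-structure entirely and cannot recover the stated $A_3$ bound. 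Your description of $A_0$ is also off: the truncation discards \emph{pairs} $(i,k)$ with $k\in\nb{i}{4m}$; the profiles $T_4^{4m}$ and $T_{2,2}^{4m}$ arise only when computing the second moment of the discarded piece.
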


\paragraph{Bounds on \tmath{A_{0}}.}
In the classical settings, for each summand, the two inputs \(X_{i},X_{k}\) to kernel \(H\pqty{X_{i}, X_{k}}\) are independent because the indices \(i\neq k\), which is guaranteed by the form of the U-statistics, c.f. V-statistics. We try to imitate this property using a trick to first delete the elements \(H_{ik}\) such that \(\dist\pqty{i,k}\) are small, and then construct a Stein's Coupling for the remainder terms. In what follows, we quantify the error made by this approximation. Since we have demeaned the U-statistics, we will assume in the following \(\Ep H_{ik} = 0\) for all \(i\neq k\in \mathcal{I}_{n}\).

Define \(H_{i i} = 0\), and \(S^{\backslash} = \sum_{i}\sum_{k\in \nb{i}{4m}, k\neq i} H_{ik}\) and \(W^{\backslash} = \frac{1}{s_{n}} S^{\backslash}\), we demonstrate that the effect of deleting these elements is negligible by bounding \(A_{0} := d_{W} \pqty{W, W^{*}}\),
\begin{equation}
    \Ep \pqty{S^{\backslash^{2}}}= \Ep\bqty{\sum_{i_{1},i_{2}}\sum_{k_{1}\in\nb{i_{1}}{4m}, k_{2}\in \nb{i_{2}}{4m}}H_{i_{1}k_{1}}H_{i_{2}k_{2}}}\lesssim \tau^{4m}_{4} \mathbf{H}_{2}^{2} + \tau_{2,2}^{4m} \mathbf{H}_{2 + \delta}^{2} \beta(2,2, 4m)^{\frac{\delta}{2 + \delta}}.
\end{equation}
where the second line come from the fact that for any vector \(\pqty{i_{1},i_{2},k_{1},k_{2}}\) such that \(k_{1}\in \nb{i_{1}}{4m}\) and \(k_{2}\in \nb{i_{2}}{4m}\), only two cases can happen. Either \(\pqty{i_{1}, i_{2}, k_{1}, k_{2}}\in T_{4}^{4m}\) or \(\pqty{i_{1},i_{2}, k_{1}, k_{2}}\in T_{2,2}^{4m}\). For the first case, we have \(\tau_{4}^{4m}\) terms and each can be bounded by \(\mathbf{H}_{2}^{2}\). For the second case, we know that \(\dist\pqty{\pqty{i_{1},k_{1}}, \pqty{i_{2}, k_{2}}} > 4m\), and we can obtain the bound similar to \autoref{lemma:m-free}, noticing that \(\Ep H_{ik}= 0\) for \(i\neq k\).

Let \(W^{*} = W - W^{\backslash}\) and \(A_{0} = d_{w} \pqty{W, W^{*}} = \sup_{f\in \mathbb{L}} \abs{\Ep f\pqty{W} - \Ep f\pqty{W^{*}}}\). We have that \(A_{0}\) is bounded by \(\Ep\abs{W - W^{*}} \leq \sqrt{\Ep\pqty{W^{\backslash^{2}}}}\).

Next, we apply the following construction of approximate Stein's Coupling for \(W^{*}\) as in \autoref{lemma:stein}. For each \(i \in \mathcal{I}\), we construct,
\begin{align*}
    G_{i} = \frac{1}{s_{n}} \sum_{k\notin \nb{i}{4m}}H_{ik}, \quad
    D_{i} = \frac{1}{s_{n}}\sum_{j\in \nb{i}{m}}\sum_{k \notin \nb{j}{4m}} [H_{jk}+H_{kj}], \quad
    W_{i}' = W^{*} - D_{i},
\end{align*}
and for each \(i\) and \(j \in \nb{i}{m}\),
\begin{equation*}
    D_{ij} = \frac{1}{s_{n}}\sum_{k \notin \nb{j}{4m}} [H_{jk} + H_{kj}], \quad
    D_{ij}' = \frac{1}{s_{n}}\sum_{i' \in \nb{i}{m}\cup \nb{j}{m}}\sum_{k\notin \nb{i'}{4m}}[H_{i'k} + H_{ki'}].
\end{equation*}
and we have \(d_{w}(W_{n}, N(0,1)) \leq  2 A_{1} + \sqrt{\frac{2}{\pi}} A_{2} +  \sqrt{\frac{2}{\pi}}A_{3} +  2A_{4} + A_{5}\),
where
\begin{align*}
    A_{1} &= \Ep\sum_{i}\abs{\Ep^{W_{i}'} G_{i}},
    &A_{2} &= \Ep\abs{1 - \sum_{i}\sum_{j \in \nb{i}{m}} M_{ij}},\\
    A_{3} &= \Ep\sum_{i}\sum_{j \in \nb{i}{m}}\abs{\Ep^{W_{ij}'} \pqty{M_{ij}- G_{i}D_{ij}}},
    &A_{4} &= \Ep\sum_{i}\sum_{j \in \nb{i}{m}} \abs{\pqty{M_{ij} - G_{i}D_{ij}}D_{ij}'}, \\
    A_{5} &= \sum_{i} \Ep\abs{G_{i}D_{i}^{2}}.
\end{align*}

\paragraph{Bounds on \tmath{A_{1}}.}

Firstly, we notice that \(W_{i}' \in \mathcal{F}_{- i}^{m} = \sigma\cqty{X_{j}:j\notin \nb{i}{m}}\), because those ordered summands \(H_{jk}\) for which
\(j \in \mathcal N_i^m\) or \(k \in \mathcal N_i^m\) will be contained in \(D_{i}\) and thus removed from \(W^{*}\) to get \(W_{i}'\).
By taking conditional expectation, we have,
\begin{equation*}
    {A_{1}} = \sum_{i} \Ep\abs{\Ep^{W_{i}'} G_{i}} = \frac{1}{s_{n}}\sum_{i} \Ep\abs{\Ep^{W_{i}'} \sum_{k\notin \nb{i}{4m}} \Ep\bqty{H_{ik}\mid \mathcal{F}_{-i}^{m}}}.
\end{equation*}

For each \(i\) we have constructed a random variable \(\tilde{X}_{i}\) that is independent of \(\mathcal{F}_{- i}^{m}\) and has the same distribution as \(X_{i}\), such that \(P(X_{i} \neq \tilde{X}_{i}) \leq \beta(1,n - 1,m)\leq \beta(1, \infty, m)\) by \autoref{lemma:berbee}. We also take advantage of the degeneracy condition that \(\Ep\bqty{H_{\tilde{i}k} \mid \mathcal{F}_{- i}^{m}} = \Ep\bqty{H_{\tilde{i}k}\mid X_k} = 0\) almost surely to get,
\begin{align}
    A_{1}&=\frac{1}{s_{n}}\sum_{i} \Ep\abs{\Ep^{W_{i}'} \sum_{k\notin \nb{i}{4m}} \Ep\bqty{\pqty{H_{ik} - H_{\tilde{i}k} } \indicator{X_{i} \neq \tilde{X}_{i}} \mid \mathcal{F}_{-i}^{m}}} \nonumber \\
    &\leq \frac{1}{s_{n}}\sum_{i} \sum_{k\notin \nb{i}{4m}} \Ep\abs{\pqty{H_{ik} - H_{\tilde{i}k} } \indicator{X_{i} \neq \tilde{X}_{i} } } \nonumber \\
    &\leq  \frac{2n^{2}}{s_{n}} \mathbf{H}_{1 + \delta} \beta(1, \infty, m)^{\frac{\delta}{1 + \delta}}
\end{align}

\paragraph{Bounds on \tmath{A_{2}}.}

The part \(A_{2}\) can be bounded by
\begin{align*}
    A_{2} & = \abs{1 - \frac{1}{s_{n}^{2}} \sum_{i}\sum_{j\in \nb{i}{m}} \Ep\bqty{\sum_{k_{1}\notin \nb{i}{4m}}\sum_{k_{2} \notin \nb{j}{4m}} H_{ik_{1}}\pqty{H_{jk_{2}} + H_{k_{2}j}}}} \nonumber                                                \\
    & = \frac{1}{s_{n}^{2}}\abs{s_{n}^{2} - \sum_{i}\sum_{j\in \nb{i}{m}} \sum_{k_{1}\notin \nb{i}{4m}}\sum_{k_{2} \notin \nb{j}{4m}} \Ep  \bqty{H_{ik_{1}}\pqty{H_{jk_{2}} + H_{k_{2}j}}}} = \frac{1}{s_{n}^{2}} \abs{s_{n}^{2} - A_{21}}.
\end{align*}
Recall that \(s_{n}^{2} = \sum_{i}\sum_{k_{1}\neq i}\sum_{j}\sum_{k_{2}\neq j} \Cov\pqty{H_{ik_{1}}, H_{jk_{2}}}\) and \(\Ep H_{ik} = 0\) for \(i\neq k\), and

define the following sets of vectors of indices that we are handling here, we are taking sums over the following sets,
\begin{align*}
    \mathcal{I}_{1} & := \cqty{(i,k_{1},j,k_{2})\in \mathcal{I}^{4}: i\neq k_{1}, j\neq k_{2}},                                                           \\
    \mathcal{I}_{2} & := \cqty{(i,k_{1},j,k_{2}), (i,k_{1},k_{2},j)\in \mathcal{I}^{4}: j\in \nb{i}{m}, k_{1}\notin \nb{i}{4m}, k_{2} \notin \nb{j}{4m}},
\end{align*}
and we have \(\mathcal{I}_{2}\subset \mathcal{I}_{1}\). Define the following sets,
\begin{align*}
    \mathcal{I}_{3}  & := \cqty{(i,k_{1},j,k_{2}), (i,k_{1},k_{2},j)\in \mathcal{I}_{2}: i\in\nb{j}{m}, k_{1}\in \nb{k_{2}}{m}}, \\
    \mathcal{I}_{3}' & := \cqty{(i,k_{1},j,k_{2}), (i,k_{1},k_{2},j)\in \mathcal{I}_{1}: i\in\nb{j}{m}, k_{1}\in \nb{k_{2}}{m}}.
\end{align*}
Then we have \(\mathcal{I}_{3} \subset \mathcal{I}_{2}\subset \mathcal{I}_{1}\), \(\mathcal{I}_{3} \subset \mathcal{I}_{3}'\) and \( s_{n}^{2} = \sum_{\mathcal{I}_{1}} \Cov\pqty{H_{ik_{1}}, H_{jk_{2}}}\),
and for,
\begin{equation*}
    A_{21} = \sum_{i}\sum_{j\in \nb{i}{m}} \sum_{k_{1}\notin \nb{i}{4m}}\sum_{k_{2} \notin \nb{j}{4m}} \Ep  \bqty{H_{ik_{1}}\pqty{H_{jk_{2}} + H_{k_{2}j}}} = \sum_{\mathcal{I}_{2}} \Cov\pqty{H_{ik_{1}}, H_{jk_{2}}}.
\end{equation*}
As a result,
\begin{align*}
    \abs{s_{n}^{2} - A_{21}} & \leq \abs{\sum_{\mathcal{I}_{1}\setminus \mathcal{I}_{2}} \Cov\pqty{H_{ik_{1}}, H_{jk_{2}}}}.
\end{align*}

Because \(\mathcal{I}_{1}\setminus \mathcal{I}_{2}\subset \mathcal{I}_{1}\setminus \mathcal{I}_{3} = (\mathcal{I}_{1} \setminus \mathcal{I}_{3}') \cup (\mathcal{I}_{3}' \setminus \mathcal{I}_3)\), the first term can be bounded by,
\begin{align*}
    \abs{\sum_{\mathcal{I}_{1}\setminus \mathcal{I}_{2}} \Cov\pqty{H_{ik_{1}}, H_{jk_{2}}}} & \leq \sum_{\mathcal{I}_{1} \setminus \mathcal{I}_{3}'} \abs{\Cov\pqty{H_{ik_{1}}, H_{jk_{2}}}} +\sum_{\mathcal{I}_{3}'\setminus \mathcal{I}_{3}} \abs{\Cov\pqty{H_{ik_{1}}, H_{jk_{2}}}} \\
    & = A_{22} + A_{23}.
\end{align*}
\(\mathcal{I}_{1}\setminus \mathcal{I}_{3}'\) consists of the vectors of indices that can be categorised into the following cases:
\begin{enumerate}
    \item At least one of the indices is \(m\)-free in \((i,k_{1},j,k_{2})\), the number of such summands is bounded by \(\hat{\tau}_{4}^{m} = \tau_{3,1}^{m} + \tau_{2,1,1}^{m} + \tau_{1^{4}}^{m}\) and each one summand can be bounded by
        \begin{equation*}
            \abs{\Cov\pqty{H_{ik_{1}} , H_{jk_{2}}}} \lesssim \mathbf{H}_{2 + \delta}^{2}\beta(1,3,m)^{\frac{\delta}{2 + \delta}}.
        \end{equation*}
    \item \((i,k_{1},j,k_{2})\in T_{4}^{m}\). In this case the number of summands is bounded by \(\tau_{4}^{m}\) and each one can be bounded by
        \begin{equation*}
            \abs{\Cov\pqty{H_{ik_{1}} , H_{jk_{2}}}} \lesssim \mathbf{H}_{2}^{2}.
        \end{equation*}
    \item \((i,k_{1},j,k_{2})\in T_{2,2}^{m}\) and since \((i,k_{1}, j, k_{2})\notin \mathcal{I}_{3}\), it can only happen that \(\dist\pqty{i,k_{1}} \leq m\), \(\dist\pqty{j,k_{2}} \leq m\) while \(\dist\pqty{(i,k_{1}), (j,k_{2})} > m\). Then the number of such summands is bounded by \(\tau_{2,2}^{m}\) and each one can be bounded by
        \begin{equation*}
            \abs{\Cov\pqty{H_{ik_{1}} , H_{jk_{2}}}} \lesssim \mathbf{H}_{2 + \delta}^{2}\beta(2,2,m)^{\frac{\delta}{2 + \delta}}.
        \end{equation*}
\end{enumerate}

Overall, we will have
\begin{equation*}
    A_{22} \lesssim \hat{\tau}_{4}^{m} \mathbf{H}_{2 + \delta}^{2}\beta(1,3,m)^{\frac{\delta}{2 + \delta}} + \tau_{4}^{m} \mathbf{H}_{2}^{2} + \tau_{2,2}^{m} \mathbf{H}_{2 + \delta}^{2}\beta(2,2,m)^{\frac{\delta}{2 + \delta}}.
\end{equation*}

For any \((i,k_{1},j,k_{2})\in \mathcal{I}_{3}'\setminus \mathcal{I}_{3}\), because \(\mathcal{I}_{3}\subset \mathcal{I}_{3}'\), we know \((i,k_{1},j,k_{2})\) satisfies the restriction of \(\mathcal{I}_{3}'\) and additionally, either \(k_{1}\in \nb{i}{4m}\setminus\cqty{i}\) or \(k_{2}\in \nb{j}{4m}\setminus\cqty{j}\). As a result, \((i,k_{1},j,k_{2})\in T_{4}^{4m}\). Hence,
\begin{equation*}
    A_{23} \lesssim \tau_{4}^{4m} \mathbf{H}_{2}^{2}.
\end{equation*}

We can put everything together and have
\begin{equation}
    A_{2} \lesssim \frac{1}{s_{n}^{2}}\bqty{\tau_{4}^{4m} \mathbf{H}_{2}^{2} + n^{4}\mathbf{H}_{2 + \delta}^{2}\beta_{4}(m)^{\frac{\delta}{2 + \delta}}},
\end{equation}
where \(\beta_{q}(m) = \max_{q_{1} + q_{2} = q} \cqty{\beta(q_{1},q_{2},m)}\).

\paragraph{Bounds on \tmath{A_{3}}.}

For \(A_{3}\), we find bounds using the properties of \(\gga(x, y) = \Ep\bqty{H(X_{i}, x) H(X_{j}, y)}\) for \(j \in \nb{i}{m}\), which resemble the condition on \(\Ep\bqty{H(X_{i}, x) H(X_{i}, y)}\) in \cite{hall1984CentralLimit}. Recall \(\gga_{k_{1}k_{2}} = \gga(X_{k_{1}}, X_{k_{2}})\). To simplify the notation, we maintain that \(k_{1}\notin \nb{i}{4m}\) and \(k_{2}\notin \nb{j}{4m}\) and \(j \in \nb{i}{m}\) in the following, sometimes without explicitly stating the range for \((k_{1},k_{2},j)\).

Notice that \(\sigma(W_{ij}') \subset \sigma\pqty{X_{k}: k \in \mathcal{I} \setminus\pqty{\nb{i}{m} \cup \nb{j}{m}}} = : \mathcal{F}_{-ij}^{m}\), so we have by the symmetric kernel,
\begin{equation*}
    \Ep^{\mathcal{F}_{-ij}^{m}} \bqty{G_{i}D_{ij}} =\frac{2}{s_{n}^{2}} \sum_{k_{1}}\sum_{k_{2}} \Ep^{\mathcal{F}_{-ij}^{m}} \bqty{ H(X_{i}, X_{k_{1}}) H(X_{j}, X_{k_{2}})},
\end{equation*}
Notice that \(k_{1}\notin \nb{i}{4m}\), \(k_{2}\notin \nb{j}{4m}\) imply that \(X_{k_{1}}, X_{k_{2}}\in \mathcal{F}_{- ij}^{m}\) and \(\gga_{k_{1}k_{2}} = E^{\mathcal{F}_{- ij}^{m} } H_{\tilde{i}k_{1}}H_{\tilde{j}k_{2}}\), where \(\pqty{\tilde{X}_{i}, \tilde{X}_{j}}\) have the same distribution as \(\pqty{X_{i},X_{j}}\) and independent of \(\mathcal{F}^{m}_{- ij}\), we can bound
\begin{equation}\label{eq:gammak1k2}
    \Ep\abs{\Ep^{\mathcal{F}_{-ij}^{m}}H_{ik_{1}}H_{jk_{2}} - \gga_{k_{1}k_{2}}} \lesssim \beta(2, \infty, m)^{\frac{\delta}{2 + \delta}} \mathbf{H}_{2 + \delta}^{2},
\end{equation}
by \autoref{lemma:m-free}.
Therefore, we have
\begin{align*}
    A_{3}  &=\Ep\sum_{i}\sum_{j \in \nb{i}{m}}\abs{\Ep^{W_{ij}'} \pqty{M_{ij}- G_{i}D_{ij}}}  \\
    &\leq  \Ep\sum_{i}\sum_{j\in\nb{i}{m}}\Ep^{W_{ij}'} \abs{\bqty{\Ep^{\mathcal{F}_{-ij}^{m}} \pqty{G_{i}D_{ij} - \Ep G_{i}D_{ij}} - \frac{2}{s_{n}^{2}}\sum_{k_{1}}\sum_{k_{2}}\pqty{\gga_{k_{1}k_{2}} - \Ep\gga_{k_{1}k_{2}}}}}
    \\
    &\quad + \frac{2}{s_{n}^{2}}\Ep\sum_{i}\sum_{j\in\nb{i}{m}}\abs{\sum_{k_{1}}\sum_{k_{2}} \pqty{\gga_{k_{1}k_{2}} - \Ep\gga_{k_{1}k_{2}}}} = A_{31} + A_{32}.
\end{align*}

The first term can be bounded by
\begin{align*}
    \frac{4}{s_{n}^{2}} \sum_{i,j} \Ep \abs{ \sum_{k_{1},k_{2}} \Ep^{\mathcal{F}_{-ij}^{m}}\pqty{H_{ik_{1}}H_{jk_{2}}} - \gga_{k_{1}k_{2}}} \lesssim \frac{n^{2} \tau_{2}^{m}}{s_{n}^{2}} \mathbf{H}_{2 + \delta}^{2} \beta(2,\infty,m)^{\frac{\delta}{2 + \delta}}.
\end{align*}
with \eqref{eq:gammak1k2} and an estimate of the number of summands \(n^{2} \tau_{2}^{m}\).

The second term can be written as
\begin{align*}
    A_{32} \leq  \frac{2}{s_{n}^{2}} \sum_{i,j} \sqrt{\sum_{k_{1},k_{2},k_{3},k_{4}} \Cov\pqty{\gga_{k_{1}k_{2}}, \gga_{k_{3}k_{4}}}},
\end{align*}
where \(j\in\nb{i}{m}\) and \(k_{1}, k_{3}\notin \nb{i}{4m}\) and \(k_{2},k_{4}\notin \nb{j}{4m}\).

For a vector of four indices \((k_{1}, k_{2}, k_{3}, k_{4})\), we have the following partitions.
If \(\pqty{k_{1},k_{2},k_{3},k_{4}}\in T_{4}^{m}\), then we have
\begin{align*}
    \abs{\Cov\pqty{\gga_{k_{1}k_{2}}, \gga_{k_{3}k_{4}}}} &\leq \sqrt{\Ep(\gga_{k_{1}k_{2}})^{2} \Ep(\gga_{k_{3}k_{4}})^{2}} \nonumber \\
    &\leq \sqrt{\Ep\pqty{\Ep H_{\tilde{i}k_{1}}H_{\tilde{j}k_{2}} \mid k_{1},k_{2}}^{2} \Ep\pqty{\Ep H_{\tilde{i}k_{3}}H_{\tilde{j}k_{4}} \mid k_{3},k_{4}}^{2}} \nonumber \\
    &\leq \sqrt{\Ep H_{\tilde{i}k_{1}}^{2}H_{\tilde{j}k_{2}}^{2} \Ep H_{\tilde{i}k_{3}}^{2} H_{\tilde{j}k_{4}}^{2} } \leq \mathbf{H}_{4}^{4}
\end{align*}

If there is an \(m\)-free index in \(\pqty{k_{1},k_{2},k_{3},k_{4}}\), then by \autoref{lemma:gamma-cov}, we have
\begin{equation*}
    \abs{\Cov\pqty{\gga_{k_{1}k_{2}}, \gga_{k_{3}k_{4}}}} \lesssim \mathbf{H}_{4 + \delta}^{4} \beta(1,3,m)^{\frac{\delta}{4 + \delta}}.
\end{equation*}

If \(\pqty{k_{1},k_{2},k_{3},k_{4}}\in T_{2,2}^{m}\), by \autoref{lemma:gamma-cov}, if also \(\dist\pqty{k_{1},k_{2}}\leq m\), then
\begin{equation*}
    \abs{\Cov\pqty{\gga_{k_{1}k_{2}}, \gga_{k_{3}k_{4}}}}  \lesssim \mathbf{H}_{4 + \delta}^{4} \beta(2,2,m)^{\frac{\delta}{4 + \delta}},
\end{equation*}
Finally, if \(\pqty{k_{1},k_{2},k_{3},k_{4}}\in T_{2,2}^{m}\) and we have \(\dist\pqty{k_{1},k_{3}}\leq m\) or \(\dist\pqty{k_{1},k_{4}}\leq m\), then
\begin{equation*}
    \abs{\Cov\pqty{\gga_{k_{1}k_{2}}, \gga_{k_{3}k_{4}}}}  \lesssim  \mathbf{\Gamma}^{2}_{m, 2}  + \mathbf{H}_{4 + \delta}^{4} \beta(2,2,m)^{\frac{\delta}{4+\delta}}.
\end{equation*}
As a result,
\begin{equation*}
    A_{32} \lesssim \frac{\tau_{2}^{m}}{s_{n}^{2}} \sqrt{\tau_{4}^{m} \mathbf{H}_{4}^{4} + \hat{\tau}_{4}^{m} \mathbf{H}_{4 + \delta}^{4} \beta\pqty{1,3,m}^{\frac{\delta}{4 + \delta}} + \tau_{2,2}^{m} \mathbf{H}_{4 + \delta}^{4}\beta\pqty{2,2,m}^{\frac{\delta}{4 + \delta}} + \tau_{2,2}^{m} \mathbf{\Gamma}_{m,2}^{2}}.
\end{equation*}

Collecting all the terms, we have the following bound for \(A_{3}\), where we use the notation \(\hat{\tau}_{4}^{m} = \tau_{3,1}^{m} + \tau_{2,1,1}^{m} + \tau_{1^{4}}^{m}\) and \(\beta_{4}(m) = \max_{q_{1} +q_{2} = 4}\beta\pqty{q_{1},q_{2},m} \).
\begin{equation}
    A_{3} \lesssim \frac{\tau_{2}^{m}}{s_{n}^{2}}\pqty{n^{2}\mathbf{H}_{2+\delta}^{2} \beta(2,\infty, m)^{\frac{\delta}{2+\delta}} + \sqrt{\tau_{4}^{m} \mathbf{H}_{4}^{4} + \pqty{\hat{\tau}_{4}^{m} + \tau_{2,2}^{m}}\mathbf{H}_{4 + \delta}^{4}\beta_{4}(m)^{\frac{\delta}{4+\delta}} + \tau_{2,2}^{m}\mathbf{\Gamma}^{2}_{m,2} }}.
\end{equation}
In the independent case, we have \(\tau_{q_{1}, q_{2},\dots, q_{s}}^{m} = O\pqty{n^{s}}\) and \(\beta(n_{1},n_{2},m) = 0\) for \(m > 0\). Take \(m = 1\) and assume \(s_{n}^{2} = O(n^{2})\), we have
\begin{equation*}
    A_{3}\lesssim \frac{1}{n} \sqrt{n \mathbf{H}_{4}^{4} + n^{2} \mathbf{\Gamma}_{m,2}^{2}},
\end{equation*}
which will go to zero if \(\mathbf{H}_{4}^{4}/n + \mathbf{\Gamma}_{m,2}^{2} \to 0\), matching the condition in \cite{hall1984CentralLimit}.

\paragraph{Bounds on \tmath{A_{4}}.}

We can bound the item \(A_{4}\) in the following way.
\begin{align*}
    A_{4} &= \sum_{i}\sum_{j\in \nb{i}{m}} \Ep\abs{\pqty{M_{ij} - G_{i}D_{ij}} D_{ij}'} = \sum_{i}\sum_{j\in \nb{i}{m}} \Ep\abs{\pqty{G_{i}D_{ij} - \Ep G_{i}D_{ij}} D'_{ij}}\nonumber\\
    &\leq \sum_{i}\sum_{j\in \nb{i}{m}} \sqrt{\Ep G_{i}^{2}D_{ij}^{2} \Ep D_{ij}^{'2}} \leq \sqrt{\tau_{2}^{m}} \sqrt{\sum_{i}\sum_{j\in \nb{i}{m}}\Ep G_{i}^{2}D_{ij}^{2}\Ep D_{ij}^{'2}}\nonumber\\
    &\leq \frac{\sqrt{\tau_{2}^{m}}}{s_{n}^{3}} \sqrt{\sum_{i}\sum_{j\in \nb{i}{m}} \Ep\bqty{\bigg(\sum_{k\notin \nb{i}{4m}} H_{ik}\bigg)^{2}\bigg(\sum_{k'\notin \nb{j}{4m}} 2 H_{jk'}\bigg)^{2}} \Ep\bqty{\sum_{i'\in \nb{i}{m}\cup \nb{j}{m}}\sum_{k\notin \nb{i'}{4m}} 2 H_{i'k}}^{2}}\nonumber\\
    &= \frac{\sqrt{\tau_{2}^{m}}}{s_{n}^{3}}\sqrt{\sum_{i}\sum_{j\in \nb{i}{m}}A_{41}^{ij} A_{42}^{ij}}.
\end{align*}
As we have
\begin{equation*}
    A_{41}^{ij} = \Ep\bqty{\sum_{k_{1},k_{2}\notin\nb{i}{4m}}\sum_{k_{3},k_{4}\notin\nb{j}{4m}} H_{ik_{1}}H_{ik_{2}}H_{jk_{3}}H_{jk_{4}}},
\end{equation*}
if \((k_{1},k_{2},k_{3},k_{4})\in T_{4}^{m}\cup T_{2,2}^{m}\), then
\(\abs{\Ep H_{ik_{1}}H_{ik_{2}}H_{jk_{3}}H_{jk_{4}}} \leq \mathbf{H}_{4}^{4}\).
On the other hand, if there exists an \(m\)-free index in the vector \((k_{1},k_{2},k_{3},k_{4})\), without loss of generality, let us assume it is \(k_{1}\). By assumption \(\dist(k_{1}, i) > 4m\), and since \(j\in \nb{i}{m}\), \(\dist(k_{1},j) > m\), so \(k_{1}\) is also an \(m\)-free index in the vector \((i,j,k_{1},k_{2},k_{3},k_{4})\), and in this case, by \autoref{lemma:m-free}, we have
\begin{equation*}
    \abs{\Ep H_{ik_{1}}H_{ik_{2}}H_{jk_{3}}H_{jk_{4}}} \lesssim  \mathbf{H}_{4 +\delta}^{4} \beta(1,5,m)^{\frac{\delta}{4 + \delta}}
\end{equation*}
So the first part, for all \(i\) and \(j\in \nb{i}{m}\), is bounded by
\begin{equation*}
    A_{41}^{ij}\lesssim {\pqty{\tau_{4}^{m} + \tau_{2,2}^{m}} \mathbf{H}_{4}^{4} + \pqty{\tau_{3,1}^{m} + \tau_{2,1,1}^{m} + \tau_{1^{4}}^{m}} \mathbf{H}_{4 + \delta}^{4}\beta(1,5,m)^{\frac{\delta}{4 + \delta}}}.
\end{equation*}

The second part is bounded by
\begin{align*}
    \sum_{i}\sum_{j\in \nb{i}{m}}A_{42}^{ij} &= \sum_{i}\sum_{j}\Ep\bqty{\sum_{i'\in \nb{i}{m}\cup \nb{j}{m}}\sum_{k\notin \nb{i}{4m}}H_{i'k}}^{2} \\
    &= \sum_{i}\sum_{j}\Ep\bqty{\sum_{i',j'\in \nb{i}{m}\cup \nb{j}{m}} \sum_{k_{1}\notin\nb{i'}{4m}}\sum_{k_{2}\notin \nb{j'}{4m}} H_{i'k_{1}}H_{j'k_{2}}}. \\
    &=\Ep\sum_{i}\sum_{j\in \nb{i}{m}}\sum_{i',j'\in \nb{i}{m}\cup \nb{j}{m}}\bqty{\sum_{\dist\pqty{k_{1},k_{2}}\leq m} H_{i'k_{1}}H_{j'k_{2}} + \sum_{\dist\pqty{k_{1},k_{2}}> m} H_{i'k_{1}}H_{j'k_{2}}}.
\end{align*}
When \(\dist\pqty{k_{1},k_{2}}\leq m\), \(\dist((i',j') , (k_{1},k_{2})) > m\),
\begin{equation*}
    \abs{EH_{i'k_{1}}H_{j'k_{2}}} \leq \abs{\Ep H_{\tilde{i}' k_{1}} H_{\tilde{j}' k_{2}}} + \mathbf{H}_{2+\delta}^{2} \beta(2,2,m)^{\delta / (2 + \delta)} \leq \tilde{\mathbf{H}}_{2}^{2}+ \mathbf{H}_{2+\delta}^{2} \beta(2,2,m)^{\frac{\delta}{2+\delta}},
\end{equation*}
where \(\tilde{\mathbf{H}}_{2}^{2} = \sup_{i\neq k} \Ep H^{2}(X_{i}, \tilde{X}_{k})\),
and there are at most \(\tau_{4}^{m}\tau_{2}^{m}\) such summands. When \(\dist\pqty{k_{1},k_{2}}> m\), since \(\dist(k_{1}, i') > 4m\) and \(\dist(i', j')  \leq 3m\) by triangle inequality, \(\dist(k_{1}, j') > m\), which proves that \(k_{1}\) is an \(m\)-free index in \((k_{1},k_{2},i',j')\), and hence we have in this case \(\abs{EH_{i'k_{1}}H_{j'k_{2}}} \leq \mathbf{H}_{2 + \delta}^{2} \beta(1,3,m)^{\frac{\delta}{2 + \delta}}\), and the number of such summands is bounded by \(\tau_{4}^{m}n^{2}\).
Because \(\tau_{2}^{m} \leq n^{2}\),
\begin{equation*}
    \sum_{i,j\in \nb{i}{m}}A_{42}^{ij} \lesssim \tau_{4}^{m}\tau_{2}^{m} \tilde{\mathbf{H}}_{2}^{2} + \tau_{4}^{m}n^{2} \mathbf{H}_{2 + \delta}^{2} \beta_{4}(m)^{\frac{\delta}{2 + \delta}},
\end{equation*}
where \(\beta_{4}(m) = \max\cqty{\beta(2,2,m), \beta(1,3,m)}\).

Plug in the bounds we obtained, we have
\begin{equation}
    A_{4} \lesssim \frac{\sqrt{\tau_{2}^{m} \tau_{4}^{m}}}{s_{n}^{3}} \bqty{(\tau_{4}^{m} + \tau_{2,2}^{m}) \mathbf{H}_{4}^{4} + \hat{\tau}_{4}^{m} \mathbf{H}_{4 + \delta}^{4} \beta(1,5,m)^{\frac{\delta}{4 + \delta}}}^{\frac{1}{2}}
    \bqty{\tau_{2}^{m} \tilde{\mathbf{H}}_{2}^{2} + n^{2} \mathbf{H}_{2 + \delta}^{2}\beta_{4}(m)^{\frac{\delta}{2 + \delta}}}^{\frac{1}{2}}.
\end{equation}
In the independent case, again taking \(m = 1\), we have
\begin{equation*}
    A_{4} \lesssim \frac{1}{n^{2}}\pqty{n^{2} \mathbf{H}_{4}^{4}}^{\frac{1}{2}}\bqty{n \mathbf{H}_{2}^{2}}^{\frac{1}{2}} = \sqrt{\frac{\mathbf{H}_{4}^{4}\mathbf{H}_{2}^{2}}{n}}
\end{equation*}
which will go to zero, if \(\mathbf{H}_{2}^{2} = O(1)\) and \(\frac{1}{n} \mathbf{H}_{4}^{4}\to 0\), which is part of the condition in \eqref{eq:hall's condition}.

\paragraph{Bounds on \tmath{A_{5}}.}

The way we handle \(A_{5}\) is very similar to how we handled \(A_{4}\). We have, with the implicit assumption that \(k_{1},k_{2}\notin \nb{i}{4m}\), \(k_{3}\notin \nb{j}{4m}\) and \(k_{4}\notin \nb{j'}{4m}\) in the following,
\begin{equation*}
    A_{5} = \sum_{i} \Ep\abs{G_{i}D_{i}^{2}} \leq \sum_{i} \sqrt{\Ep G_{i}^{2}D_{i}^{2} \Ep D_{i}^{2}} = \frac{1}{s_{n}^{3}}\sum_{i} \sqrt{A_{51}^{i} A_{52}^{i}} \leq \frac{\sqrt{n}}{s_{n}^{3}} \sqrt{\sum_{i} A_{51}^{i}A_{52}^{i}}
\end{equation*}
where
\begin{align*}
    A_{51}^{i} &= \Ep\pqty{\sum_{k_{1}\notin \nb{i}{4m}} H_{ik_{1}}}^{2}\pqty{\sum_{j\in\nb{i}{m}}\sum_{k_{2}\notin \nb{j}{4m}} 2 H_{jk_{2}}}^{2}\\
    &= 4  \Ep \sum_{j,j'\in \nb{i}{m}} \sum_{k_{1},k_{2}\notin\nb{i}{4m}}\sum_{k_{3}\notin\nb{j}{4m}}\sum_{k_{4}\notin \nb{j'}{4m}} H_{ik_{1}}H_{ik_{2}}H_{jk_{3}}H_{j'k_{4}}.
\end{align*}

Notice that \(\dist(k_{l}, \cqty{i,j,j'})>m\) for \(l = 1,2,3,4\) and each \((k_{1},k_{2},k_{3},k_{4})\) falls into one of the following cases.
If \((k_{1},k_{2},k_{3},k_{4})\in T_{4}^{m}\cup T_{2,2}^{m}\), then
\begin{equation*}
    \Ep H_{ik_{1}}H_{ik_{2}}H_{jk_{3}}H_{j'k_{4}} \leq \mathbf{H}_{4}^{4}.
\end{equation*}
If there is an \(m\)-free index in the vector \(k_{1},k_{2},k_{3},k_{4}\), then it is also an \(m\)-free index in the larger vector \((i,j,j',k_{1},k_{2},k_{3},k_{4})\), in this case,
\begin{equation*}
    \Ep H_{ik_{1}}H_{ik_{2}}H_{jk_{3}}H_{j'k_{4}} \lesssim \mathbf{H}_{4 + \delta}^{4}\beta(1,6,m)^{\frac{\delta}{4 + \delta}}.
\end{equation*}
Therefore,
\begin{equation*}
    A_{51}^{i} \lesssim \sum_{j,j'\in\nb{i}{m}}\pqty{[\tau_{4}^{m} + \tau_{2,2}^{m}] \mathbf{H}_{4}^{4} + \bqty{\tau_{3,1}^{m} + \tau_{2,1,1}^{m} + \tau_{1^{4}}^{m}} \mathbf{H}_{4 + \delta}^{4} \beta(1,6,m)^{\frac{\delta}{4 + \delta}}}.
\end{equation*}

For \(A_{52}^{i}\), we have by similar reasoning to the treatment of \(A_{42}\),
\begin{align*}
    A_{52}^{i} &= \Ep \sum_{j,j'\in \nb{i}{m}} \sum_{k_{1},k_{2}} H_{jk_{1}}H_{j'k_{2}}\nonumber\\
    &\leq \sum_{j,j'} \bqty{\sum_{\dist\pqty{k_{1},k_{2}}\leq m} \Ep H_{jk_{1}}H_{j'k_{2}} + \sum_{\dist\pqty{k_{1},k_{2}} > m} \Ep H_{jk_{1}}H_{j'k_{2}}} \nonumber\\
    &\lesssim \sum_{j,j'}\bqty{\tau_{2}^{m}\tilde{ \mathbf{H}}_{2}^{2} + n^{2} \mathbf{H}_{2+\delta}^{2} \beta_4(m)^{\frac{\delta}{2+ \delta}}}.
\end{align*}
As a result,
\begin{equation*}
    A_{52}^{i} \lesssim \sum_{j,j'\in\nb{i}{m}} \bqty{\tau_{2}^{m} \tilde{\mathbf{H}}_{2}^{2} + n^{2} \mathbf{H}_{2+\delta}^{2} \beta_4(m)^{\frac{\delta}{2+ \delta}}}.
\end{equation*}
and
\begin{align*}
    \sum_{i} A_{51}^{i} A_{52}^{i} &\lesssim \sum_{i}\sum_{j_{1},j_{2},j_{3},j_{4}\in\nb{i}{m}}
    \pqty{[\tau_{4}^{m} + \tau_{2,2}^{m}] \mathbf{H}_{4}^{4} + \bqty{\tau_{3,1}^{m} + \tau_{2,1,1}^{m} + \tau_{1^{4}}^{m}} \mathbf{H}_{4 + \delta}^{4} \beta(1,6,m)^{\frac{\delta}{4 + \delta}}}\\
    &\cdot \bqty{\tau_{2}^{m} \tilde{\mathbf{H}}_{2}^{2} + n^{2} \mathbf{H}_{2+\delta}^{2} \beta_4(m)^{\frac{\delta}{2+ \delta}}}.
\end{align*}

As a result, we obtain the final bound for \(A_{5}\),
\begin{equation*}
    A_{5} \lesssim \frac{\sqrt{n\tau_{5}^{m}}}{s_{n}^{3}} \sqrt{\pqty{[\tau_{4}^{m} + \tau_{2,2}^{m}] \mathbf{H}_{4}^{4} + \hat{\tau}_{4}^{m} \mathbf{H}_{4 + \delta}^{4} \beta(1,6,m)^{\frac{\delta}{4 + \delta}}}  \pqty{\bqty{\tau_{2}^{m} \tilde{\mathbf{H}}_{2}^{2} + n^{2} \mathbf{H}_{2+\delta}^{2} \beta_4(m)^{\frac{\delta}{2+ \delta}}}}}. \qed
\end{equation*}

\subsection{Proofs for the HAC variance estimator}
For brevity, write \(\kappa_{ij}:=\kappa_{ij}(b_n)=\kappa\!\left(\frac{\mathbf d(i,j)}{b_n}\right)\) and \(\beta_4(m):=\max_{q_1+q_2=4}\beta(q_1,q_2,m)\).
Since \(\kappa(z)=0\) for \(z>1\), every row and every column of the weight matrix
\((\kappa_{ij})_{i,j\in I_n}\) has at most \(\eta_{b_n}\) non-zero entries.
For each integer \(r\ge 1\), define the set of pairs of indices that belong to each other's neighbourhood shells at distance \(r\) by \(\mathcal S_n(r):=\{(i,j)\in I_n^2:r-1<\mathbf d(i,j)\le r\}\).
By definition, \(|\mathcal S_n(r)|=\tau_2^r-\tau_2^{r-1}\).

\begin{lemma}[Weighted quadratic-form bound]\label{lem:rowsum-hac-weighted-bound}
    For any real arrays \((x_i)_{i\in I_n}\) and \((y_i)_{i\in I_n}\),
    \[
        \left|\sum_{i\in I_n}\sum_{j\in I_n} \kappa_{ij} x_i y_j\right|
        \le
        \|\kappa\|_\infty \, \eta_{b_n}
        \Big(\sum_i x_i^2\Big)^{1/2}
        \Big(\sum_j y_j^2\Big)^{1/2}.
    \]
    Consequently,
    \(
        \left|\sum_{i,j} \kappa_{ij} (x_i-\bar x)(y_j-\bar y)\right|
        \le
        4\|\kappa\|_\infty \, \eta_{b_n}
        \Big(\sum_i x_i^2\Big)^{1/2}
        \Big(\sum_j y_j^2\Big)^{1/2},
    \)
    where \(\bar x=n^{-1}\sum_i x_i\) and \(\bar y=n^{-1}\sum_i y_i\).
\end{lemma}

\begin{proof}
    Since \(\kappa_{ij}=0\) whenever \(\mathbf d(i,j)>b_n\), we have \(\sup_i \sum_j |\kappa_{ij}| \le \|\kappa\|_\infty\eta_{b_n}\) and \(\sup_j \sum_i |\kappa_{ij}| \le \|\kappa\|_\infty\eta_{b_n}\).
    The first display is therefore an immediate consequence of the Schur test. The
    second follows from the first and the elementary inequality
    \(\sum_i (x_i-\bar x)^2\le 4\sum_i x_i^2\).
\end{proof}

\begin{proposition}[Benchmark HAC form for the projection]\label{prop:benchmark-hac-projection}
    Assume the conditions of Theorem~\ref{thm:non-degenerate-clt} and Assumption~\ref{asmp:hac-non-degenerate}(i)--(iii). Then
    \(\sum_{i,j}\kappa_{ij}(h_i-\bar h)(h_j-\bar h)=\nu_n^2 + o_p(n)\), where \(\bar h:=n^{-1}\sum_i h_i\).
\end{proposition}

\begin{proof}
    Define the infeasible uncentered quadratic form \(\nu_{n,\kappa}^2 := \sum_{i\in I_n}\sum_{j\in I_n} \kappa_{ij} h_i h_j\), and let \(\gamma_{ij}:=E(h_i h_j)=\operatorname{Cov}(h_i,h_j)\). Then \[\nu_{n,\kappa}^2-\nu_n^2 = D_n+U_n+B_n,\] where \(D_n:=\sum_i \big(h_i^2-Eh_i^2\big)\), \(U_n:=\sum_{i\neq j}\kappa_{ij}\big(h_i h_j-\gamma_{ij}\big)\), and \(B_n:=\sum_{i\neq j}(\kappa_{ij}-1)\gamma_{ij}\).
    We handle the three terms separately.

    \medskip
    \noindent
    \textit{The bias term \(B_n\).}
    Since \(\kappa\) is nonincreasing and takes values in \([0,1]\), for every
    \((i,j)\in\mathcal S_n(r)\), we have \(0\le 1-\kappa_{ij} = 1-\kappa\!\left(\frac{\mathbf d(i,j)}{b_n}\right) \le 1-\kappa\!\left(\frac{r}{b_n}\right)\).
    Moreover, by Lemma~A.6 and the harmless convention \(\beta(1,1,0):=1\),
    \( |\gamma_{ij}| \lesssim \mathbf H_{2+\delta}^2\beta(1,1,r-1)^{\delta/(2+\delta)}\) for \((i,j)\in\mathcal S_n(r)\).
    Therefore,
    \begin{align*}
        |B_n|
        &\le
        \sum_{r=1}^\infty\sum_{(i,j)\in\mathcal S_n(r)}
        (1-\kappa_{ij})\,|\gamma_{ij}| \\
        &\lesssim
        \mathbf H_{2+\delta}^2
        \sum_{r=1}^\infty
        \left\{1-\kappa\!\left(\frac{r}{b_n}\right)\right\}
        \bigl(\tau_2^r-\tau_2^{r-1}\bigr)
        \beta(1,1,r-1)^{\delta/(2+\delta)}
        = o(n),
    \end{align*}
    by Assumption~\ref{asmp:hac-non-degenerate}(iii).

    \medskip
    \noindent
    \textit{The diagonal fluctuation \(D_n\).}
    Let \(g_i:=h_i^2-Eh_i^2\). Since \(\|h_i\|_{4+\delta}\le 2\mathbf H_{4+\delta}\),
    \(\|g_i\|_{2+\delta/2}\lesssim \mathbf H_{4+\delta}^2\). Hence, by Lemma~A.6,
    for \(\mathbf d(i,j)>b_n\), \( |E(g_i g_j)| \lesssim \mathbf H_{4+\delta}^4\beta(1,1,b_n)^{\delta/(4+\delta)} \le \mathbf H_{4+\delta}^4\beta_4(b_n)^{\delta/(4+\delta)}\).
    The number of ordered pairs \((i,j)\) such that \(\mathbf d(i,j)\le b_n\) is at most
    \(Cn\eta_{b_n}\). Therefore,
    \(\operatorname{Var}(D_n)= \sum_{i,j} E(g_i g_j) \lesssim n\eta_{b_n} \mathbf H_4^4 + n^2 \mathbf H_{4+\delta}^4 \beta_4(b_n)^{\delta/(4+\delta)}\).
    Since \(\eta_{b_n}^4=o(n)\) implies \(\eta_{b_n}=o(n)\), while
    \(n\beta_4(b_n)^{\delta/(4+\delta)}\to 0\) implies \(\beta_4(b_n)^{\delta/(4+\delta)}\to 0\), we obtain
    \(\operatorname{Var}(D_n)=o(n^2)\), hence \(D_n=o_p(n)\).

    \medskip
    \noindent
    \textit{The off-diagonal fluctuation \(U_n\).}
    Because \(\kappa_{ij}=0\) for \(\mathbf d(i,j)>b_n\), the term \(U_n\) only sums over
    \(i,j\) such that \(\dist(i,j) \leq m\). Thus
    \(E(U_n^2) \le \sum_{(i,j)\in T_2^{\eta_{b_n}}}\sum_{(k,\ell)\in\mathcal T_2^{\eta_{b_n}}} \abs{\Cov(h_i h_j,h_k h_\ell)}\).

    If the four indices \((i,j,k,\ell)\) lie in one \(b_n\)-connected component, then there
    are at most \(Cn\eta_{b_n}^3\) such quadruples, and each covariance is bounded by
    \(C \mathbf H_4^4\).

    Otherwise the four indices split into two \(b_n\)-connected pairs \(\{i,j\}\) and
    \(\{k,\ell\}\) with \(\mathbf d(\{i,j\},\{k,\ell\})>b_n\). Applying Berbee's lemma to
    couple \((X_i,X_j)\) away from \((X_k,X_\ell)\), and then using H\"older's inequality as
    in Lemma~A.8, yields \(\big|\operatorname{Cov}(h_i h_j,h_k h_\ell)\big| \lesssim \mathbf H_{4+\delta}^4\beta(2,2,b_n)^{\delta/(4+\delta)} \le \mathbf H_{4+\delta}^4\beta_4(b_n)^{\delta/(4+\delta)}\).
    The number of such quadruples is bounded by \(Cn^2\eta_{b_n}^2\).
    Hence \[E(U_n^2) \lesssim n\eta_{b_n}^3 \mathbf H_4^4 + n^2\eta_{b_n}^2 \mathbf H_{4+\delta}^4\beta_4(b_n)^{\delta/(4+\delta)}.\]
    Because \(\eta_{b_n}^4=o(n)\), we have \(\eta_{b_n}^3=o(n)\) and
    \(\eta_{b_n}^2/n\to 0\). Moreover,
    \(\eta_{b_n}^2\beta_4(b_n)^{\delta/(4+\delta)} = \frac{\eta_{b_n}^2}{n}\Bigl[n\beta_4(b_n)^{\delta/(4+\delta)}\Bigr] = o(1)\).
    It follows that \(E(U_n^2)=o(n^2)\), so \(U_n=o_p(n)\).

    Combining the three steps gives \(\nu_{n,\kappa}^2-\nu_n^2=o_p(n)\).

    It remains to account for centering. Since \(\sum_{i,j}\kappa_{ij}(h_i-\bar h)(h_j-\bar h)-\nu_{n,\kappa}^2 = -2\bar h\sum_{i,j}\kappa_{ij}h_i + \bar h^2\sum_{i,j}\kappa_{ij}\),
    and since \(\nu_n^2=\Theta(n)\) by Theorem~\ref{thm:non-degenerate-clt}, we have
    \(\bar h=O_p(n^{-1/2})\). Moreover,
    \(\left|\sum_{i,j}\kappa_{ij}h_i\right| \le \|\kappa\|_\infty\eta_{b_n}\sum_i |h_i| = O_p(n\eta_{b_n})\),
    because \(\sum_i|h_i|=O_p(n)\), and also
    \(\sum_{i,j}|\kappa_{ij}|\le \|\kappa\|_\infty n\eta_{b_n}\).
    Therefore, \(\left|\sum_{i,j}\kappa_{ij}(h_i-\bar h)(h_j-\bar h)-\nu_{n,\kappa}^2\right| = O_p(\eta_{b_n} n^{1/2}) + O_p(\eta_{b_n}) = o_p(n)\),
    since \(\eta_{b_n}^4=o(n)\) implies \(\eta_{b_n}=o(n^{1/4})\). This proves the claim.
\end{proof}

Define \(\zeta_i := \frac{1}{n-1}\sum_{k\neq i}\bigl(\hat H_i(X_k)-\theta_{ik}\bigr)\) and \(\rho_i := \frac{1}{n-1}\sum_{k\neq i} R_{ik}\).

\begin{lemma}[Aggregate row-average bounds]\label{lem:rowsum-hac-row-average-bounds}
    Under the assumptions of Proposition~\ref{prop:benchmark-hac-projection},
    \(\sum_{i\in I_n} \zeta_i^2 = O_p\!\left(\eta_{b_n} + n\beta(1,1,b_n)^{\delta/(2+\delta)}\right)\)
    and
    \(\sum_{i\in I_n} \rho_i^2 = O_p\!\left(\eta_{b_n} + n\beta(1,2,b_n)^{\delta/(2+\delta)}\right)\).
    In particular,
    \(\sum_{i\in I_n} (\zeta_i-\bar\zeta)^2 = o_p\!\left(\frac{n}{\eta_{b_n}^2}\right)\)
    and
    \(\sum_{i\in I_n} (\rho_i-\bar\rho)^2 = o_p\!\left(\frac{n}{\eta_{b_n}^2}\right)\),
    where \(\bar\zeta=n^{-1}\sum_i\zeta_i\) and \(\bar\rho=n^{-1}\sum_i\rho_i\).
\end{lemma}

\begin{proof}
    \textit{The reverse-projection average \(\zeta_i\).}
    Set \(g_{ik}:=\hat H_i(X_k)-\theta_{ik}\) for \(k\neq i\).
    Then \(Eg_{ik}=0\) and \(\|g_{ik}\|_{2+\delta}\lesssim \mathbf H_{2+\delta}\) uniformly in
    \((i,k)\). Hence \(E\zeta_i^2 = \frac{1}{(n-1)^2}\sum_{k\neq i}\sum_{\ell\neq i} E(g_{ik}g_{i\ell})\).
    For each fixed \(i\), at most \(Cn\eta_{b_n}\) ordered pairs \((k,\ell)\) satisfy
    \(\mathbf d(k,\ell)\le b_n\), and each corresponding summand is \(O(\mathbf H_2^2)\). For the
    remaining pairs, Lemma~A.6 gives \(|E(g_{ik}g_{i\ell})| \lesssim \mathbf H_{2+\delta}^2\beta(1,1,b_n)^{\delta/(2+\delta)}\).
    Therefore, \(E\zeta_i^2 \lesssim \frac{n\eta_{b_n}}{n^2}\mathbf H_2^2 + \mathbf H_{2+\delta}^2\beta(1,1,b_n)^{\delta/(2+\delta)}\),
    and summing over \(i\) proves the first display.

    \medskip

    \textit{The remainder average \(\rho_i\).}
    Using \(\rho_i=(n-1)^{-1}\sum_{k\neq i}R_{ik}\), we have \(E\rho_i^2 = \frac{1}{(n-1)^2}\sum_{k\neq i}\sum_{\ell\neq i} E(R_{ik}R_{i\ell})\).
    For each fixed \(i\), the number of ordered pairs \((k,\ell)\) for which at least one of
    \(\mathbf d(i,k)\le b_n\), \(\mathbf d(i,\ell)\le b_n\), or \(\mathbf d(k,\ell)\le b_n\) holds
    is bounded by \(Cn\eta_{b_n}\), and each summand is \(O(\mathbf H_2^2)\).

    On the complement of this set, all three pairwise distances exceed \(b_n\). In
    particular, \(k\) is a \(b_n\)-free index in \((i,k,\ell)\). Coupling \(X_k\) to an
    independent copy by Berbee's lemma, and using Lemma~A.7 exactly as in the proof of
    Theorem~\ref{thm:lln for non-degenerate}, yields \(|E(R_{ik}R_{i\ell})| \lesssim \mathbf H_{2+\delta}^2\beta(1,2,b_n)^{\delta/(2+\delta)}\).
    Hence \(E\rho_i^2 \lesssim \frac{n\eta_{b_n}}{n^2}\mathbf H_2^2 + \mathbf H_{2+\delta}^2\beta(1,2,b_n)^{\delta/(2+\delta)}\),
    and summing over \(i\) proves the second display.

    Finally, centering can increase the sum of squares by at most a constant factor:
    \(\sum_i(\zeta_i-\bar\zeta)^2\le 4\sum_i\zeta_i^2\) and likewise for \(\rho_i\). Since
    \(\eta_{b_n}^4=o(n)\) implies \(\eta_{b_n}^3=o(n)\) and \(n/\eta_{b_n}^2\to\infty\), while
    \(\beta(1,1,b_n)\le\beta_4(b_n)\) and \(\beta(1,2,b_n)\le\beta_4(b_n)\), we obtain
    \(\eta_{b_n} + n\beta(1,1,b_n)^{\delta/(2+\delta)} = o\!\left(\frac{n}{\eta_{b_n}^2}\right)\),
    and the same for the term involving \(\beta(1,2,b_n)\). The last display follows.
\end{proof}

\begin{proof}[Proof of Theorem~\ref{thm:rowsum-hac-consistency}]
    Because \(\mu_i=\mu\) for all \(i\in I_n\),
    \(Q_i = \mu + h_i + \zeta_i + \rho_i\). Hence, with \(r_i := (\zeta_i-\bar\zeta) + (\rho_i-\bar\rho)\), we have \(Q_i-\bar Q = (h_i-\bar h) + r_i\). Therefore, \(\hat\nu_n^2 = \sum_{i,j}\kappa_{ij}(h_i-\bar h)(h_j-\bar h) + 2A_n + B_n\), where \(A_n := \sum_{i,j}\kappa_{ij}(h_i-\bar h)r_j\) and \(B_n := \sum_{i,j}\kappa_{ij}r_i r_j\).
    By Proposition~\ref{prop:benchmark-hac-projection},
    \(\sum_{i,j}\kappa_{ij}(h_i-\bar h)(h_j-\bar h) = \nu_n^2 + o_p(n)\).
    It remains to show that \(A_n=o_p(n)\) and \(B_n=o_p(n)\).

    Lemma~\ref{lem:rowsum-hac-row-average-bounds} gives
    \(\sum_i (\zeta_i-\bar\zeta)^2 = o_p\!\left(\frac{n}{\eta_{b_n}^2}\right)\) and \(\sum_i (\rho_i-\bar\rho)^2 = o_p\!\left(\frac{n}{\eta_{b_n}^2}\right)\).
    Hence \(\sum_i r_i^2 = o_p\!\left(\frac{n}{\eta_{b_n}^2}\right)\).
    Also, \(E\sum_i (h_i-\bar h)^2 \le 4\sum_i Eh_i^2 \lesssim n\),
    so \(\sum_i (h_i-\bar h)^2=O_p(n)\).

    Applying Lemma~\ref{lem:rowsum-hac-weighted-bound}, we obtain
    \[
        |A_n|
        \lesssim
        \eta_{b_n}
        \Big(\sum_i (h_i-\bar h)^2\Big)^{1/2}
        \Big(\sum_i r_i^2\Big)^{1/2}
        = o_p(n),
    \]
    and \(|B_n| \lesssim \eta_{b_n}\sum_i r_i^2 = o_p(n)\).
    Therefore \(\hat\nu_n^2 = \nu_n^2 + o_p(n)\).
    Since \(\nu_n^2=\Theta(n)\) under Theorem~\ref{thm:non-degenerate-clt}, ratio consistency follows:
    \(\frac{\hat\nu_n^2}{\nu_n^2} \to_p 1\).
\end{proof}

Assumption~\ref{asmp:hac-non-degenerate}(iv) only enters through the identity
\(Q_i-\bar Q=(h_i-\bar h)+(\zeta_i-\bar\zeta)+(\rho_i-\bar\rho)\), so that the
constant deterministic row mean cancels after centering. The rest of the proof uses the
same coupling and counting arguments as in the proofs of the non-degenerate limit
theorems.

\subsection{Proof of \autoref{thm:degenerate u variance}}

\begin{proof}[Proof of \autoref{thm:degenerate u variance}]
    Let \(H_{ik} = H(X_{i}, X_{k})\) and \(H_{i\tilde{k}} = H(X_{i}, \tilde{X}_{k})\), where \(\tilde{X}_{k}\) is an independent copy of \(X_{k}\). Writing \(S_{n} = \sum_{i}\sum_{k\neq i} H_{ik}\), we decompose the variance as
    \begin{align*}
        s_{n}^{2}
        &= \Var\pqty{S_{n}} \\
        &= 2\sum_i \sum_{k\neq i}\Var(H_{ik})
        + 4\sum_{i}\sum_{k_{1}\neq i}\sum_{k_{2}\neq i,k_{1}} \Cov\pqty{H_{ik_{1}}, H_{ik_{2}}}
        + \sum_{(i_{1},i_{2},k_{1},k_{2})\in \mathcal D_{4}} \Cov\pqty{H_{i_{1}k_{1}}, H_{i_{2}k_{2}}} \\
        &=: 2\sigma_{n,1}^{2} + 4\mathbf{R}_{1} + \mathbf{R}_{2},
    \end{align*}
    where \(\mathcal D_{4}\) denotes the set of quadruples \((i_{1},i_{2},k_{1},k_{2})\) with all indices distinct.

    For the first term,
    \begin{align*}
        \sigma_{n,1}^{2}
        &= \sum_i \sum_{k\neq i} \Var(H_{ik}) \\
        &= \sum_i \sum_{k\neq i} \Ep H_{i\tilde{k}}^{2}
        + \sum_i \sum_{k\neq i} \Ep\pqty{H_{ik}^{2} - H_{i\tilde{k}}^{2}}
        - \sum_i \sum_{k\neq i} \abs{\Ep H_{ik}}^{2} \\
        &= \sigma_{n}^{2} + \mathbf{I}_{1} - \mathbf{I}_{0},
    \end{align*}
    where \(\sigma_{n}^{2} := \sum_i \sum_{k\neq i} \Ep H_{i\tilde{k}}^{2}\), \(\mathbf{I}_{1} := \sum_i \sum_{k\neq i} \Ep\pqty{H_{ik}^{2} - H_{i\tilde{k}}^{2}}\), and \(\mathbf{I}_{0} := \sum_i \sum_{k\neq i} \abs{\Ep H_{ik}}^{2}\).
    By splitting the index pairs according to \(T_{2}^{m}\) and \(T_{1,1}^{m}\), Berbee's lemma and H\"older's inequality give
    \begin{equation}\label{eq:nonpar-I1-bound}
        \abs{\mathbf{I}_{1}}
        \lesssim
        \tau_{2}^{m} \mathbf{H}_{2}^{2}
        + \tau_{1,1}^{m} \mathbf{H}_{2 + \delta}^{2} \beta(1,1,m)^{\frac{\delta}{2 + \delta}}.
    \end{equation}
    Since \(\Ep H_{i\tilde{k}} = 0\), the same argument also yields
    \begin{equation}\label{eq:nonpar-I0-bound}
        \mathbf{I}_{0}
        \lesssim
        \tau_{2}^{m} \mathbf{H}_{1}^{2}
        + \tau_{1,1}^{m} \mathbf{H}_{1+\delta}^{2} \beta(1,1,m)^{\frac{2\delta}{1 + \delta}}.
    \end{equation}

    We now bound \(\mathbf{R}_{1}\). The summation runs over vectors of three different indices, so we split according to \(T_{3}^{m}\), \(T_{2,1}^{m}\), and \(T_{1^{3}}^{m}\). For \((i,k_{1},k_{2})\in T_{3}^{m}\), Cauchy--Schwarz gives \(\abs{\Cov\pqty{H_{ik_{1}},H_{ik_{2}}}} \leq \sqrt{\Ep H_{ik_{1}}^{2}\, \Ep H_{ik_{2}}^{2}} \leq \mathbf{H}_{2}^{2}\).
    For \((i,k_{1},k_{2})\in T_{2,1}^{m}\), there are two cases. If the repeated index is \(i\), equivalently \(d(i,\{k_{1},k_{2}\})>m\), then
    \begin{align*}
        \abs{\Cov\pqty{H_{ik_{1}}, H_{ik_{2}}}}
        &\leq \abs{\Ep H_{ik_{1}}H_{ik_{2}} - \Ep H_{\tilde{i}k_{1}}H_{\tilde{i}k_{2}}}
        + \abs{\Ep H_{\tilde{i}k_{1}}H_{\tilde{i}k_{2}}} \\
        &\lesssim \mathbf{H}_{2+\delta}^{2} \beta(1,2,m)^{\frac{\delta}{2 + \delta}} + \gamma_{m,1},
    \end{align*}
    by Berbee's lemma and \autoref{lemma:gamma-prod}. If one of \(k_{1}\) or \(k_{2}\) is an \(m\)-free index, then \autoref{lemma:m-free} yields \(\abs{\Cov\pqty{H_{ik_{1}}, H_{ik_{2}}}} \lesssim \mathbf{H}_{2 + \delta}^{2} \beta(1,2,m)^{\frac{\delta}{2 + \delta}}\).
    Therefore,
    \begin{equation}\label{eq:nonpar-R1-bound}
        \mathbf{R}_{1}
        \lesssim \tau_{3}^{m} \mathbf{H}_{2}^{2}
        + \pqty{\tau_{2,1}^{m} + \tau_{1^{3}}^{m}}\mathbf{H}_{2+\delta}^{2} \beta(1,2,m)^{\frac{\delta}{2 + \delta}}
        + \tau_{2,1}^{m}\gamma_{m,1}.
    \end{equation}

    For \(\mathbf{R}_{2}\), the summation runs over vectors of four different indices. For \((i_{1},i_{2},k_{1},k_{2}) \in T_{4}^{m}\), \(\abs{\Cov\pqty{H_{i_{1}k_{1}}, H_{i_{2}k_{2}}}} \leq \mathbf{H}_{2}^{2}\). For \((i_{1},i_{2},k_{1},k_{2})\in T_{3,1}^{m}\cup T_{2,1,1}^{m} \cup T_{1^{4}}^{m}\), \autoref{lemma:m-free} gives \(\abs{\Cov\pqty{H_{i_{1}k_{1}}, H_{i_{2}k_{2}}}} \lesssim \mathbf{H}_{2 + \delta}^{2} \beta(1,3,m)^{\frac{\delta}{2 + \delta}}\).
    Finally, consider \((i_{1},i_{2},k_{1},k_{2}) \in T^{m}_{2,2}\). If, after relabeling, either \(d(i_{1},i_{2}) \leq m\) or \(d(i_{1},k_{2}) \leq m\), then \(\abs{\Cov\pqty{H_{i_{1}k_{1}}, H_{i_{2}k_{2}}}} \lesssim \mathbf{H}_{2 +\delta}^{2}\beta(2,2,m)^{\frac{\delta}{2 + \delta}} + \gamma_{m,1}\).
    Otherwise the two close pairs are \((i_{1},k_{1})\) and \((i_{2},k_{2})\), and the pair of pairs is \(m\)-separated, so \(\abs{\Cov\pqty{H_{i_{1}k_{1}}, H_{i_{2}k_{2}}}} \lesssim \mathbf{H}_{2 + \delta}^{2} \beta(2,2,m)^{\frac{\delta}{2+\delta}}\).
    Hence
    \begin{equation}\label{eq:nonpar-R2-bound}
        \mathbf{R}_{2}
        \lesssim \tau_{4}^{m} \mathbf{H}_{2}^{2}
        + \hat{\tau}_{4}^{m}\mathbf{H}_{2+\delta}^{2}\beta(1,3,m)^{\frac{\delta}{2+ \delta}}
        + \tau_{2,2}^{m}\bqty{\mathbf{H}_{2 + \delta}^{2}\beta(2,2,m)^{\frac{\delta}{2 + \delta}} + \gamma_{m,1}}.
    \end{equation}

    We now compare these remainder terms with the theorem assumption. By monotonicity of moments and mixing coefficients, \(\mathbf{H}_{1} \leq \mathbf{H}_{2}\), \(\mathbf{H}_{1+\delta} \leq \mathbf{H}_{2+\delta}\), and definition of \(\beta_{4}(m)\). Since the shorter index profiles embed into the longer ones, \(\tau_{2}^{m},\tau_{3}^{m} \leq \tau_{4}^{m}\), \(\tau_{1,1}^{m},\tau_{2,1}^{m} \leq \tau_{2,2}^{m}\), and \(\tau_{1^{3}}^{m} \leq \hat{\tau}_{4}^{m}\). Moreover, for \(0\leq \beta\leq 1\), \(\beta^{\frac{2\delta}{1+\delta}} \leq \beta^{\frac{\delta}{2+\delta}}\).
    Therefore \eqref{eq:nonpar-I1-bound}--\eqref{eq:nonpar-R2-bound} imply
    \[
        \abs{\mathbf{I}_{1}} + \mathbf{I}_{0} + \abs{\mathbf{R}_{1}} + \abs{\mathbf{R}_{2}}
        \lesssim
        \tau_{4}^{m} \mathbf{H}_{2}^{2}
        + \pqty{\hat{\tau}_{4}^{m} + \tau_{2,2}^{m}} \mathbf{H}_{2+\delta}^{2}\beta_{4}(m)^{\frac{\delta}{2 + \delta}}
        + \tau_{2,2}^{m} \gamma_{m,1}
        = o(\sigma_{n}^{2}).
    \]
    Consequently, \(s_{n}^{2} = 2\sigma_{n}^{2} + 2\mathbf{I}_{1} - 2\mathbf{I}_{0} + 4\mathbf{R}_{1} + \mathbf{R}_{2} = (2 + o(1))\sigma_{n}^{2}\),
    as claimed.
\end{proof}

\subsection{Limiting distribution for the test statistic}

Now we prove \autoref{thm:nonpar test} using \autoref{thm:lln for non-degenerate}, \autoref{thm:degenerate}, and \autoref{thm:degenerate u variance}.
\begin{proof}[Proof of \autoref{thm:nonpar test}]
    Write \(B = b^{d}\). Under \(\mathbb H_{0}\), let \(u_{i} = Y_{i} - g(Z_{i}, \gamma_{0})\), \(I_{n1} := I_{n}^{\circ} = B^{- \frac{1}{2}}\sum_{i}\sum_{j\neq i} u_{i}u_{j} K_{ij}\), and \(\bar{s}_{n}^{2} = 2B^{-1}\sum_{i}\sum_{j\neq i} u_{i}^{2}u_{j}^{2} K_{ij}^{2}\).
    We decompose the feasible statistic as
    \begin{equation}\label{eq:nonpar-decomp-In}
        I_{n}
        = I_{n1} - 2I_{n2} + I_{n3},
    \end{equation}
    where \(I_{n2} := B^{-\frac12}\sum_{i\neq j} u_{i}\delta_{j}K_{ij}\), \(I_{n3} := B^{-\frac12}\sum_{i\neq j} \delta_{i}\delta_{j}K_{ij}\), and \(\delta_{j} := g(Z_{j},\hat\gamma)-g(Z_{j},\gamma_{0})\).
    Choose any \(\delta \in \pqty{0, \frac{4\xi}{1-\xi}}\).
    We will prove the following four claims:
    \begin{enumerate}[(a)]
        \item \(s_{n}^{2} = (2+o(1))\sigma_{n}^{2} \asymp n^{2}\), where
            \(\sigma_{n}^{2} := B^{-1}\sum_{i\neq j}\Ep\bqty{u_{i}^{2}\tilde u_{j}^{2}K_{i\tilde j}^{2}}\);
        \item \(I_{n1}/s_{n} \rightsquigarrow N(0,1)\);
        \item \(I_{n2} = o_{p}(n)\) and \(I_{n3} = o_{p}(n)\);
        \item \(\hat s_{n}^{2} = (2+o_{p}(1))\sigma_{n}^{2}\).
    \end{enumerate}
    Since \(s_{n}\asymp n\) by part (a), these four claims imply \(\frac{I_{n1}}{s_{n}} \rightsquigarrow N(0,1)\), \(\frac{I_{n}-I_{n1}}{s_{n}} \to_{p} 0\), and \(\frac{\hat s_{n}^{2}}{s_{n}^{2}} \to_{p} 1\),
    and therefore \(\mathbf T_{n} = I_{n}/\hat s_{n} \rightsquigarrow N(0,1)\) by Slutsky's theorem.

    We first record the standard kernel-overlap bound. For every \(p \geq 1\), by \autoref{asmp:z} and \autoref{asmp:nonpar kernel},
    \begin{equation}\label{eq:kernel-Lp-bounds}
        \norm{K_{ij}}_{p}^{p}
        = \Ep\abs{K\pqty{\frac{Z_{i} - Z_{j}}{b}}}^{p}
        = B\int \abs{K(v)}^{p} f_{i,j}(z + bv, z)\dd z\dd v
        = O(B),
    \end{equation}
    and similarly \(\norm{K_{i\tilde{j}}}_{p}^{p} = O(B)\). For every \(p \geq 1\) such that \(p(1-\xi) < 4\), H\"older's inequality, \autoref{asmp:nonpar moments}, and \eqref{eq:kernel-Lp-bounds} give
    \begin{equation}\label{eq:nonpar-Hp-rate}
        \mathbf{H}_{p} = O\pqty{B^{\lambda_{p}}},
        \qquad
        \lambda_{p} := \frac{4 - 2p - p(1-\xi)}{4p} = \frac{4 - 3p + p\xi}{4p}.
    \end{equation}
    In particular, because \(\delta < \frac{4\xi}{1-\xi}\), the bound applies at \(p = 2, 2+\delta, 4\), and \(4+\delta\).

    \textit{Claim (a).}
    Consider the kernel \(H_{n}(x,y) = B^{- \frac{1}{2}} x_{1}y_{1} K\pqty{\frac{x_{2} - y_{2}}{b}}\).
    Under \(\mathbb H_{0}\), this kernel is degenerate. Indeed, for an independent copy \(\tilde{X}_{k} = (\tilde{u}_{k}, \tilde{Z}_{k})\),
    \begin{align*}
        \Ep\bqty{H_{n}(X_{i}, \tilde{X}_{k}) \mid X_{i}}
        &= B^{- \frac{1}{2}} u_{i} \Ep\bqty{\tilde{u}_{k} K\pqty{\frac{Z_{i} - \tilde{Z}_{k}}{b}} \mid X_{i}} \\
        &= B^{- \frac{1}{2}} u_{i} \Ep\bqty{\Ep\pqty{\tilde{u}_{k} \mid \tilde{Z}_{k}, X_{i}} K\pqty{\frac{Z_{i} - \tilde{Z}_{k}}{b}} \mid X_{i}} = 0.
    \end{align*}
    Moreover,
    \begin{align*}
        \Ep H_{i\tilde{k}}^{2}
        &= \frac{1}{B} \Ep\bqty{u_{i}^{2}\tilde{u}_{k}^{2} K_{i\tilde{k}}^{2}} \\
        &= \frac{1}{B} \Ep\bqty{\Ep\bqty{u_{i}^{2} \mid Z_{i}} \, \Ep\bqty{\tilde{u}_{k}^{2} \mid \tilde{Z}_{k}} \, K^{2}\pqty{\frac{Z_{i} - \tilde{Z}_{k}}{b}}} \\
        &= \int \mu_{i,2}(z + bv) \mu_{k,2}(z) K(v)^{2} f_{i}(z + bv) f_{k}(z)\dd z\dd v.
    \end{align*}
    The displayed integral is uniformly bounded, so \(\sigma_n^2 \lesssim n^2\). For the lower bound, let \(\underline{\mu}, \underline{f} > 0\) be the constants from Assumptions~\ref{asmp:nonpar moments} and~\ref{asmp:z}. Choose a bounded measurable set \(\mathcal{Z}_0 \subset \mathcal{Z}\) with positive measure. Since \(K\geq 0\) and \(\int K(v)\dd v = 1\), there exist a bounded measurable set \(A \subset \mathbb{R}^{d}\) with positive measure and a constant \(\underline{K} > 0\) such that \(K(v)^2 \geq \underline{K}\) for all \(v \in A\). Because \(A\) is bounded and translations are continuous in \(L^1\),
    \[
        \inf_{v\in A} \int \indicator{z+bv \in \mathcal{Z}_0}\indicator{z \in \mathcal{Z}_0}\dd z
        \to
        \int \indicator{z \in \mathcal{Z}_0}\dd z
        > 0
    \]
    as \(b\to 0\). Hence, for all sufficiently large \(n\) and all \(i\neq k\),
    \[
        \Ep H_{i\tilde{k}}^{2}
        \geq
        \underline{\mu}^{2}\underline{f}^{2}\underline{K}
        \int_A \int \indicator{z+bv \in \mathcal{Z}_0}\indicator{z \in \mathcal{Z}_0}\dd z\dd v
        \geq c
    \]
    for some constant \(c>0\). Therefore,
    \begin{equation}\label{eq:nonpar-sigma-order}
        \sigma_{n}^{2} \asymp n^{2}.
    \end{equation}

    Let \(m = C\log n\), so \(\eta_{m} = O(m^{C_{1}})\). By \eqref{eq:nonpar-Hp-rate}, \(\mathbf{H}_{2}^{2} = O\pqty{B^{2\lambda_{2}}} = O\pqty{B^{\frac{\xi-1}{2}}}\) and \(\mathbf{H}_{2+\delta}^{2} = O\pqty{B^{2\lambda_{2+\delta}}}\).
    Since \(\tau_{4}^{m} \lesssim n\eta_{m}^{3}\), we obtain
    \[
        \tau_{4}^{m}\mathbf{H}_{2}^{2}
        = O\pqty{n\eta_{m}^{3} B^{\frac{\xi-1}{2}}}
        = o(n^{2}),
    \]
    because \(\alpha < (2-\xi)^{-1}\) and \(\eta_{m}\) grows only polylogarithmically. Likewise,
    \[
        \pqty{\hat\tau_{4}^{m}+\tau_{2,2}^{m}} \mathbf{H}_{2+\delta}^{2}\beta_{4}(m)^{\frac{\delta}{2+\delta}}
        = o(n^{2})
    \]
    for a sufficiently large choice of \(C\), since \(\beta_{4}(m)=O(\rho^{m})\) by \autoref{cor:degenerate}.

    Finally, for \(k_{1}\neq k_{2}\) and \(\xi_{1}\in(\frac12,1)\), H\"older's inequality and \eqref{eq:kernel-Lp-bounds} yield
    \begin{align*}
        \Ep\abs{\gga_{k_{1}k_{2}}}
        &\leq \frac{1}{B}\Ep\abs{u_{i}u_{j}u_{k_{1}}u_{k_{2}}K_{ik_{1}}K_{jk_{2}}} \\
        &\leq \frac{1}{B}\norm{u_{i}u_{j}u_{k_{1}}u_{k_{2}}}_{\frac{1}{1-\xi_{1}}}
        \norm{K_{ik_{1}}K_{jk_{2}}}_{\frac{1}{\xi_{1}}}
        = O\pqty{B^{-1 + 2\xi_{1}}}.
    \end{align*}
    Therefore \(\tau_{2,2}^{m}\gamma_{m,1} \lesssim n^{2}\eta_{m}^{2} B^{-1 + 2\xi_{1}} = o(n^{2})\),
    because \(2\xi_{1}-1>0\). The conditions of \autoref{thm:degenerate u variance} are now verified, so
    \begin{equation}\label{eq:oracle-variance-order}
        s_{n}^{2} = (2+o(1))\sigma_{n}^{2} \asymp n^{2}.
    \end{equation}

    \textit{Claim (b).}
    We verify the conditions of \autoref{thm:degenerate} for the kernel \(H_{ij}=B^{-1/2}u_{i}u_{j}K_{ij}\). First, \(\Ep I_{n1} = \sum_{i\neq j}\Ep H_{ij} \lesssim \tau_{2}^{m}\mathbf{H}_{1} + n^{2}\mathbf{H}_{1+\delta}\beta(1,1,m)^{\frac{\delta}{1+\delta}}\),
    and the right-hand side is \(o(s_{n})\) by \eqref{eq:oracle-variance-order} and the same choice of \(m=C\log n\).

    Next, \(\Ep H_{ij}^{4} = \Ep\bqty{\frac{1}{B^{2}}u_{i}^{4}u_{j}^{4}K_{ij}^{4}} \leq \frac{1}{B^{2}}\norm{u_{i}^{4}u_{j}^{4}}_{\frac{1}{1-\xi}}\norm{K_{ij}^{4}}_{\xi^{-1}} = O(B^{\xi - 2})\), and therefore \(\frac{1}{n}\mathbf{H}_{4}^{4} = O\pqty{n^{-1+\alpha(2-\xi)}} = o(1)\),
    because \(\alpha < (2-\xi)^{-1}\).

    To bound \(\mathbf\Gamma_{m,2}\), first consider the case \(i=j\) with \(X_{k_{1}},X_{k_{2}},X_{i}\) mutually independent. Then
    \begin{align*}
        \Ep \pqty{\gga_{k_{1}k_{2}}}^{2}
        &= \Ep\bqty{\bqty{\Ep\pqty{H_{\tilde{i}k_{1}}H_{\tilde{i}k_{2}}\mid X_{k_{1}},X_{k_{2}}}}^{2}} \\
        &= \frac{1}{B^{2}}\Ep\bqty{u_{k_{1}}^{2}u_{k_{2}}^{2}\pqty{\Ep^{k_{1}k_{2}}\pqty{u_{i}^{2}K_{ik_{1}}K_{ik_{2}}}}^{2}} \\
        &\lesssim B \int f_{k_{1}}(z_{1})f_{k_{2}}(z_{1}-bv_{2})
        \pqty{\int K(v_{1})K(v_{1}+v_{2})f_{i}(z_{1}+bv_{1})\dd v_{1}}^{2} \dd z_{1}\dd v_{2}
        + o(B) \\
        &= O(B).
    \end{align*}
    For \(i\neq j\), with \(\sigma(X_{i},X_{j})\), \(\sigma(X_{k_{1}})\), and \(\sigma(X_{k_{2}})\) mutually independent,
    \begin{align*}
        \Ep\pqty{\gga_{k_{1}k_{2}}}^{2}
        &= \frac{1}{B^{2}} \Ep\bqty{u_{k_{1}}^{2}u_{k_{2}}^{2}\pqty{\Ep^{k_{1}k_{2}} \pqty{u_{i}u_{j} K_{ik_{1}} K_{jk_{2}} }}^{2}} \\
        &\leq \frac{1}{B^{2}} \Ep\bqty{u_{k_{1}}^{2}u_{k_{2}}^{2}
        \pqty{\norm{u_{i}u_{j}}_{\frac{4}{1-\xi}} \pqty{B^{2}\int K^{\frac{4}{3+\xi}}(v_{1}) K^{\frac{4}{3+\xi}}(v_{2}) f_{i,j}(Z_{k_{1}}+bv_{1},Z_{k_{2}}+bv_{2})\dd v_{1}\dd v_{2}}^{\frac{3+\xi}{4}}}^{2}} \\
        &= O(B^{1+\xi}).
    \end{align*}
    Hence \(\mathbf\Gamma_{m,2}^{2} = O(B)\), and therefore \(\eta_m^4 \mathbf\Gamma_{m,2}^{2} = O\pqty{(\log n)^{4C_1} n^{-\alpha}} = o(1)\).
    Next, since \(\mathbf{H}_{4}^{4} = O(B^{\xi-2})\),
    \(\frac{\eta_m^7 + \eta_{4m}^3}{n}\mathbf{H}_{4}^{4} = O\pqty{\frac{(\log n)^{7C_1}}{n} B^{\xi-2}} = o(1)\),
    because \(B^{-1} = O(n^{\alpha})\) with \(\alpha < (2-\xi)^{-1}\). Finally, \(\mathbf{H}_{4+\delta}^{4} = O(B^{4\lambda_{4+\delta}})\), and with a sufficiently large choice of \(C\),
    \(n^{2}\eta_m^2\mathbf{H}_{4+\delta}^{4}\beta(m)^{\frac{\delta}{4+\delta}} = O\pqty{n^{2}(\log n)^{2C_1} B^{4\lambda_{4+\delta}} \rho^{C\log n\frac{\delta}{4+\delta}}} = o(1)\).
    Therefore \autoref{thm:degenerate} yields \(\frac{I_{n1} - \Ep I_{n1}}{s_{n}} \rightsquigarrow N(0,1)\).
    Since \(\Ep I_{n1} = o(s_{n})\), we conclude that
    \begin{equation}\label{eq:oracle-clt-nonpar}
        \frac{I_{n1}}{s_{n}} \rightsquigarrow N(0,1).
    \end{equation}

    \textit{Claim (c).}
    We want to show $I_{n2}=o_p(n)$ and $I_{n3}=o_p(n)$.
    Let $\delta_\gamma:=\hat\gamma-\gamma_0$, $\Delta(z):=\partial_\gamma g(z,\gamma_0)$, and $\Delta_j:=\Delta(Z_j)$. By Taylor's theorem,
    \[
        \delta_j=g(Z_j,\hat\gamma)-g(Z_j,\gamma_0)=\Delta_j^\top \delta_\gamma+\mathrm{Rem}_j,
        \qquad
        |\mathrm{Rem}_j|\le C M_g(Z_j)\|\delta_\gamma\|^2.
    \]
    Hence
    \[
        I_{n2}=I_{n21}^\top \delta_\gamma+R_{n,2},
        \qquad
        I_{n21}:=B^{-1/2}\sum_{i\neq j}u_i\Delta_j K_{ij},
    \]
    with
    \[
        |R_{n,2}|
        \le C\|\delta_\gamma\|^2 B^{-1/2}\sum_{i\neq j}|u_i|M_g(Z_j)|K_{ij}|.
    \]
    Since $\sqrt n \delta_\gamma=O_p(1)$ and using \autoref{asmp:nonpar moments}, \autoref{asmp:z} and \eqref{eq:kernel-Lp-bounds}, $R_{n,2}=o_p(n)$.

    It remains to bound $I_{n21}$. Fix a coordinate $r$ and define
    \[
        J^{(r)}(x,y):=\frac12 B^{-1/2}\bigl[x_1\Delta^{(r)}(y_2)+y_1\Delta^{(r)}(x_2)\bigr]K\!\left(\frac{x_2-y_2}{b}\right).
    \]
    Then $I_{n21}^{(r)}=\sum_{i\neq j}J^{(r)}(X_i,X_j)$. Write the Hoeffding decomposition as
    \[
        I_{n21}^{(r)}=\hat I_{n21}^{(r)}+I_{n21}^{*(r)},
        \qquad
        \hat I_{n21}^{(r)}=2(n-1)\sum_i q_i^{(r)},
    \]
    where $q_i^{(r)}=(n-1)^{-1}\sum_{k\neq i}\hat J_k^{(r)}(X_i)$ and $\hat J_k^{(r)}(x):=\mathbb E[J^{(r)}(X_k,x)]$.

    For $x=(u,z)$,
    \[
        \hat J_k^{(r)}(u,z)
        =\frac12 B^{-1/2}\mathbb E\!\left[u_k\Delta^{(r)}(z)K\!\left(\frac{Z_k-z}{b}\right)+u\,\Delta^{(r)}(Z_k)K\!\left(\frac{Z_k-z}{b}\right)\right].
    \]
    The first term is zero because $\mathbb E(u_k\mid Z_k)=0$. Hence only the second term remains, and after the change of variables $Z_k=z+bv$,
    \[
        \hat J_k^{(r)}(u,z)=\frac12 B^{1/2}u\,\psi_{k,r}(z),
        \qquad
        \psi_{k,r}(z):=\int \Delta^{(r)}(z+bv)K(v)f_k(z+bv)\,dv.
    \]

    Let $p:=2+\delta_0$, where $\delta_0>0$ is chosen small enough so that
    \(
        \delta_0<\frac{2\xi}{2-\xi}.
    \)
    In particular, since $\xi\in(0,1)$, we have $\delta_0<1$ and therefore $p<4$.Since
    \[
        |\psi_{k,r}(z)|^p
        \lesssim
        \int M_g(z+bv)^p K(v)f_k(z+bv)\,dv,
    \]
    integrating against $f_i(z)\,dz$, changing variables $w=z+bv$, and using
    $\sup_i\sup_z f_i(z)<\infty$, we get
    \[
        \|\psi_{k,r}(Z_i)\|_p^p
        \lesssim
        \int M_g(w)^p f_k(w)\,dw \le
        \Big(\int M_g(w)^4 f_k(w)\,dw\Big)^{p/4}.
    \]

    By \autoref{asmp:z}(b), $\sup_{i,k}\|\psi_{k,r}(Z_i)\|_p<\infty$.

    Now write
    \[
        q_i^{(r)}=\frac12 B^{1/2}u_i\bar\psi_i,
        \qquad
        \bar\psi_i:=\frac{1}{n-1}\sum_{k\neq i}\psi_{k,r}(Z_i).
    \]
    Choose $s\le 4$ so that
    \(
        \frac{1}{2+\delta_0}=\frac{1}{4/(1-\xi)}+\frac{1}{s}.
    \)
    By Assumption~4.2, $\sup_i\|u_i\|_{4/(1-\xi)}<\infty$. Therefore Hölder's inequality gives
    \(
        \sup_i\|q_i^{(r)}\|_{2+\delta_0}
        \lesssim
        B^{1/2}.
    \)

    Since \(q_{i}^{(r)}\) is centered, for \(j\notin \nb{i}{m}\), \autoref{lemma:aux-one-index} gives \(\abs{\Ep\bqty{q_{i}^{(r)} q_{j}^{(r)}}} \lesssim B \beta(1,1,m)^{\frac{\delta_{0}}{2+\delta_{0}}}\), while for \(j\in \nb{i}{m}\), \(\abs{\Ep\bqty{q_{i}^{(r)} q_{j}^{(r)}}} \lesssim B\).
    So that with the choice \(m = C\log n\) with sufficiently large \(C\),
    \(
        \hat I_{n21}^{(r)}=O_p\bigl(n^{3/2}B^{1/2}\eta_m^{1/2}\bigr)=o_p(n^{3/2}).
    \)
    For the degenerate remainder, the same second-moment argument as in the proof of \autoref{thm:degenerate u variance} gives
    \(
        I_{n21}^{*(r)}=o_p(n^{3/2}).
    \)
    Hence
    \(
        I_{n21}^{(r)}=o_p(n^{3/2}).
    \)
    Since the parameter dimension is finite, $\|I_{n21}\|=o_p(n^{3/2})$, and therefore
    \[
        |I_{n21}^\top \delta_\gamma|
        \le \|I_{n21}\|\,\|\delta_\gamma\|
        =o_p(n^{3/2})O_p(n^{-1/2})
        =o_p(n).
    \]
    Combining this with $R_{n,2}=o_p(n)$, we obtain $I_{n2}=o_p(n)$.

    Finally,
    \(
        |I_{n3}|
        \le C\|\delta_\gamma\|^2 B^{-1/2}\sum_{i\neq j}M_g(Z_i)M_g(Z_j)|K_{ij}|.
    \)
    The sum has expectation $O(n^2B^{1/2})$, so
    \(
        I_{n3}=O_p(n^{-1})O_p(n^2B^{1/2})=O_p(nB^{1/2})=o_p(n).
    \)

    \textit{Claim (d).}
    The oracle variance estimator is \(\bar{s}_{n}^{2} = 2B^{-1}\sum_{i\neq j} u_{i}^{2}u_{j}^{2}K_{ij}^{2}\). Consider the non-degenerate kernel \(G_{n}(x,y) = 2B^{-1}x_{1}^{2}y_{1}^{2} K\pqty{\frac{x_{2} - y_{2}}{b}}^{2}\). Let \(\mathbf{G}_{p} := \sup_{i\neq k}\cqty{\norm{G_{n}(X_{i},X_{k})}_{p}, \norm{G_{n}(X_{i},\tilde{X}_{k})}_{p}}\). Its mean satisfies \(\theta_{ik}^{G} := \Ep G_{n}(X_{i}, \tilde{X}_{k}) = 2B^{-1}\Ep\bqty{u_{i}^{2}\tilde{u}_{k}^{2} K_{i\tilde{k}}^{2}}\),
    and therefore \(\sum_i \sum_{k\neq i} \theta_{ik}^{G} = 2\sigma_{n}^{2}\). Set \(\delta_{G} := \delta/2\). Since \(G_{n} = 2H_{n}^{2}\),
    \(\mathbf{G}_{2}^{2} = 4\mathbf{H}_{4}^{4}\) and \(\mathbf{G}_{2+\delta_{G}}^{2} = 4\mathbf{H}_{4+\delta}^{4}\).
    Using \eqref{eq:nonpar-Hp-rate} and \(\tau_{2}^{m} \lesssim n\eta_{m}\),
    \(\tau_{2}^{m}\mathbf{G}_{2}^{2} \lesssim n\eta_{m} B^{\xi-2} = o(n^{2})\),
    because \(B^{-1} = O(n^{\alpha})\) with \(\alpha < (2-\xi)^{-1}\). Moreover, for the same choice of \(m = C\log n\) as in Claim (b),
    \(\mathbf{G}_{2+\delta_{G}}^{2}\beta(1,3,m)^{\frac{\delta_{G}}{2+\delta_{G}}} = 4\mathbf{H}_{4+\delta}^{4}\beta(1,3,m)^{\frac{\delta}{4+\delta}} = o(1)\).
    Hence the assumptions of \autoref{thm:lln for non-degenerate} are satisfied for the kernel \(G_{n}\), and
    \begin{equation}\label{eq:oracle-variance-estimator}
        \bar{s}_{n}^{2} = \sum_{i\neq k} \theta_{ik}^{G} + o_{p}(n^{2}) = 2\sigma_{n}^{2} + o_{p}(n^{2}).
    \end{equation}
    Combining \eqref{eq:oracle-variance-estimator} with \eqref{eq:oracle-variance-order}, we obtain
    \begin{equation}\label{eq:oracle-variance-consistency}
        \frac{\bar{s}_{n}^{2}}{s_{n}^{2}} \to_{p} 1.
    \end{equation}

    It remains to compare \(\hat{s}_{n}^{2}\) and \(\bar{s}_{n}^{2}\). Let \(D_{i} = \hat{u}_{i}^{2} - u_{i}^{2}\). Then \(D_{i} = -2u_{i}\delta_{i} + \delta_{i}^{2}\), and \(\hat{s}_{n}^{2} - \bar{s}_{n}^{2} = 2B^{-1}\sum_{i\neq j} \bqty{D_{i}u_{j}^{2} + u_{i}^{2}D_{j} + D_{i}D_{j}} K_{ij}^{2}\).
    Each \(D_{i}\) is bounded by a constant multiple of
    \(\norm{\delta_\gamma}\abs{u_{i}}M_{g}(Z_{i}) + \norm{\delta_\gamma}^{2}M_{g}(Z_{i})^{2}\).
    Hence every term in \(\hat{s}_{n}^{2} - \bar{s}_{n}^{2}\) is dominated by a constant multiple of either
    \[
        \norm{\delta_\gamma}
        B^{-1}\sum_{i\neq j} \abs{u_{i}}^{a}\abs{u_{j}}^{b}M_{g}(Z_{i})^{c}M_{g}(Z_{j})^{d}K_{ij}^{2},
        \qquad a+b+c+d\leq 4,
    \]
    or the same quantity multiplied by an additional factor \(\norm{\delta_\gamma}\). The expectation of the displayed sum is \(O(n^{2})\), because \(\Ep K_{ij}^{2} = O(B)\) and the prefactor is \(B^{-1}\). Since \(\norm{\delta_\gamma} = O_{p}(n^{-1/2})\), all correction terms are \(o_{p}(n^{2})\). Therefore \(\hat{s}_{n}^{2} - \bar{s}_{n}^{2} = o_{p}(n^{2}).\)
    Together with \eqref{eq:oracle-variance-consistency}, we conclude that
    \begin{equation}\label{eq:nonpar-feasible-variance-consistency}
        \frac{\hat{s}_{n}^{2}}{s_{n}^{2}} \to_{p} 1.
    \end{equation}
    and \(\mathbf T_{n} = \frac{I_{n}}{\hat s_{n}} \rightsquigarrow N(0,1).\)
\end{proof}

\section{Additional Details and Robustness Checks for the Empirical Application}\label{sec:empirical_appendix}

This appendix records the sample construction, residual-dependence diagnostics, and supplementary robustness checks for the Dublin Airbnb application.

\subsection{OLS Regression Setup}

The analysis uses the Dublin listings snapshot on Inside Airbnb scraped on September 16, 2025. Restricting attention to \textit{Entire home/apt} listings with positive nightly prices leaves 2,958 observations. Dropping rows with missing controls yields the final regression sample of 2,494 listings.

The dependent variable is \(Y_i=\log(\text{price}_i)\). The baseline scalar location index is \(Z_i=\log(d_i+0.25)\), where \(d_i\) is straight-line distance in kilometers from the Spire of Dublin at \((53.3498,-6.2603)\). Count-like controls with substantial right skewness, namely accommodates, minimum nights, number of reviews, and host listings count, enter as log terms. Binary indicators and bounded rates remain in levels. The baseline regression includes property-type fixed effects.
Table~\ref{tab:dublin_appendix_variables} reports summary statistics for the variables used in the baseline regression.

\begin{table}[htbp]
    \centering
    \caption{Regression Variables and Summary Statistics}
    \label{tab:dublin_appendix_variables}
    \footnotesize
    \setlength{\tabcolsep}{3pt}
    \begin{tabular}{p{0.23\linewidth}p{0.27\linewidth}rrrrr}
        \hline
        Variable & Coding & $N$ & Mean & Std. dev. & Min & Max \\
        \hline
        Log price & \(\log(\text{price})\) & 2494 & 5.357 & 0.598 & 3.892 & 9.860 \\
        Log-distance to city center & \(\log(d+0.25)\), \(d\) in km from the Spire & 2494 & 1.206 & 0.957 & -1.301 & 3.414 \\
        Accommodates (log) & \(\log(1+\text{accommodates})\) & 2494 & 1.578 & 0.384 & 0.693 & 2.833 \\
        Bedrooms & Parsed bedroom count & 2494 & 1.959 & 1.195 & 0.000 & 20.000 \\
        Bathrooms & Parsed bathroom count & 2494 & 1.513 & 0.927 & 0.000 & 20.000 \\
        Beds & Parsed bed count & 2494 & 2.436 & 1.748 & 0.000 & 21.000 \\
        Minimum nights (log) & \(\log(1+\text{minimum nights})\) & 2494 & 1.442 & 0.736 & 0.693 & 5.903 \\
        Superhost & Indicator & 2494 & 0.343 & 0.475 & 0.000 & 1.000 \\
        Identity verified & Indicator & 2494 & 0.972 & 0.166 & 0.000 & 1.000 \\
        Instant bookable & Indicator & 2494 & 0.231 & 0.421 & 0.000 & 1.000 \\
        Response rate & Host response rate (\%) & 2494 & 92.582 & 21.159 & 0.000 & 100.000 \\
        Acceptance rate & Host acceptance rate (\%) & 2494 & 80.977 & 27.315 & 0.000 & 100.000 \\
        Number of reviews (log) & \(\log(1+\text{reviews})\) & 2494 & 2.788 & 1.780 & 0.000 & 7.539 \\
        Reviews per month & Parsed reviews per month & 2494 & 1.897 & 2.252 & 0.000 & 32.400 \\
        Rating & Median-imputed review score & 2494 & 4.762 & 0.332 & 1.000 & 5.000 \\
        Rating missing & Missing-rating indicator & 2494 & 0.140 & 0.347 & 0.000 & 1.000 \\
        Listings count (log) & \(\log(1+\text{host listings})\) & 2494 & 1.506 & 1.171 & 0.693 & 4.984 \\
        Property-type fixed effects & 15 most common types, plus \textit{Other}; one omitted & -- & -- & -- & -- & -- \\
        \hline
    \end{tabular}
    \vspace{0.2em}

    \parbox{0.92\linewidth}{\footnotesize Notes: Summary statistics are computed on the complete-case regression sample used in the baseline Dublin estimation. The variable labels match Table~\ref{tab:dublin_main}. For indicator variables, the mean is the sample share equal to one. The underlying construction follows \textit{empirical/airbnb\_data.py}.}
\end{table}

\subsection{Residual Dependence Diagnostics}

Figure~\ref{fig:dublin_appendix_residual_dependence_plot} summarizes residual dependence by geographic distance using equal-width 0.25 km bins. The orange bars report the distribution of unordered listing pairs over distance, while the blue curve reports a Gaussian-kernel estimate of the mean residual cross-product \(E[\hat u_i \hat u_j \mid d_{ij}]\). The pattern is consistent with local spatial dependence: the average residual cross-product is \(0.005491\) for pairs within 1 km, but falls to \(-0.000669\) for pairs more than 10 km apart. This short-range dependence motivates the use of dependence-robust procedures based on neighbourhood blocks, grid blocks, and spatial multiplier scales.

\begin{figure}[htbp]
    \centering
    \includegraphics[width=0.7\linewidth]{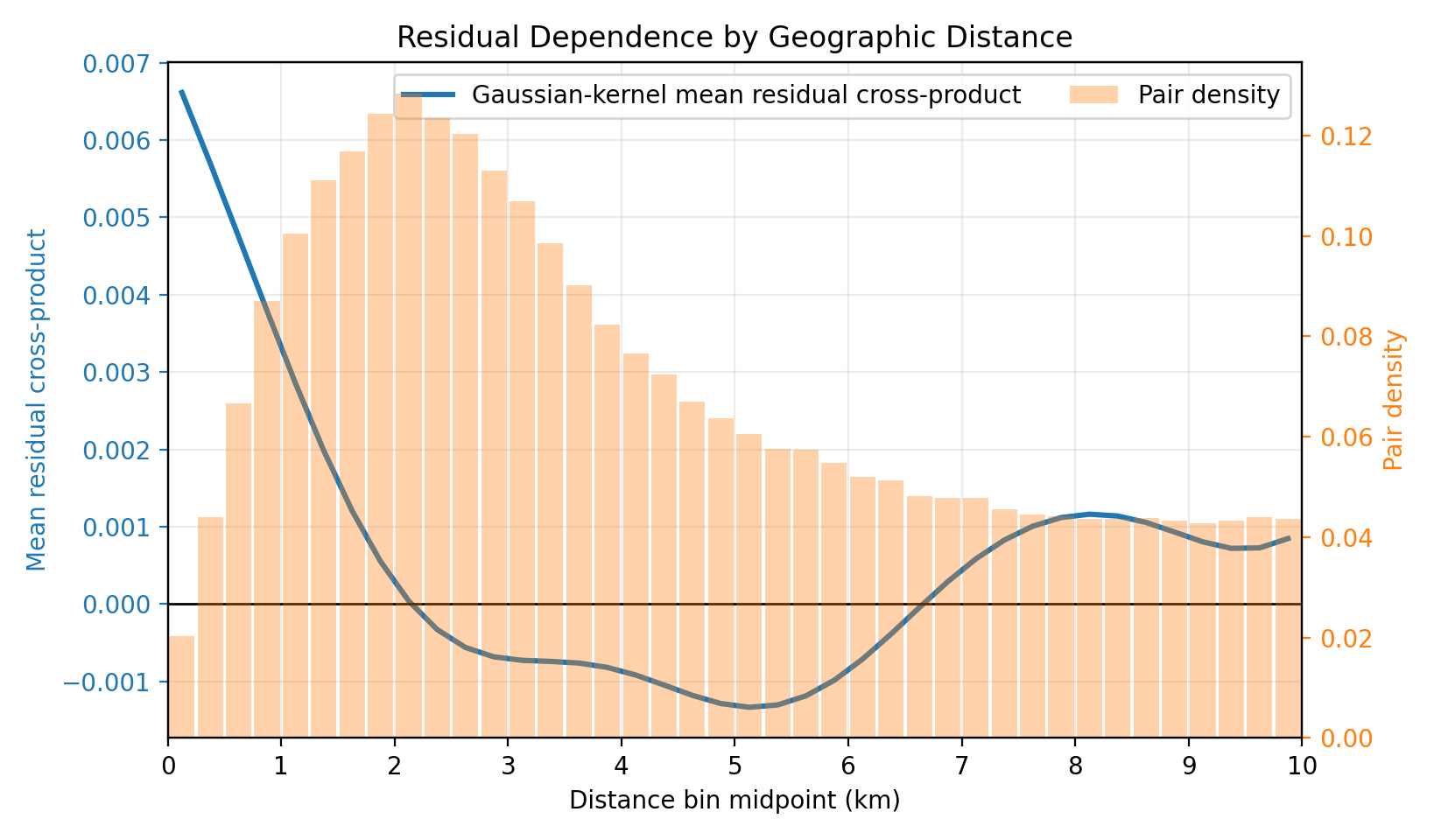}
    \caption{Pair Density and Smoothed Mean Residual Cross-Products by Distance}
    \label{fig:dublin_appendix_residual_dependence_plot}
    \vspace{0.2em}
\end{figure}

\subsection{Additional Robustness Checks}

Table~\ref{tab:dublin_appendix_robustness} reports supplementary checks of the effects of using bandwidth from cross validation, truncating the extreme values in the predictors, and when we do not add a constant to the distance log-transformation:
\begin{itemize}
    \item \textit{Grouped spatial cross-validation.} To assess whether the benchmark bandwidth is driving the non-rejection, listings are projected into local planar coordinates, partitioned into 2 km grid cells, and assigned to five folds, which yields 163 spatial groups in the Dublin sample. For each candidate bandwidth on a restricted grid around the rule-of-thumb pilot, I partial out the non-distance controls on the training folds only, fit a local-linear scalar-\(Z\) smoother to the resulting partial residuals, and evaluate out-of-fold mean squared prediction error on the held-out folds. The reported specification uses the minimum-MSE choice.
    \item \textit{Trimming in predictor for extreme values.} The trimming exercise drops the bottom and top 1\% of the baseline log-distance \(Z_i=\log(d_i+0.25)\).
    \item \textit{Alternative distance transform.} We replaces \(Z_i=\log(d_i+0.25)\) with \(Z_i=\log d_i\) when constructing the log-distance.
\end{itemize}

Taken together, these results support the same qualitative conclusion as the main text. Across different specifications, the specification test for linearity remains non-rejecting.

\begin{table}[htbp]
    \centering
    \caption{Supplementary Robustness Checks}
    \label{tab:dublin_appendix_robustness}
    \small
    \begin{tabular}{lccccc}
        \hline
        Specification & Sample size & Bandwidth & Simple $T$ & Asymptotic $p$ & 0.5 km grid $p$ \\
        \hline
        Baseline rule-of-thumb & 2494 & 0.2121 & 0.855 & 0.196 & 0.552 \\
        Grouped spatial CV & 2494 & 0.1962 & 1.019 & 0.154 & 0.530 \\
        1\% / 99\% trim in \(Z\) & 2445 & 0.2046 & 1.037 & 0.150 & 0.520 \\
        No-offset log distance & 2494 & 0.2393 & 0.688 & 0.246 & 0.604 \\
        \hline
    \end{tabular}
    \vspace{0.2em}

    \parbox{0.92\linewidth}{\footnotesize Notes: The grouped-spatial-CV bandwidth is selected by five-fold cross-validation over 163 projected 2 km grid groups. The validation criterion is out-of-fold mean squared prediction error from a local-linear scalar-\(Z\) smoother fitted to residuals after partialling out the non-distance controls, and the reported choice uses the minimum-CV-MSE rule on a theory-constrained candidate set. The trimming specification drops the bottom and top 1\% of the baseline index \(Z_i=\log(d_i+0.25)\). The no-offset specification uses \(Z_i=\log d_i\). All dependence-robust \(p\)-values in this table are based on the studentized 0.5 km grid block bootstrap.}
\end{table}

\section{Extension to allow for random distance}\label{s2: conditional berbee}
There are recently growing interests in allowing for random distance in the literature, see \cite{kojevnikov2020LimitTheorems}, where they allow for the distance to depend on a randomly generated network \(\mathcal{G}\). Our results can be similarly extended to allow for conditioning on the random distance. Recall that our random elements are defined on a probability space \(\pqty{\Omega, \mathcal{F}, \Prob}\).

Let now our distance on \(\mathcal{I}_{n}\) be \(\dist: \mathcal{I}_{n}\times \mathcal{I}_{n} \times \Omega\), where for \(\Prob\)-almost surely  \(\omega\in \Omega\), \(\dist(\cdot,\cdot,\omega)\) is a distance on \(\mathcal{I}_{n}\) and \(\dist\pqty{i, j, \cdot}\) viewed a random distance function is measurable with respect to some sub-\(\sigma\)-algebra \(\mathcal{H}_{n}\subset \mathcal{F}\). For example, in \cite{kojevnikov2020LimitTheorems}, the distance \(\dist\) is a measurable function of a random network \(\mathcal{G}_{n}\pqty{\omega}\), and \(\mathcal{G}(\omega)\) is measurable with respect to \(\mathcal{H}_{n}\).

\begin{assumption}
    There is a \textit{regular conditional probability} \(\rcd{A}: \Omega\times \mathcal{F} \to [0,1]\) given \(\mathcal{H}\), such that
    \begin{enumerate}
        \item For fixed \(A\), \(\rcd{A}= \Prob\pqty{A \mid \mathcal{H}}\), almost surely;
        \item For \(\Prob\)-almost surely \(\omega\), \(\rcd{\cdot}\) defines a probability measure on \((\Omega,\mathcal{F})\).
    \end{enumerate}
\end{assumption}

Then we can define the conditional \(\beta\)-mixing coefficient for sub-\(\sigma\)-algebras \(\mathcal{F}_{1}, \mathcal{F}_{2}\), see conditional \(\alpha\)-mixing coefficients in \cite{prakasarao2009ConditionalIndependence}.
\begin{definition}
    The conditional \(\beta\)-mixing coefficient for sub-\(\sigma\)-algebras \((\mathcal{F}_{1}, \mathcal{F}_{2})\) is a \(\mathcal{H}\)-measurable random variable \(\beta^{\mathcal{H}}\pqty{\mathcal{F}_{1}, \mathcal{F}_{2}}\), such that, almost surely,
    \begin{equation*}
        \beta^{\mathcal{H}} \pqty{\mathcal{F}_{1},\mathcal{F}_{2}}(\omega) =  \sup_{\mathcal{A}, \mathcal{B}} \frac{1}{2} \sum_{A_{i}\in \mathcal{A}}\sum_{B_{j}\in \mathcal{B}}\abs{\rcd{A_{i}\cap B_{j}} - \rcd{A_{i}} \rcd{B_{j}}},
    \end{equation*}
    where the supremum is taken over all finite partitions \(\mathcal{A} =\cqty{A_{i}: i \leq n_{A}}\) and \(\mathcal{B} = \cqty{B_{j}: j \leq n_{B}}\) of \(\Omega\), such that \( \mathcal{A} \subset \mathcal{F}_{1}\) and \(\mathcal{B}\subset \mathcal{F}_{2}\).
\end{definition}

Assume the following conditional mixing condition holds,
\begin{assumption}\label{asmp:conditional mixing}
    There exist random variables \(\beta_{n}(n_{1},n_{2}, m) \leq 1\) that are \(\mathcal{H}_{n}\)-measurable, such that,
    \begin{equation}
        \sup_{(\mathcal{I}_{1}, \mathcal{I}_{2}) \in \mathcal{P}_{n}(n_{1},n_{2},m)}\beta^{\mathcal{H}_{n}}\pqty{\sigma\pqty{X_{i}: i \in \mathcal{I}_{1}}, \sigma\pqty{X_{i} : i \in \mathcal{I}_{2}}} \leq \beta_{n}(n_{1}, n_{2}, m).
    \end{equation}
    and let \(\beta\pqty{n_{1},n_{2},m} = \sup_{n} \beta_{n}\pqty{n_{1},n_{2},m}\).
\end{assumption}

It is straightforward then to extend our results to be conditional on \(\mathcal{H}\) a sub-\(\sigma\)-algebra of \(\mathcal{F}\), using regular conditional probability, as long as we can extend Berbee's lemma to the conditional case.
\begin{lemma}[{Conditional Berbee's Lemma}]
    Assume \(\pqty{\Omega, \mathcal{F}, \Prob}\) admits a regular conditional probability given \(\mathcal{H}\). Let \(\mathcal{H}, \mathcal{A}\) be sub-\(\sigma\)-algebras of \(\mathcal{F}\), and let \(Y: \pqty{\Omega,\mathcal{F},\Prob} \to \pqty{\mathbb{R}^{d}, \mathcal{B}_{\mathbb{R}^{d}}}\) be a random vector. Possibly after enlarging the probability space, there exists a random vector \(\tilde{Y}\) such that
    \begin{equation*}
        \tilde{Y} \perp\!\!\!\perp \mathcal{A} \mid \mathcal{H}, \qquad \mathcal{L}\pqty{\tilde{Y} \mid \mathcal{H}} = \mathcal{L}\pqty{Y \mid \mathcal{H}},
    \end{equation*}
    and
    \begin{equation*}
        \Prob\pqty{Y \neq \tilde{Y} \mid \mathcal{H}} = \beta^{\mathcal{H}}\pqty{\mathcal{A}, \sigma(Y)} \qquad \text{a.s.}
    \end{equation*}
\end{lemma}
\begin{proof}
    For each Borel set \(B \subset \mathbb{R}^{d}\), define the probability kernels
    \begin{equation*}
        Q(\omega, B) = \Prob\pqty{Y\in B \mid \mathcal{A} \vee \mathcal{H}}(\omega), \qquad \mu(\omega, B) = \Prob\pqty{Y\in B \mid \mathcal{H}}(\omega).
    \end{equation*}
    Since \(\mathbb{R}^{d}\) is Polish, there exist jointly measurable Radon-Nikodym derivatives of \(Q(\omega, \cdot)\) and \(\mu(\omega, \cdot)\) with respect to \(\nu(\omega, \cdot) = Q(\omega, \cdot) + \mu(\omega, \cdot)\). Denote them by
    \begin{equation*}
        q(\omega,y) = \frac{\dd Q(\omega,\cdot)}{\dd \nu(\omega,\cdot)}(y), \qquad m(\omega,y) = \frac{\dd \mu(\omega,\cdot)}{\dd \nu(\omega,\cdot)}(y),
    \end{equation*}
    and set
    \begin{equation*}
        h(\omega,y) = q(\omega,y) \wedge m(\omega,y), \qquad r(\omega) = \int h(\omega,y)\,\nu(\omega,\dd y).
    \end{equation*}
    For each \(\omega\), define a probability measure \(K_{\omega}\) on \(\mathbb{R}^{d}\times \mathbb{R}^{d}\) by
    \begin{align*}
        K_{\omega}(\dd y, \dd z)
        &= h(\omega,y)\,\nu(\omega,\dd y)\,\delta_{y}(\dd z) \\
        &\quad + \indicator{r(\omega)<1}\frac{(q-m)^{+}(\omega,y)(m-q)^{+}(\omega,z)}{1-r(\omega)}\,\nu(\omega,\dd y)\,\nu(\omega,\dd z).
    \end{align*}
    A direct calculation shows that the marginals of \(K_{\omega}\) are \(Q(\omega,\cdot)\) and \(\mu(\omega,\cdot)\), and that
    \begin{equation*}
        K_{\omega}\pqty{\cqty{(y,z): y\neq z}} = 1-r(\omega) = \norm{Q(\omega,\cdot)-\mu(\omega,\cdot)}_{\mathrm{TV}}.
    \end{equation*}
    Thus \(K_{\omega}\) is a measurable maximal coupling of the two conditional laws.

    Because \(\mathbb{R}^{d}\) is Polish, \(K_{\omega}\) admits a measurable disintegration
    \begin{equation*}
        K_{\omega}(\dd y,\dd z) = Q(\omega,\dd y)\,k(\omega,y,\dd z)
    \end{equation*}
    for some probability kernel \(k\). Enlarge the probability space so that it supports an auxiliary \(U\sim \mathrm{Unif}[0,1]\) independent of \(\mathcal{F}\). By the standard randomization theorem for kernels on Polish spaces, there exists a jointly measurable map \(\Phi\) such that, conditional on \((\omega,y)\), \(\Phi(\omega,y,U)\) has law \(k(\omega,y,\cdot)\). Define \(\tilde{Y}(\omega,u) = \Phi(\omega, Y(\omega), u)\) on the enlarged space \(\pqty{\Omega\times [0,1], \mathcal{F}\otimes \mathcal{B}([0,1]), \Prob\otimes \mathrm{Leb}}\). Then the conditional law of \((Y,\tilde{Y})\) given \(\mathcal{A}\vee \mathcal{H}\) is exactly \(K_{\omega}\).

    Therefore, for every Borel set \(B\),
    \begin{equation*}
        \Prob\pqty{\tilde{Y}\in B \mid \mathcal{A}\vee \mathcal{H}} = K_{\omega}(\mathbb{R}^{d}, B) = \mu(\omega,B),
    \end{equation*}
    and since the right-hand side is \(\mathcal{H}\)-measurable, we also have
    \begin{equation*}
        \Prob\pqty{\tilde{Y}\in B \mid \mathcal{H}} = \mu(\omega,B) = \Prob\pqty{Y\in B \mid \mathcal{H}}.
    \end{equation*}
    Hence \(\tilde{Y}\) has the same conditional law as \(Y\) given \(\mathcal{H}\).

    Moreover, for any \(A_{0}\in \mathcal{A}\),
    \begin{align*}
        \Prob\pqty{A_{0}\cap \cqty{\tilde{Y}\in B} \mid \mathcal{H}}
        &= \Ep\bqty{\indicator{A_{0}}\Prob\pqty{\tilde{Y}\in B \mid \mathcal{A}\vee \mathcal{H}} \mid \mathcal{H}} \\
        &= \Ep\bqty{\indicator{A_{0}}\mu(\omega,B) \mid \mathcal{H}} \\
        &= \mu(\omega,B)\Prob\pqty{A_{0}\mid \mathcal{H}} \\
        &= \Prob\pqty{\tilde{Y}\in B\mid \mathcal{H}}\Prob\pqty{A_{0}\mid \mathcal{H}}.
    \end{align*}
    Hence \(\tilde{Y} \perp\!\!\!\perp \mathcal{A} \mid \mathcal{H}\).

    Finally, \(\Prob\pqty{Y\neq \tilde{Y} \mid \mathcal{A}\vee \mathcal{H}} = K_{\omega}\pqty{\cqty{(y,z): y\neq z}} = \norm{Q(\omega,\cdot)-\mu(\omega,\cdot)}_{\mathrm{TV}}.\)
    Taking conditional expectation given \(\mathcal{H}\) yields
    \begin{equation*}
        \Prob\pqty{Y\neq \tilde{Y}\mid \mathcal{H}} = \Ep\bqty{\norm{Q(\omega,\cdot)-\mu(\omega,\cdot)}_{\mathrm{TV}} \mid \mathcal{H}}.
    \end{equation*}
    For \(\Prob\)-almost every \(\omega\), the usual total-variation representation of the \(\beta\)-mixing coefficient applied under the regular conditional probability \(\rcd{\cdot}\) gives
    \begin{equation*}
        \beta^{\mathcal{H}}\pqty{\mathcal{A}, \sigma(Y)}(\omega)
        = \int \norm{Q(\omega',\cdot)-\mu(\omega,\cdot)}_{\mathrm{TV}}\,\lambda^{\mathcal{H}}_{\omega}(\dd \omega')
        = \Ep\bqty{\norm{Q(\omega,\cdot)-\mu(\omega,\cdot)}_{\mathrm{TV}} \mid \mathcal{H}}(\omega).
    \end{equation*}
    Combining the last two displays proves the claim.
\end{proof}

\section{Extension to a mixed regime}\label{s2: mixed regime}
In this appendix we consider the regime in which both the projection and the degenerate parts of the Hoeffding decomposition contribute.
Recall the dependence-adapted Hoeffding decomposition from \eqref{eq:hoeffding-depdendent}:
\begin{equation}\label{eq:ortho-mixed-hoeffding}
    S_n - \Ep S_n = \hat S_n + S_n^*,
\end{equation}
where
\[
    \hat S_n = 2(n-1)\sum_{i\in\mathcal I_n} h_i,
    \qquad
    h_i := \frac{1}{n-1}\sum_{k\neq i}\bigl(\hat H_k(X_i)-\theta_{ik}\bigr),
\]
and
\[
    S_n^* = \sum_{i\in\mathcal I_n}\sum_{k\neq i} R_{ik},
    \qquad
    R_{ik} := H(X_i,X_k)-\hat H_i(X_k)-\hat H_k(X_i)+\theta_{ik}.
\]
We write
\begin{equation}\label{eq:ortho-remainder-kernel}
    R_{ik}(x,z)
    :=
    H(x,z)-\hat H_i(z)-\hat H_k(x)+\theta_{ik},
\end{equation}
so that $R_{ik}=R_{ik}(X_i,X_k)$.

We also write
\(
    a_n^2 := \Var(\hat S_n),
    b_n^2 := \Var(S_n^*),
    c_n := \Cov(\hat S_n,S_n^*),
    v_n^2 := \Var(\hat S_n+S_n^*) = a_n^2+b_n^2+2c_n,
\)
and set
\(
    \ell_i := 2(n-1)h_i.
\)
As in Appendix~A.2, Jensen's inequality gives, for every $p\ge 1$,
\begin{equation}\label{eq:ortho-moment-bounds}
    \|h_i\|_p \lesssim \mathbf H_p,
    \qquad
    \|R_{ik}\|_p \lesssim \mathbf H_p.
\end{equation}

The regime of interest is the \emph{asymptotically orthogonal mixed regime}
\begin{equation}\label{eq:ortho-main-orthogonality}
    c_n = o(a_n b_n).
\end{equation}
We first propose a sufficient condition for \eqref{eq:ortho-main-orthogonality}.
This is the dependence-adapted analogue of orthogonality between the linear and degenerate components in the classical independent theory.


The next lemma is the basic estimate for the cross terms between the projection part and the degenerate part.
This can be used to bound the covariance $c_n$ and to control the mixed terms in the Stein bound.

\begin{lemma}[Cross product bound]\label{lem:mixed-bound}
    Suppose Assumptions~\ref{asmp:no-clustering-point}--\ref{asmp:kernel} hold.
    Let $m\ge 1$ and $\delta>0$.

    \begin{enumerate}
        \item If one of the indices $i,j,k$ is $m$-free in the ordered triple
            $(i,j,k)$, then
            \begin{equation}\label{eq:ortho-mixed-triple-bound}
                |\Ep[h_i R_{jk}]|
                \lesssim
                \mathbf H_{2+\delta}^2\,\beta(1,2,m)^{\delta/(2+\delta)}.
            \end{equation}

        \item Fix $j\in\nb{i}{m}$ and let $(\tilde X_i,\tilde X_j)$ be a copy of
            $(X_i,X_j)$ independent of $\mathcal F_{-ij}^m$. For
            $k\notin\nb{j}{4m}$ define
            \begin{equation}\label{eq:ortho-Psi}
                \Psi_{ij,k}
                :=
                \Ep\Bigl[
                    \tilde h_i\Bigl\{
                        R_{jk}(\tilde X_j,X_k)+R_{kj}(X_k,\tilde X_j)
                    \Bigr\}
                    \Bigm| X_k
                \Bigr],
            \end{equation}
            where
            \[
                \tilde h_i
                :=
                \frac{1}{n-1}\sum_{\ell\neq i}
                \bigl(\hat H_\ell(\tilde X_i)-\theta_{i\ell}\bigr).
            \]
            Then $\Psi_{ij,k}$ is $\mathcal F_{-ij}^m$-measurable and
            \begin{equation}\label{eq:ortho-A3-repl-1}
                \Ep\Bigl|
                \Ep\bigl[h_i(R_{jk}+R_{kj})\mid \mathcal F_{-ij}^m\bigr]
                - \Psi_{ij,k}
                \Bigr|
                \lesssim
                \mathbf H_{2+\delta}^2\,\beta(2,\infty,m)^{\delta/(2+\delta)}.
            \end{equation}

        \item Fix $j\in\nb{i}{m}$ and, for $k\notin\nb{i}{4m}$, define
            \begin{equation}\label{eq:ortho-Psitilde}
                \widetilde\Psi_{ij,k}
                :=
                \Ep\Bigl[
                    R_{ik}(\tilde X_i,X_k)\,\tilde h_j
                    \Bigm| X_k
                \Bigr],
            \end{equation}
            where
            \[
                \tilde h_j
                :=
                \frac{1}{n-1}\sum_{\ell\neq j}
                \bigl(\hat H_\ell(\tilde X_j)-\theta_{j\ell}\bigr).
            \]
            Then $\widetilde\Psi_{ij,k}$ is $\mathcal F_{-ij}^m$-measurable and
            \begin{equation}\label{eq:ortho-A3-repl-2}
                \Ep\Bigl|
                \Ep\bigl[R_{ik}h_j\mid \mathcal F_{-ij}^m\bigr]
                - \widetilde\Psi_{ij,k}
                \Bigr|
                \lesssim
                \mathbf H_{2+\delta}^2\,\beta(2,\infty,m)^{\delta/(2+\delta)}.
            \end{equation}
    \end{enumerate}
\end{lemma}

\begin{proof}
    For part~(i), if $i$ is $m$-free in $(i,j,k)$, couple $X_i$ to an
    independent copy $\tilde X_i$ with
    $\Prob(X_i\neq \tilde X_i)\le \beta(1,2,m)$. Then $\tilde h_i$ is
    independent of $(X_j,X_k)$, has mean zero, and therefore
    $\Ep[\tilde h_i R_{jk}]=0$. Hence,
    \[
        \Ep[h_iR_{jk}]
        =
        \Ep\bigl[(h_i-\tilde h_i)R_{jk}\,\mathbf 1\{X_i\neq \tilde X_i\}\bigr],
    \]
    and H\"older's inequality together with \eqref{eq:ortho-moment-bounds}
    yields \eqref{eq:ortho-mixed-triple-bound}.

    If instead $j$ is $m$-free in $(i,j,k)$, couple $X_j$ to $\tilde X_j$
    independent of $(X_i,X_k)$ with the same bound on the coupling error. By
    \autoref{lemma:aux-remainder-centered},
    $\Ep[R_{jk}(\tilde X_j,X_k)\mid X_k]=0$ a.s., so
    $\Ep[h_iR_{jk}(\tilde X_j,X_k)]=0$. Therefore,
    \[
        \Ep[h_iR_{jk}]
        =
        \Ep\Bigl[
            h_i\bigl\{R_{jk}-R_{jk}(\tilde X_j,X_k)\bigr\}
            \mathbf 1\{X_j\neq \tilde X_j\}
        \Bigr],
    \]
    and the same H\"older bound applies. The case where $k$ is $m$-free is
    identical.

    For part~(ii), let
    \[
        Y_{jk} := R_{jk}+R_{kj},
        \qquad
        \widetilde Y_{jk} := R_{jk}(\tilde X_j,X_k)+R_{kj}(X_k,\tilde X_j).
    \]
    Since $j\in\nb{i}{m}$ and $k\notin\nb{j}{4m}$, we have
    $X_k\in\mathcal F_{-ij}^m$. Conditional on $X_k$, the pair
    $(\tilde h_i,\widetilde Y_{jk})$ is measurable with respect to
    $(\tilde X_i,\tilde X_j,X_k)$, and
    \[
        \Psi_{ij,k} = \Ep[\tilde h_i\widetilde Y_{jk}\mid X_k].
    \]
    On the event $\{(X_i,X_j)=(\tilde X_i,\tilde X_j)\}$ we have
    $h_i\,Y_{jk}=\tilde h_i\,\widetilde Y_{jk}$. Hence, by H\"older's
    inequality and \eqref{eq:ortho-moment-bounds},
    \[
        \Ep\Bigl|
        \Ep[h_iY_{jk}\mid \mathcal F_{-ij}^m]-\Psi_{ij,k}
        \Bigr|
        \lesssim
        \mathbf H_{2+\delta}^2\,\beta(2,\infty,m)^{\delta/(2+\delta)}.
    \]
    This proves \eqref{eq:ortho-A3-repl-1}. Part~(iii) is identical.
\end{proof}

\begin{proposition}[Cross-covariance bound]\label{prop:ortho-cross-cov-small}
    Suppose Assumptions~\ref{asmp:no-clustering-point}--\ref{asmp:kernel} hold.
    Let $m\ge 1$ and $\delta>0$. Let $\tau_3^m$ denote the number of ordered
    triples whose indices lie in a single $m$-connected component, and let
    \(
        \hat\tau_3^m := \tau_{2,1}^m + \tau_{1,1,1}^m
    \)
    denote the number of ordered triples with an $m$-free index. Then
    \begin{equation}\label{eq:ortho-cross-cov-rho}
        \left|\frac{c_n}{a_n b_n}\right|
        \lesssim
        \frac{\tau_3^m}{\nu_n b_n}\,\mathbf H_2^2
        +
        \frac{\hat\tau_3^m}{\nu_n b_n}\,
        \mathbf H_{2+\delta}^2\,\beta(1,2,m)^{\delta/(2+\delta)},
    \end{equation}
    where $a_n = 2(n-1)\nu_n$. Equivalently,
    \begin{equation}\label{eq:ortho-cross-cov-cn}
        |c_n|
        \lesssim
        2(n-1)
        \Bigl[
            \tau_3^m\,\mathbf H_2^2
            +
            \hat\tau_3^m\,\mathbf H_{2+\delta}^2\,
            \beta(1,2,m)^{\delta/(2+\delta)}
        \Bigr].
    \end{equation}
    Consequently, if for some sequence $m=m_n$,
    \begin{equation}\label{eq:ortho-cross-cov-cond}
        \frac{\tau_3^{m_n}}{\nu_n b_n}\,\mathbf H_2^2
        +
        \frac{\hat\tau_3^{m_n}}{\nu_n b_n}\,
        \mathbf H_{2+\delta}^2\,\beta(1,2,m_n)^{\delta/(2+\delta)}
        \longrightarrow 0,
    \end{equation}
    then $c_n=o(a_n b_n)$.
\end{proposition}

\begin{proof}
    Since $\Ep h_i=0$ and $\Ep R_{jk}=0$,
    \begin{equation}\label{eq:ortho-cross-cov-start}
        c_n
        =
        \Cov\!\left(2(n-1)\sum_{i\in\mathcal I_n} h_i,\,\sum_{j\neq k}R_{jk}\right)
        =
        2(n-1)\sum_{i\in\mathcal I_n}\sum_{j\neq k}\Ep[h_iR_{jk}].
    \end{equation}
    If $(i,j,k)\in T_3^m$, then Cauchy--Schwarz and
    \eqref{eq:ortho-moment-bounds} give
    $|\Ep[h_iR_{jk}]|\lesssim \mathbf H_2^2$. Summing over all such triples,
    \begin{equation}\label{eq:ortho-cross-cov-local}
        \sum_{(i,j,k)\in T_3^m}|\Ep[h_iR_{jk}]|
        \lesssim
        \tau_3^m\,\mathbf H_2^2.
    \end{equation}

    If $(i,j,k)$ has an $m$-free index, \autoref{lem:mixed-bound}(i) yields
    \begin{equation}\label{eq:ortho-cross-cov-nonlocal}
        \sum_{(i,j,k)\in T_{2,1}^m\cup T_{1,1,1}^m}|\Ep[h_iR_{jk}]|
        \lesssim
        \hat\tau_3^m\,\mathbf H_{2+\delta}^2\,
        \beta(1,2,m)^{\delta/(2+\delta)}.
    \end{equation}
    Combining \eqref{eq:ortho-cross-cov-start},
    \eqref{eq:ortho-cross-cov-local}, and
    \eqref{eq:ortho-cross-cov-nonlocal} proves
    \eqref{eq:ortho-cross-cov-cn}. Dividing by $a_nb_n=2(n-1)\nu_n b_n$
    gives \eqref{eq:ortho-cross-cov-rho}. The last claim is immediate.
\end{proof}


Fix $m\ge 1$. Define the truncated quadratic row sums by
\[
    r_i^{(m)} := \sum_{k\notin\nb{i}{4m}} R_{ik},
    \qquad
    \bar r_j^{(m)} := \sum_{k\notin\nb{j}{4m}} (R_{jk}+R_{kj}),
\]
and set
\[
    S_n^{*,\circ}(m) := \sum_{i\in\mathcal I_n} r_i^{(m)},
    \qquad
    S_n^{*,\backslash}(m) := S_n^* - S_n^{*,\circ}(m).
\]
We then write
\[
    W_n := \frac{\hat S_n+S_n^*}{v_n},
    \qquad
    W_n^{\circ}(m) := \frac{\hat S_n+S_n^{*,\circ}(m)}{v_n}.
\]
Exactly as in the proof of \autoref{thm:degenerate in detail},
\begin{equation}\label{eq:ortho-A0}
    \dist_{W}\bigl(W_n,W_n^{\circ}(m)\bigr)
    \le
    \frac{1}{v_n}\,\|S_n^{*,\backslash}(m)\|_2
    \lesssim
    \frac{1}{v_n}
    \Bigl[
        \tau_4^{4m}\,\mathbf H_2^2
        +
        \tau_{2,2}^{4m}\,\mathbf H_{2+\delta}^2\,\beta_4(4m)^{\delta/(2+\delta)}
    \Bigr]^{1/2}
    =: A_0(m),
\end{equation}
where $\beta_4(m):=\max\{\beta(q_1,q_2,m):q_1+q_2=4\}$.

The Stein's Coupling method \autoref{lemma:stein} can be appled to $W_n^{\circ}(m)$ with the following definitions. For $i\in\mathcal I_n$, define
\begin{equation}\label{eq:ortho-Gi}
    G_i := \frac{1}{v_n}\bigl(\ell_i+r_i^{(m)}\bigr),
\end{equation}
so that $\sum_i G_i = W_n^{\circ}(m)$. For $j\in\nb{i}{m}$ let
\begin{equation}\label{eq:ortho-Dij}
    D_{ij} := \frac{1}{v_n}\bigl(\ell_j+\bar r_j^{(m)}\bigr),
    \qquad
    D_i := \sum_{j\in\nb{i}{m}} D_{ij},
\end{equation}
and define
\begin{equation}\label{eq:ortho-Dijprime}
    D_{ij}'
    :=
    \frac{1}{v_n}
    \sum_{u\in\nb{i}{m}\cup\nb{j}{m}}
    \bigl(\ell_u+\bar r_u^{(m)}\bigr).
\end{equation}
Then $W_i' := W_n^{\circ}(m)-D_i$ is $\mathcal F_{-i}^m$-measurable and
$W_{ij}' := W_n^{\circ}(m)-D_{ij}'$ is $\mathcal F_{-ij}^m$-measurable. Hence
\autoref{lemma:stein} yields
\begin{equation}\label{eq:ortho-stein}
    \dist_{W}\bigl(W_n^{\circ}(m),Z\bigr)
    \le
    2A_1 + \sqrt{\frac{2}{\pi}}A_2 + \sqrt{\frac{2}{\pi}}A_3 + 2A_4 + A_5,
    \qquad Z\sim N(0,1),
\end{equation}
with $A_1,\dots,A_5$ as in \autoref{lemma:stein}.

The pure projection part are exactly as before with the
normalizing variance $\nu_n$ replaced by $v_n$; the pure degenerate pieces are also as before except with the normalizing variance $b_n$ replaced
by $v_n$. The cross-product terms are new and are controlled by \autoref{lem:mixed-bound}.

\end{document}